\newif\iffullpaper
\DeclareMathOperator*{\argmax}{arg\,max}
\newtheorem{lemma}{Lemma}
\newtheorem{theorem}{Theorem}
\newtheorem{definition}{Definition}
\newtheorem{remark}{Remark}
\begin{document}

\sloppy

\iffullpaper
\title{On the Optimization and Stability of \\ Sectorized Wireless Networks}
\else
\title{Optimizing Sectorized Wireless Networks: \\ Model, Analysis, and Algorithm}
\fi


\author{Panagiotis~Promponas, 
Tingjun~Chen, 
and~Leandros~Tassiulas
\vspace{-\baselineskip}
\thanks{This work was supported in part by NSF grants CNS-2128530, CNS-2128638, CNS-2146838, CNS-2211944, AST-2232458, ECCS-2434131, and by ARO MURI grant W911NF2110325.
A partial and preliminary version of this paper appeared in ACM MobiHoc'23, Oct. 2023~\cite{promponas2023}.}
\thanks{P. Promponas and L. Tassiulas are with the Department of Electrical and Computer Engineering, Yale University, New Haven, CT, USA (email: \{panagiotis.promponas, leandros.tassiulas\}@yale.edu).}
\thanks{T. Chen is with the Department of Electrical and Computer Engineering, Duke University, Durham, NC, USA (email: tingjun.chen@duke.edu).}
}




\setlength{\abovedisplayskip}{6pt}
\setlength{\belowdisplayskip}{6pt}

\maketitle

\begin{abstract}
Future wireless networks need to support the increasing demands for high data rates and improved coverage. One promising solution is sectorization, where an infrastructure node is equipped with multiple sectors employing directional communication. Although the concept of sectorization is not new, it is critical to fully understand the potential of sectorized networks, such as the rate gain achieved when multiple sectors can be simultaneously activated. In this paper, we focus on sectorized wireless networks, where sectorized infrastructure nodes with beam-steering capabilities form a multi-hop mesh network. We present a sectorized node model and characterize the capacity region of these sectorized networks. We define the flow extension ratio and the corresponding sectorization gain, which quantitatively measure the performance gain introduced by node sectorization as a function of the network flow. Our objective is to find the sectorization of each node that achieves the maximum flow extension ratio, and thus the sectorization gain. Towards this goal, we formulate the corresponding optimization problem and develop an efficient distributed algorithm that obtains the node sectorization under a given network flow with an approximation ratio of 2/3.  Additionally, we emphasize the class of Even Homogeneous Sectorizations, which simultaneously enhances the efficiency of dynamic routing schemes with unknown arrival rates and increases network capacity. We further propose that if sectorization can be adapted dynamically over time, either a backpressure-driven or maximum weighted b-matching-based routing approach can be employed, thereby expanding the achievable capacity region while preserving stability under unknown traffic conditions. Through extensive simulations, we evaluate the sectorization gain and the performance of the proposed algorithms in various network scenarios. 
\end{abstract}

\begin{IEEEkeywords}
Sectorized wireless networks, scheduling and routing, optimization
\end{IEEEkeywords}

\newcommand{\littlesum}{\mathop{\textstyle\sum}}
\newcommand{\littleint}{\mathop{\textstyle\int}}

\newcommand{\myRe}[1]{\mathsf{Re}[#1]}
\newcommand{\myIm}[1]{\mathsf{Im}[#1]}

\newcommand{\nodeLoc}[1]{\mathbf{x}({#1})}

\newcommand{\sector}{\upsigma}
\newcommand{\sectorVec}{\upsigma}
\newcommand{\nodeRange}{R}
\newcommand{\nodeBeamwidth}{\theta}

\newcommand{\graph}{G}
\newcommand{\graphNet}{G^{\sector}}
\newcommand{\graphAux}{H^{\sector}}
\newcommand{\graphAuxPlain}{H}

\newcommand{\numNode}{N}
\newcommand{\setNode}{\mathcal{N}}
\newcommand{\link}{\ell}
\newcommand{\numLink}{L}
\newcommand{\setLink}{\mathcal{L}}

\newcommand{\setVertex}{V}
\newcommand{\edge}{e}
\newcommand{\setEdge}{E}

\newcommand{\numSector}{K}
\newcommand{\numSectorVec}{\mathbf{K}}
\newcommand{\setSector}{\Gamma}
\newcommand{\setHomSector}{\mathcal{H}}
\newcommand{\setSectorAxis}{\mathcal{L}}
\newcommand{\nodeFov}[2]{\omega_{#1}^{#2}}
\newcommand{\nodeSector}[2]{\sigma_{#1}^{#2}}
\newcommand{\nodeSectorAxis}[2]{\eta_{#1}^{#2}}

\newcommand{\directLink}[4]{({\nodeSector{#1}{#2}}, {\nodeSector{#3}{#4}})}
\newcommand{\angleTwoLink}[2]{\angle({#1},{#2})}

\newcommand{\arrival}{A}
\newcommand{\arrivalRate}{\upalpha}
\newcommand{\arrivalRateVec}{\bm{\upalpha}}
\newcommand{\arrivalRateMat}{\bm{\upalpha}}
\newcommand{\flow}{f}
\newcommand{\flowVec}{\mathbf{\flow}}
\newcommand{\flowSet}{\mathcal{F}}

\newcommand{\queue}{Q}
\newcommand{\queueVec}{\mathbf{Q}}

\newcommand{\capRegion}{\Lambda}
\newcommand{\convexHull}{\mathrm{Co}}

\newcommand{\bp}{D}
\newcommand{\bpVec}{\mathbf{\bp}}

\newcommand{\schedule}{X}
\newcommand{\scheduleVec}{\mathbf{\schedule}}
\newcommand{\scheduleSet}{\mathcal{\schedule}}
\newcommand{\scheduleLink}[1]{\schedule_{#1}}

\newcommand{\matching}{M}
\newcommand{\matchingVec}{\mathbf{\matching}}
\newcommand{\matchingSet}{\mathcal{\matching}}

\newcommand{\isomorphism}{\mathcal{O}}

\newcommand{\polytope}{\mathcal{P}}
\newcommand{\polytopeP}{\mathcal{P}}
\newcommand{\polytopeQ}{\mathcal{Q}}

\newcommand{\flowExtension}{\uplambda}
\newcommand{\flowExtensionOpt}{\flowExtension}
\newcommand{\flowExtensionQ}{\upmu}
\newcommand{\flowExtensionQOpt}{\flowExtensionQ}

\newcommand{\opt}{(\textbf{Opt})}
\newcommand{\optApprox}{(\textbf{Opt-Approx})}
\newcommand{\optApproxN}{(\textbf{Opt-Approx}-$n$)}

\newcommand{\sectorOpt}{\sector^{\star}}
\newcommand{\sectorApprox}{\widetilde{\sector}}
\newcommand{\sectorApproxDist}{\upnu}
\newcommand{\sectorApproxAlgo}{\uppi}

\newcommand{\algoName}{{\textsc{Sectorize}\textrm{-}\emph{n}}}

\newcommand{\decisionProbN}{(\textbf{Decision}-$n$)}
\newcommand{\decisionProb}{{\textsc{ExistSectorization}\textrm{-}\emph{n}}}
\newcommand{\decisionThreshold}{T}
\newcommand{\decisionThresholdCrit}{T_{n}^{\textrm{crit}}}

\newcommand{\sectorizationGain}{g}
\newcommand{\sectorizationGainP}{g_{\flowExtension}}
\newcommand{\sectorizationGainQ}{g_{\flowExtensionQ}}

\section{Introduction}
\label{sec:intro}

Future wireless networks and systems including 5G/6G need to provide multi-Gbps data rates with guaranteed coverage, leveraging massive antenna systems~\cite{shepard2012argos}, the widely available spectrum at millimeter-wave (mmWave) frequency~\cite{rappaport2013millimeter, saad2020vision}, and network densification~\cite{giordani2020toward}. In addition to deploying more cell sites, the \emph{sectorization} of each cell -- dividing each cell into a number of non-overlapping sectors -- can significantly enhance the cell capacity and coverage by improving the spatial reuse and reducing interference~\cite{Agiwal_Roy_Saxena_2016}.

There are many applications of sectorized networks to wireless access and backhaul networks, in both {sub-6}\thinspace{GHz} and mmWave frequency bands. For example, in a mmWave backhaul network that can provide fiber-like data rates (e.g., the Terragraph {60}\thinspace{GHz} solution~\cite{terragraph_spectrum}), each mmWave node is usually composed of a number of sectors, each of which is equipped with a phased array with beamforming capability~\cite{sadhu201728, gao2024mambas, chen2023open}. In addition, integrated access and backhaul (IAB)~\cite{polese2020integrated} in the mmWave band supporting flexible and sectorized multi-hop backhauling started to be standarized since 3GPP Release 16. Recent efforts also focused on using increased number of sectors per infrastructure node to provide better coverage (e.g., SuperCell~\cite{supercell} supports 36 azimuth sectors per node). Therefore, it is important to study the performance of sectorized networks, especially when each node can simultaneously activate multiple sectors for signal transmission and/or reception.

In this paper, we focus on the \emph{modeling, analysis, and optimization of sectorized wireless networks}, where sectorized nodes form a multi-hop mesh network for data forwarding and routing. In particular, we consider the scenario where a sectorized infrastructure node can simultaneously activate many sectors supporting beam-steering capability, and focus on optimizing the sectorization of each node given the network conditions.
We present the model of a sectorized wireless backhaul network consisting of a number of (fixed) sectorized infrastructure nodes, and describe the link interference model and characterize the capacity region of the these networks.
For a sectorized network, we introduce a latent structure of its connectivity graph, called the auxiliary graph, which captures the underlying structural property of the network as a function of the sectorization of each node. We show that the capacity region of a sectorized network can be described by the matching polytope of its auxiliary graph.

Then, we present the definitions of \emph{flow extension ratio} and the corresponding \emph{sectorization gain} as a function of the network flow. These two metrics quantitatively measure how much the network flow can be extended in a sectorized wireless network, and thus quantifies the performance of the network sectorization.
We formulate an optimization problem with the objective to find the optimal sectorization of the network that maximizes the flow extension ratio (i.e., achieves the highest sectorization gain) under a given network flow. Due to the analytical intractability of the problem, we develop a novel distributed algorithm, {\algoName}, that approximates the optimal sectorization of each node in the network. We also prove that {\algoName} is a $2/3$-approximation algorithm.

\iffullpaper
Moreover, we motivate the sectorization of a network as a medium to embed a useful structural property in the network's ``effective'' topology. In particular, the sectorization choice changes the topology of the auxiliary graph, on which we can run the algorithms to study and operate the (sectorized) network. As a motivating example, we introduce a specific sectorization option, called Even Homogeneous Sectorization, which forces the Auxiliary Graph to be bipartite. Such a sectorization option both
(\emph{i}) simplifies the characterization of the capacity region, and
(\emph{ii}) speeds up the stabilization of the network under the backpressure dynamic policy even compared to the unsectorized network. Surprisingly, with sufficient distributed computing power, the more we sectorize the network, the more we increase its capacity region while we decrease the time needed for its stabilization.
\else
\fi

While most existing research has focused on a single, fixed sectorization, recent technology advances might offer the prospect of dynamic sectorization, where nodes can continuously adjust their sector alignments. We also propose two complementary solutions that capitalize on this flexibility. First, a backpressure-driven heuristic that obtains an instantaneous flow estimate and then re-sectorizes to optimize performance with respect to that flow. Second, a throughput-optimal policy that jointly optimizes routing, scheduling, and sectorization via a maximum weighted b-matching formulation.

Finally, we numerically evaluate the performance of the proposed algorithm through extensive simulations. We consider both an example 7-node network and a large number of random networks with varying numbers of sectors per node, node density, and network flows. The simulation results confirm our analysis and show that the approximate sectorization gain increases sublinearly with respect to the number of sectors per infrastructure node. Moreover, the results show that the dynamic sectorization routing schemes can significantly expand the network’s capacity region and unlock new performance gains not achievable with static sectorization.

To summarize, the main contributions of this paper include:
\begin{enumerate}[topsep=3pt, itemsep=3pt]
\item[(\emph{i})]
A general sectorized multi-hop wireless network model and a comprehensive characterization of its capacity region based on matching polytopes,
\item[(\emph{ii})]
A distributed approximation algorithm that optimizes the sectorization of each node under a given network flow with performance guarantee, 
\item[(\emph{iii})]
A specific sectorization, called Even Homogeneous Sectorization, which simplifies the characterization of the capacity region, speeds up the stabilization of the network under the backpressure dynamic policy and increases the capacity. 
\item[(iv)]
\emph{Dynamic Sectorization Approaches}. We further extend our framework beyond a single fixed sectorization to a dynamic scenario, proposing both an efficient \emph{backpressure-based heuristic} for real-time sector reconfiguration and a \emph{throughput-optimal} solution leveraging maximum weighted b-matching.
\item[(\emph{v})]
Extensive simulations for performance evaluation of the proposed sectorized network model and algorithm.

\end{enumerate}
We also note that the developed sectorized network model and analysis are general and applicable to other networks that share similar structures of the connectivity and auxiliary graphs.

The remainder of the paper is organized as follows: In Section~\ref{sec:related} we discuss related work. Section~\ref{sec:model} introduces the system model, interference assumptions and preliminaries. In Section~\ref{sec:auxiliary}, we define the auxiliary graph and demonstrate how feasible schedules of sectorized networks correspond to its matchings. Section~\ref{sec:flow-extension} presents our sectorization metrics, including flow extension ratio and sectorization gain. Section~\ref{sec:optimization} formulates the sectorization optimization and proposes a distributed approximation algorithm for the case of one-shot optimization decision. We then show in Section~\ref{sec:bipartite} how a special class of sectorization can induce bipartite auxiliary graphs, simplifying the analysis and accelerating stabilization. Section~\ref{sec:routing} discusses extensions to dynamic sectorization, including heuristic and throughput optimal routing approaches. Section~\ref{sec:evaluation} evaluates the proposed algorithms, Section~\ref{sec:future_directions} explores future directions and Section~\ref{sec:conclusion} concludes the paper.
\section{Related Work}
\label{sec:related}

There has been extensive work on characterizing the capacity region of {sub-6}\thinspace{GHz} wireless networks where each node is equipped with a single directional antenna (i.e., \emph{without sectorization}), as well as on developing medium access control (MAC), scheduling, and routing algorithms for these directional networks~\cite{korakis2003mac, Zhang_Xu_Wang_Guizani_2010, Yu_Teng_Bai_Xuan_Jia_2014}. Recently work also focused on mmWave networks where nodes apply beamforming techniques for directional communication, and considered multi-user MIMO and joint transmission~\cite{zhang2018mmchoir}, IAB~\cite{polese2020integrated}, joint scheduling and congestion control~\cite{garcia2015analysis}, and the corresponding scheduling/routing and resource allocation problems in these networks.
For networks \emph{with sectorization}, recent work has considered the design of routing protocols when only a single sector can be activated at any time for each node (e.g.,~\cite{Roy_Saha_Bandyopadhyay_Ueda_Tanaka}).

Most relevant to our work are~\cite{dai2006efficient, gupta2020learning, arribas2019optimizing}. In particular,~\cite{dai2006efficient} focuses on efficient message broadcasting in multi-hop sectorized wireless networks, where each node has a pre-fixed sectorization. In \cite{gupta2020learning} the authors consider the multi-hop link scheduling problem in self-backhauled mmWave cellular networks and applied deep reinforcement learning for minimizing the end-to-end delay. Moreover, \cite{arribas2019optimizing} considers the relay optimization problem between macro and micro base stations in mmWave backhaul networks with at most two-hop path lengths. 
%
In contrast, our work uniquely focuses on
(\emph{i}) characterizing the fundamental capacity of sectorized wireless networks when multiple sectors can be simultaneously activated at each node,
(\emph{ii}) optimizing the node sectorization in these networks under different network flow conditions, and
(\emph{iii}) analyzing the network-level gain introduced by optimizing the sectorization of each node, which has not been considered in prior work. To the best of our knowledge, this paper is \emph{the first thorough study of these topics}.

\section{Model and Preliminaries}
\label{sec:model}



In this section, we present the sectorized network model and the corresponding interference constraints and capacity region.




\begin{figure}[!t]
\centering
\subfloat[]{
\includegraphics[height=1.25in]{./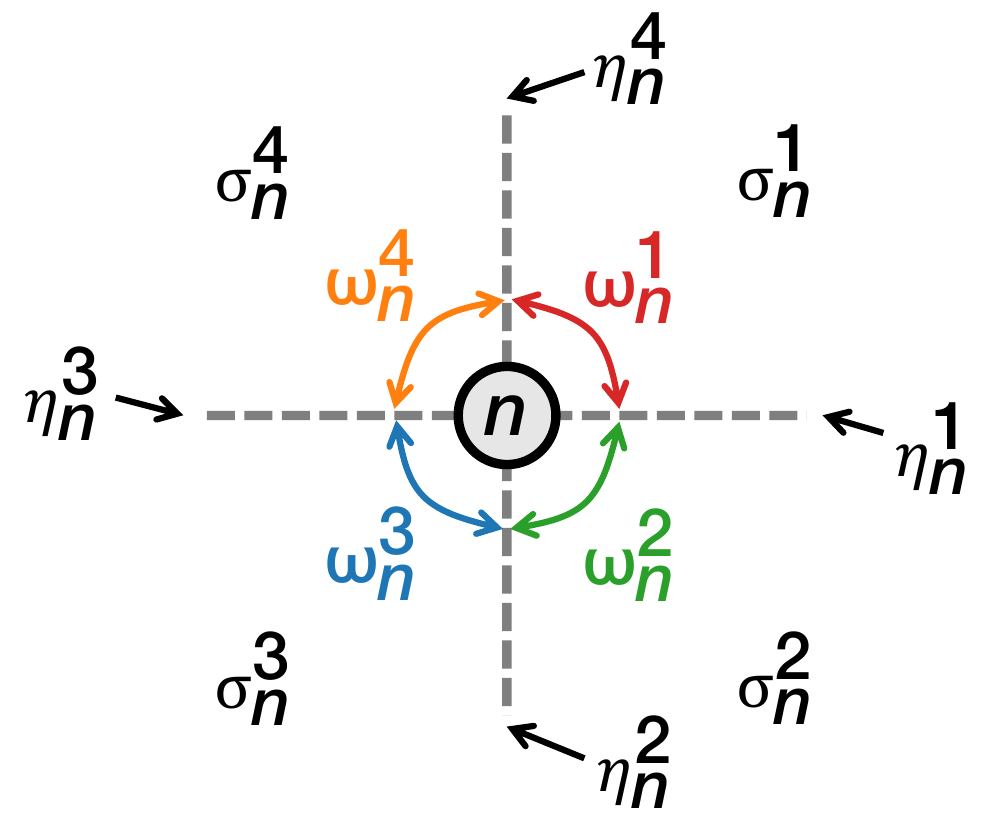}
\label{fig:model-node-a}}
\hspace{12pt}
\subfloat[]{
\includegraphics[height=1.25in]{./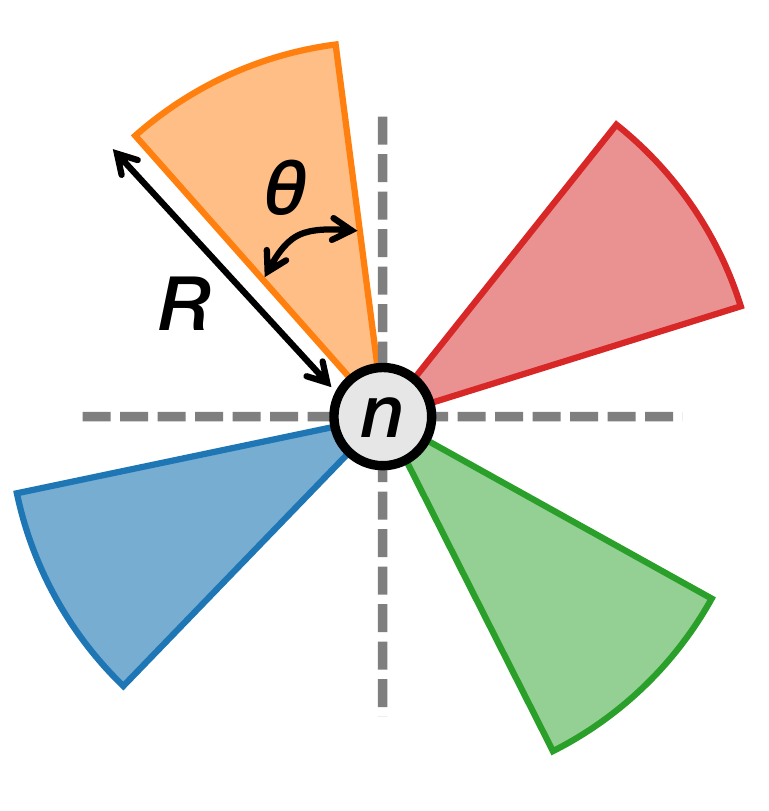}
\label{fig:model-node-b}}
\caption{Sectorized infrastructure node model:
(a) A sectorized node $n$ with $\numSector_{n} = 4$ sectors, $\{\nodeSector{n}{k}\}$, the corresponding field of view (FoV) of each sector, $\{\nodeFov{n}{k}\}$, and the sectoring axes, $\{\nodeSectorAxis{n}{k}\}$.
(b) Each node sector can perform transmit (TX) or receive (RX) beamforming with a range of $\nodeRange$ and main lobe beamwidth of $\nodeBeamwidth$.}
\label{fig:model-node}
\end{figure}

\subsection{Network Model}
\label{ssec:model-network}




We consider a network consisting of $\numNode$ \emph{sectorized} infrastructure nodes. We denote the set of nodes by $\setNode$ and index them by $[n] = \{1, 2, \cdots, \numNode\}$. In particular, let $\sector_{n}$ denote the sectorization of node $n \in \setNode$ equipped with $\numSector_{n}$ sectors. Let $\setSector_{n}(\numSector_{n})$ denote the set of all possible sectorizations for node $n$ with a fixed number of $\numSector_{n}$ sectors. As shown in Fig.~\ref{fig:model-node}\subref{fig:model-node-a}, the $k^{\textrm{th}}$ sector of node $n$, denoted by $\nodeSector{n}{k}$ ($k = 1, \cdots, \numSector_{n}$), has a field of view\footnote{We define the field of view of a sector as the angular coverage in the azimuth plane that this sector can transmit to or receive from.} (FoV) of $\nodeFov{n}{k}$, and the $\numSector_{n}$ sectors of node $n$ combine to cover an FoV of the entire azimuth plane, i.e., $\sum_{k=1}^{\numSector_{n}} \nodeFov{n}{k} = 360^{\circ}$.
For two adjacent sectors $k$ and $(k+1)$ of node $n$ ($k = 1,\cdots,K-1$), we call their boundary \emph{a sectoring axis} and denote it by $\nodeSectorAxis{n}{k}$. We define the sectoring axis between sector $K$ and sector $1$ with $\nodeSectorAxis{n}{K}$.
Let $\sectorVec = (\sector_{1}, \cdots, \sector_{\numNode}) = [\sector_{n}: \forall n \in \setNode]$ denote the \emph{network sectorization}, and $\numSectorVec = (\numSector_{1}, \cdots, \numSector_{\numNode}) = [\numSector_{n}: \forall n \in \setNode]$ be the vector of the number of sectors for all nodes. For a given $\numSectorVec \in \mathbb{Z}_{+}^{\numNode}$, let $\setSector(\numSectorVec)$ be the set of all possible network sectorizations, where node $n$ is equipped with $\numSector_{n}$ sectors.

Each sector of an infrastructure node is equipped with a half-duplex phased array antenna to perform transmit (TX) or receive (RX) beamforming. We adapt the sectored antenna model~\cite{bai2014coverage} to approximate the TX/RX beam pattern that can be formed by each sector, where $\nodeRange$ is the TX/RX range and $\nodeBeamwidth$ is the beamwidth of the main lobe, as depicted in Fig.~\ref{fig:model-node}\subref{fig:model-node-b}. We assume that the beamwidth $\nodeBeamwidth$ is smaller than the sector's FoV, and the TX/RX beam can be \emph{steered} to be pointed in different directions within each sector.



\begin{figure}[!t]
\centering
\subfloat[$\graphNet = (\setNode, \setLink^{\sector})$]{
\includegraphics[height=1.35in]{./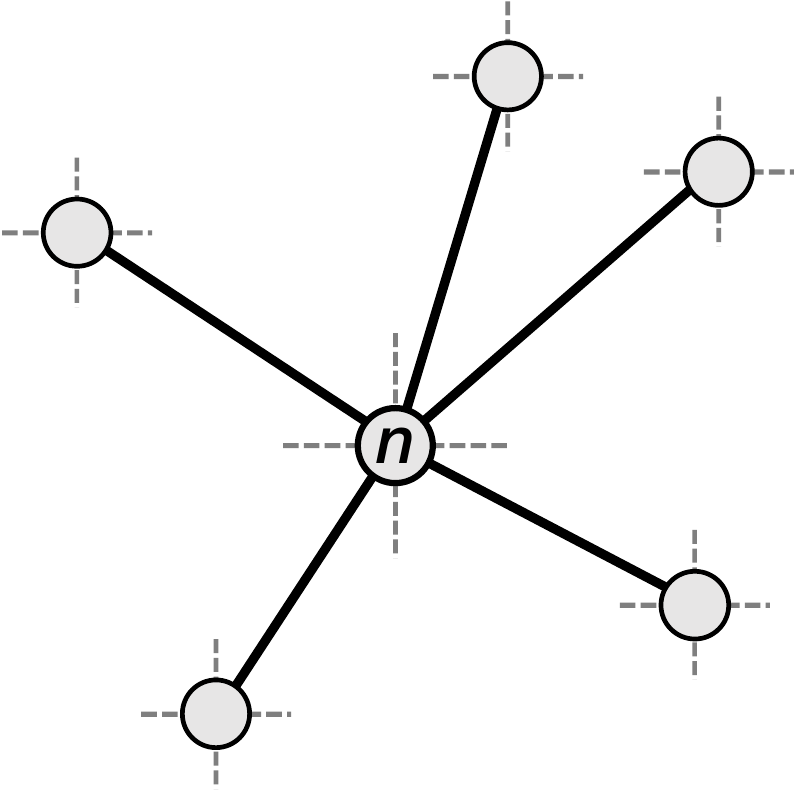}
\label{fig:model-network-a}}
\hspace{12pt}
\subfloat[$\graphAux = (\setVertex^{\sector}, \setEdge^{\sector})$]{
\includegraphics[height=1.35in]{./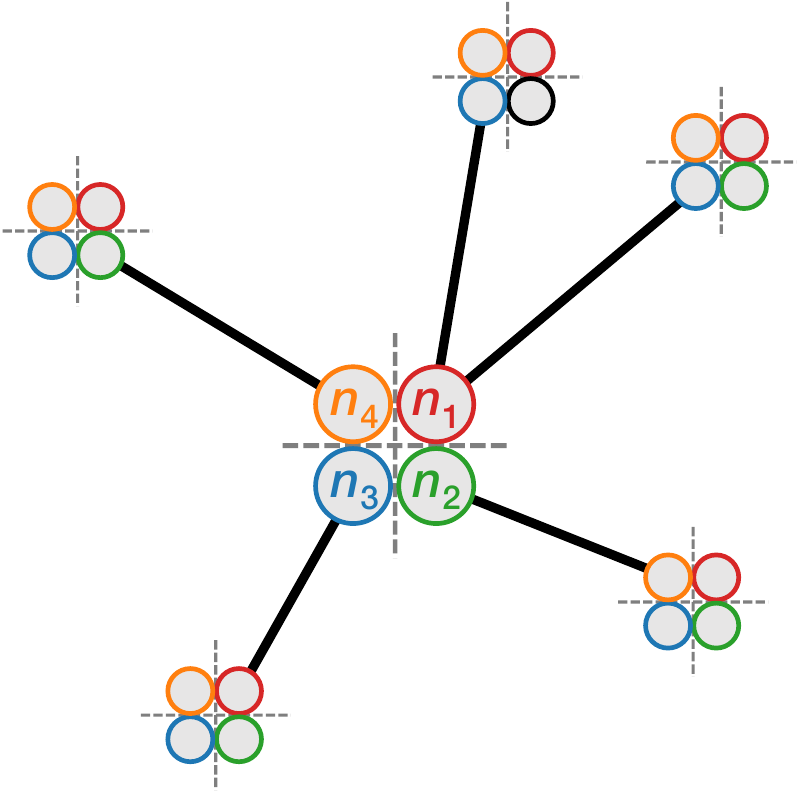}
\label{fig:model-network-b}}
\vspace{-1mm}
\caption{Graph representations of a sectorized network: (a) the connectivity graph of the physical network, $\graphNet = (\setNode, \setLink^{\sector})$, and (b) its corresponding auxiliary graph, $\graphAux = (\setVertex^{\sector}, \setEdge^{\sector})$.}
\label{fig:model-network}
\vspace{-2mm}
\end{figure}


We use a \emph{directed} graph $\graphNet = (\setNode, \setLink)$ to denote the \emph{connectivity graph} of the network under sectorization $\sector \in \setSector (\numSectorVec)$, where $\setNode$ (with $|\setNode| = \numNode$) is the set of nodes and $\setLink$ (with $|\setLink| = \numLink$) is the set of directed links. A directed link $\link$ \emph{from} node $n$ \emph{to} node $n'$, denoted by $\link = (n, n')$, exists if the distance between the two nodes is less than the sum of their communication range, i.e., $|\nodeLoc{n} - \nodeLoc{n'}| \leq 2\nodeRange$, where $\nodeLoc{x}$ denotes the node's location vector (e.g., using the Cartesian coordinate system) and $|\cdot|$ denotes the $L_{2}$-norm. Without loss of generality, we use $\graph$ (without the superscript $\sector$) to denote the connectivity graph of an unsectorized network. 

We also present an equivalent representation of each directed link in $\graphNet$ based on the node sectors. In particular, each $\link \in \setLink$ can also be represented by $\link = \directLink{n}{k}{n'}{k'}$, where the $k^{\textrm{th}}$ sector of node $n$ (i.e., $\nodeSector{n}{k}$) is the \emph{TX sector}, and the $k'^{\textrm{th}}$ sector of node $n'$ (i.e., $\nodeSector{n'}{k'}$) is the \emph{RX sector}. For a link in the form of $\directLink{n}{k}{n'}{k'}$ to be a \emph{feasible} link, it needs to satisfy the following two conditions\footnote{A similar interference model was presented in~\cite{yuan2020optimal}, where beamforming can reduce the interference in mmWave networks. Our framework can also be generalized to other types of networks using their corresponding connectivity graphs.} (see Fig.~\ref{fig:model-link}\subref{fig:model-link-a}):

\begin{itemize}[leftmargin=*, topsep=0pt]
\item
(\textbf{C1}) The distance between the two nodes is less than the sum of their communication range, i.e., $|\nodeLoc{n} - \nodeLoc{n'}| \leq 2\nodeRange$; and
\item
(\textbf{C2}) Node $n$ lies in the FoV $\nodeFov{n'}{k'}$ of node $n'$, and node $n'$ lies in the FoV $\nodeFov{n}{k}$ of node $n$.
\end{itemize}

Since both the TX and RX beams can be steered within $\nodeFov{n}{k}$ of node $n$ and $\nodeFov{n'}{k'}$ of node $n'$, respectively, the above two conditions are sufficient for establishing $\link = \directLink{n}{k}{n'}{k'}$. Moreover, if $\directLink{n}{k}{n'}{k'}$ is a directional link, $\directLink{n'}{k'}{n}{k}$ is also a directional link due to symmetry.
Therefore, the set of feasible directed edges in $\graphNet$ is given by:
\begin{align*}
\setLink
& = \big\{ (n,n'): \forall n,n' \in \setNode, n \ne n', ~\textrm{s.t. (\textbf{C1}) is satisfied} \big\},\ \textrm{or} \\
\setLink^{\sector}
& = \big\{ \directLink{n}{k}{n'}{k'}: \forall n,n' \in \setNode, n \ne n', k \in [\numSector_{n}], k' \in [ \numSector_{n'}], \nonumber \\
& \qquad\qquad\qquad \textrm{s.t. (\textbf{C1}) and (\textbf{C2}) are satisfied} \big\}.
\end{align*}
Although $\setLink$ and $\setLink^{\sector}$ are identical with respect to $\graphNet$, for clarity, we use $\setLink^{\sector}$ with superscript $\sector$ to indicate the directed links represented by node sectors. Moreover, let $\setLink_{n}^{+}, \setLink_{n}^{-} \subseteq \setLink^{\sector}$ denote the set of (directed) outgoing and incoming links with end point of node $n$.


\vspace{0.5ex}
\noindent\textbf{Remark.}
Note that one of the main differences between the sectorized and traditional unsectorized networks is that \emph{each node $n \in \setNode$ can have multiple links being activated at the same time, at most one per sector}. As a result, a number of links in $\setLink^{\sector}$ can share the same end point (node) in $\setNode$ while being activated simultaneously. We use the terms ``node'' and ``link'' in reference to the \emph{connectivity graph}, $\graphNet$, while reserving the terms ``vertex'' and ``edge'' for the \emph{auxiliary graph} of $\graphNet$, which will be presented in Section~\ref{sec:auxiliary}.

\subsection{Interference Model}
\label{ssec:model-interference}

\begin{figure}[!t]
\centering
\subfloat[]{
\includegraphics[height=1.1in]{./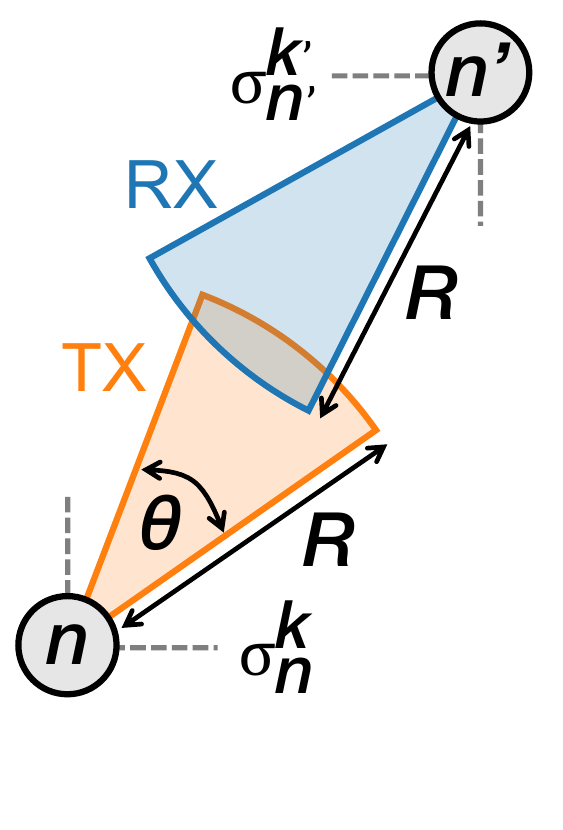}
\label{fig:model-link-a}}
\hfill
\subfloat[]{
\includegraphics[height=1.1in]{./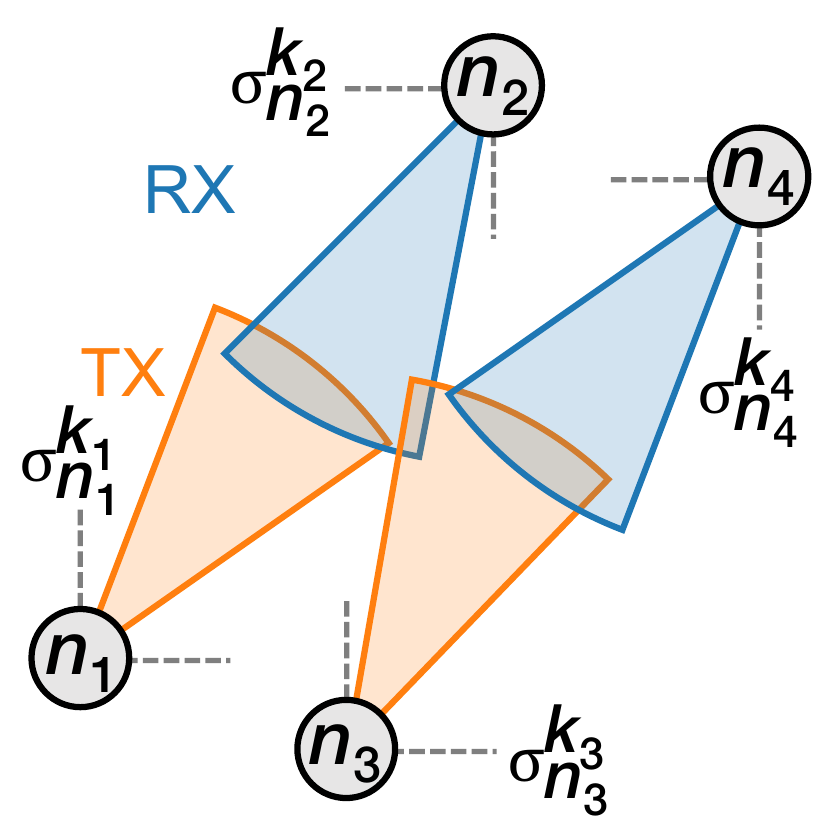}
\label{fig:model-link-b}}
\hfill
\subfloat[]{
\includegraphics[height=1.1in]{./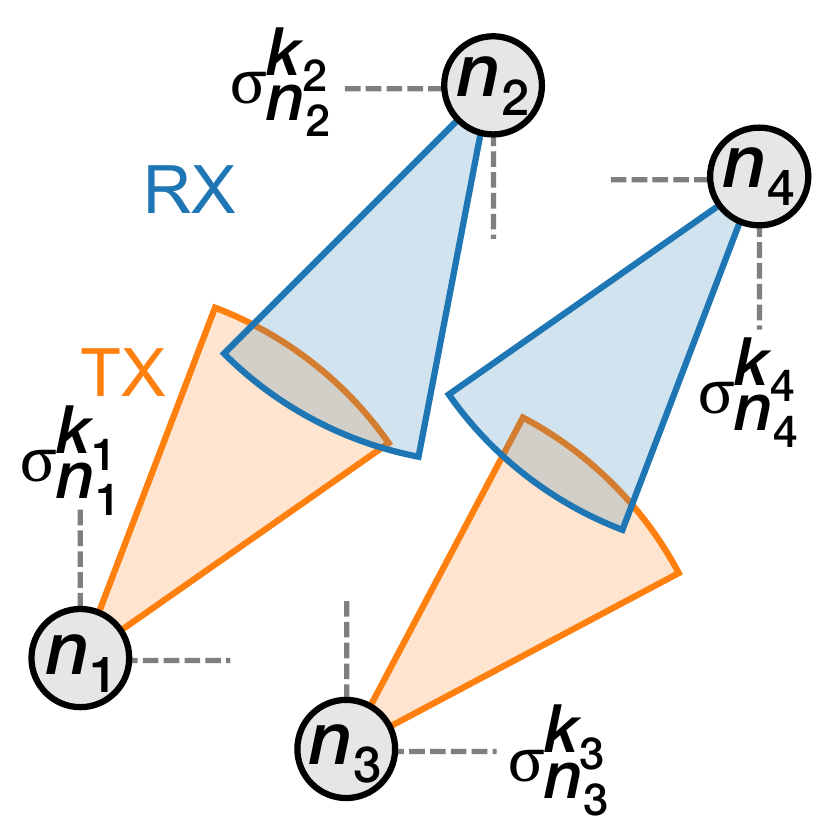}
\label{fig:model-link-c}}
\vspace{-1mm}
\caption{(a) A feasible link $\link = (\nodeSector{n}{k}, \nodeSector{n'}{k'})$,
(b) A secondary interfering TX sector $\nodeSector{n_3}{k_3}$ to RX sector $\nodeSector{n_2}{k_2}$, (c) The secondary TX interference in (b) can be avoided by steering node 3's TX beam in $\nodeSector{n_3}{k_3}$, so that both links $(\nodeSector{n_1}{k_1}, \nodeSector{n_2}{k_2})$ and $(\nodeSector{n_3}{k_3}, \nodeSector{n_4}{k_4})$ can be simultaneously activated.}
\label{fig:model-link}
\vspace{-2mm}
\end{figure}

The link interference model is essential for determining the set of directional links that can be activated simultaneously, or the \emph{feasible schedules}. Below, we describe our link interference model based on the protocol model~\cite{gupta2000capacity} adapted to the considered sectorized networks.

\begin{definition}[Primary Interference Constraints in Sectorized Networks]
\label{def:primary-interference}
The transmission on a feasible link $\directLink{n}{k}{n'}{k'} \in \setLink^{\sector}$ from the $k^{\textrm{th}}$ sector of node $n$ to the $k'^{\textrm{th}}$ sector of node $n'$ is successful if it does not overlap with any other feasible directional link $\directLink{n}{k}{m}{j}$ or $\directLink{m}{j}{n'}{k'}$ that share a TX or RX sector in common with $\directLink{n}{k}{n'}{k'}$. Essentially, at any time, at most one outgoing or incoming link is allowed in each node sector $\nodeSector{n}{k}, \forall n \in \setNode, k \in [ \numSector_{n}]$.
\end{definition}


\begin{definition}[Secondary Interference Constraints in Sectorized Networks]
Consider two feasible directional links $\link_{12} = \directLink{n_1}{k_1}{n_2}{k_2}$ and $\link_{34} = \directLink{n_3}{k_3}{n_4}{k_4}$ between four distinct nodes $n_{i}$ ($i = 1, 2, 3, 4$) with fixed beamforming directions in each TX/RX sector. If the directional link $\link_{32} = \directLink{n_3}{k_3}{n_2}{k_2}$ is also a feasible link and the TX beam in $\nodeSector{n_3}{k_3}$, which is intended to communicate with the RX beam in $\nodeSector{n_4}{k_4}$, overlaps with the RX beam in $\nodeSector{n_2}{k_2}$, then $\nodeSector{n_3}{k_3}$ is a secondary interfering TX sector to the RX sector $\nodeSector{n_2}{k_2}$ (see Fig.~\ref{fig:model-link}\subref{fig:model-link-b}).
\end{definition}





In this paper, we consider only primary interference constraints in sectorized networks, which is a realistic assumption because of two main reasons.
First, note that secondary interference constraints can be avoided using the beam steering capacity of an infrastructure node. For the example depicted in Fig.~\ref{fig:model-link}\subref{fig:model-link-b}, the interference from TX sector $\nodeSector{n_3}{k_3}$ to RX sector $\nodeSector{n_2}{k_2}$ can be avoided by steering the TX beam of node 3 in $\nodeSector{n_3}{k_3}$, so that both links $(\nodeSector{n_1}{k_1}, \nodeSector{n_2}{k_2})$ and $(\nodeSector{n_3}{k_3}, \nodeSector{n_4}{k_4})$ can be activated simultaneously, as shown in Fig.~\ref{fig:model-link}\subref{fig:model-link-c}.
Second, it is intuitive to note that for sufficiently small values of the beamwidth, $\theta$, the network becomes highly directional with ``pencil'' beams. Essentially, there exists a threshold for $\theta$, denoted by $\theta_{\textrm{th}}$, below which the secondary interference constraints can be completely eliminated regardless of the TX/RX beamforming directions at each node.
For each node $n$, let $\theta_{n}^{\min}$ denote the minimum angle between adjacent links with $n$ being an endpoint. Then, we have $\theta_{\textrm{th}} = \min_{n \in \setNode} \{ \theta_{n}^{\min} \}$.
In the example network shown in Fig.~\ref{fig:example-network} (see Section~\ref{ssec:evaluation-example-network} for the detailed setup), $\theta_{\textrm{th}} = 15.8^{\circ}$, which can be achieved by state-of-the-art phased arrays~\cite{sadhu201728}.
Fig.~\ref{fig:thetamin} illustrates the cumulative distribution function (CDF) of $\theta_{\textrm{th}}$ calculated over 1,000 random networks (see Section~\ref{ssec:evaluation-random-network} for the detailed setup) with $\numNode \in \{20,40,60\}$ nodes deployed in a unit square area and when a communication range of $2R = 0.1$. The median value for $\theta_{\textrm{th}}$ is $107.0^{\circ}/6.7^{\circ}/2^{\circ}$ for $N = 20/40/60$, respectively. This illustrates that in certain scenarios, it is reasonable to remove the secondary interference constraint under realistic node density and beamwidth.
Note that a single sector can include multiple links, but at most one link can be activated in any time slot. Under our no-secondary-interference assumption, we assume the beam can be aligned with whichever link is scheduled within that sector.
\subsection{Traffic Model, Schedule, and Queues}
\label{ssec:model-schedule}

\begin{figure}[!t]
\centering
\includegraphics[height=1.5in]{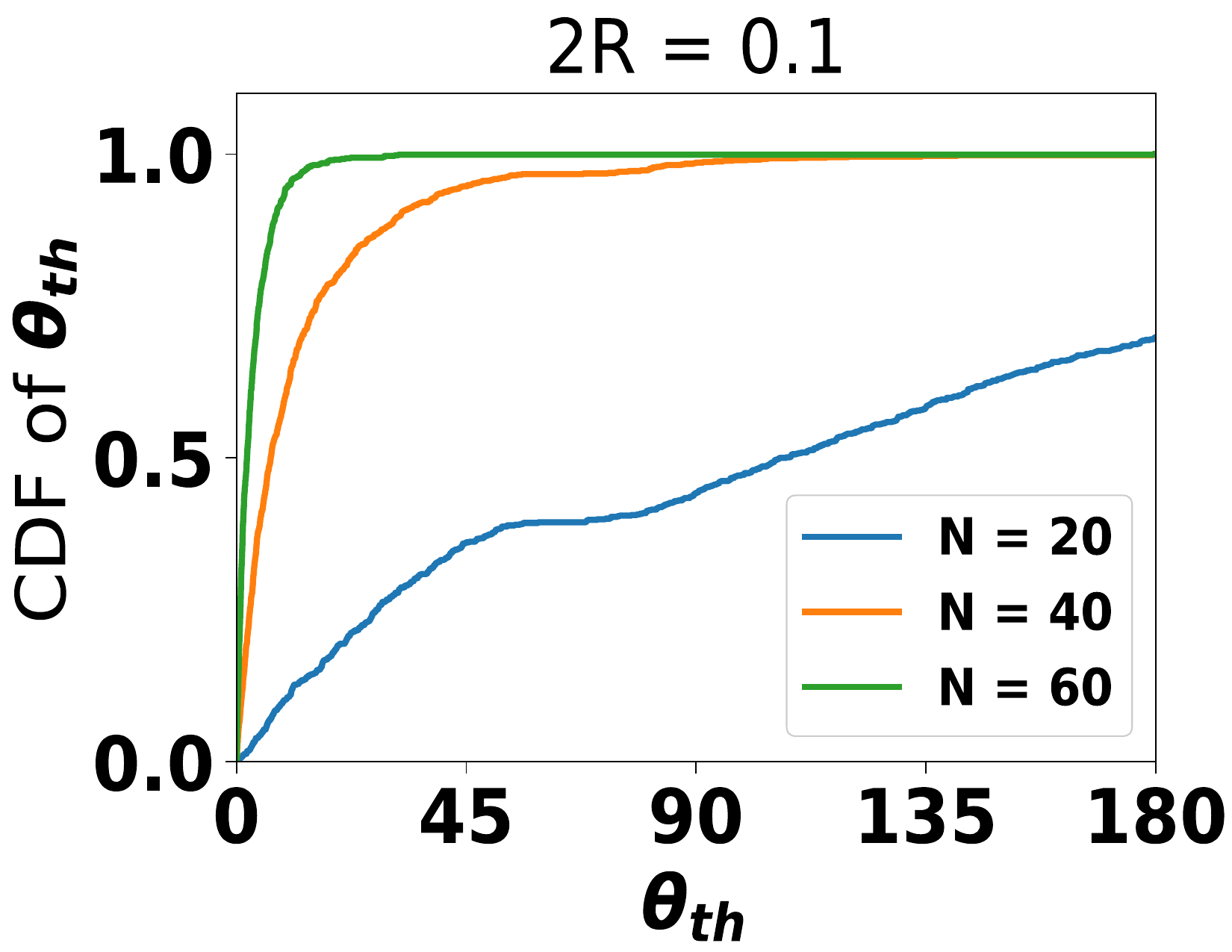}
\vspace{-1mm}
\caption{The CDF of $\theta_{\textrm{th}}$ calculated for 1,000 random networks with $N \in \{20,40,60\}$ and a communication range of $2R = 0.1$.}
\label{fig:thetamin}
\vspace{-2mm}
\end{figure}

We assume that time is slotted and packets arrive at each node according to some stochastic process. For convenience, we classify all packets passing through the network to belong to a particular \emph{commodity}, $c \in \setNode$, which represents the destination node of each packet. Let $\arrival_{n}^{(c)}(t) \leq \arrival_{\textrm{max}} < +\infty$ be the number of commodity-$c$ packets entering the network at node $n$ and destined for node $c$ in slot $t$.
The packet arrival process $\arrival_{n}^{(c)}(t)$ is assumed to have a well-defined long-term rate of $\arrivalRate_{n}^{(c)} = \lim_{T \to +\infty} \frac{1}{T} \sum_{t=1}^{T} \arrival_{n}^{(c)}(t)$. Let $\arrivalRateMat = [\arrivalRate_{n}^{(c)}]$ be the $\numNode \times \numNode$ multi-commodity arrival rate matrix.

All the (directional) links have a capacity of one packet per time slot\footnote{For simplicity, our analysis uses unit-capacity links, but we expect the same framework to extend to heterogeneous capacities by applying each link’s capacity as a multiplicative factor in both the feasible flow region and the max-weight scheduling (cf. \cite[Section 4.3]{georgiadis2006resource}).}. A \emph{schedule} in any time slot $t$ is represented by a vector $\scheduleVec(t) = [\scheduleLink{\link}(t)] \in \{0,1\}^{\numLink}$, in which $\scheduleLink{\link}(t) = 1$ if link $\link$ is scheduled to transmit a packet in time slot $t$ and $\scheduleLink{\link}(t) = 0$ otherwise. We denote the set of feasible schedules in $\graphNet$ by $\scheduleSet_{\graphNet}$. In addition, let $\mu_{\link}^{(c)}(t) = 1$ if link $\link$ serves a commodity-$c$ packet in time $t$ (determined by the scheduling and routing algorithm), and $\mu_{\link}^{(c)}(t) = 0$ otherwise.
We define $\queue_{n}^{(c)}(t)$ as the number of commodity-$c$ packets at node $n$ in time slot $t$, or the \emph{queue backlog}. Choosing a schedule $\scheduleVec(t) \in \scheduleSet_{\graphNet}$ and let $[x]^{+} = \max\{0, x\}$, the queue dynamics are described by:
\begin{align*}
\queue_{n}^{(c)}(t) = \big[ \queue_{n}^{(c)}(t-1) - \littlesum\nolimits_{\link \in \setLink_{n}^{+}} \schedule_{\link}(t) \cdot \mu_{\link}^{(c)}(t) \big]^{+} \\
+ \littlesum\nolimits_{\link \in \setLink_{n}^{-}} \schedule_{\link}(t) \cdot \mu_{\link}^{(c)}(t) + \arrival_{n}^{(c)}(t), \forall t.
\end{align*}
Let $\queueVec(t) = [\queue_{n}^{(c)}(t): n,c \in \setNode]$ denote the queue vector.


\subsection{Capacity Region \& Throughput Optimality}
\label{ssec:model-capacity}


A dynamic scheduling and routing algorithm will determine the schedule $\scheduleVec(t)$ in each time $t$. Let $\flow_{\link}^{(c)}$ denote the long-term rate at which commodity-$c$ packets are served by link $\link$, or the commodity-$c$ \emph{flow} on link $\link$. We define $\flowVec = [\flow_{\link}: \link \in \setLink^{\sector}]$ as the \emph{network flow vector}, where $\flow_{\ell} = \sum_{c \in \setNode} f_{\link}^{(c)}$ is the total flow served by link $\link$.
The capacity region of the considered sectorized network $\graphNet$, denoted by $\capRegion(\graphNet)$, is defined as the set of all arrival rate matrices, $\arrivalRateMat$, for which there exists a multi-commodity network flow vector, $\flowVec$, satisfying the flow conservation equations given by:
\begin{align}
\arrivalRate_{nc} = \littlesum_{\link \in \setLink_{n}^{+}} \flow_{\link}^{(c)} - \littlesum_{\link \in \setLink_{n}^{-}} \flow_{\link}^{(c)},\ &~ \forall n \in \setNode ~\textrm{and}~ n \ne c, \label{eqn:flow-conservation-1} \\
\littlesum_{n \in \setNode} \arrivalRate_{nc} = \littlesum_{\link \in \setLink_{c}^{-}} \flow_{\link}^{(c)},\ &~ \forall c \in \setNode, \label{eqn:flow-conservation-2} \\
\flow_{\link}^{(c)} \geq 0,\ &~ \forall \link \in \setLink^{\sector}, c \in \setNode, \label{eqn:flow-conservation-3} \\
\flow_{\link} = \littlesum_{c \in \setNode} \flow_{\link}^{(c)} \leq 1,\ &~ \forall \link \in \setLink^{\sector}. \label{eqn:flow-conservation-4}
\end{align}
In particular, {\eqref{eqn:flow-conservation-1}}--{\eqref{eqn:flow-conservation-3}} define a \emph{feasible} routing for commodity-$c$ packets, and {\eqref{eqn:flow-conservation-4}} indicates that the total flow on each edge should not exceed its capacity.
Therefore, for $\graphNet = (\setNode, \setLink^{\sector})$, an arrival rate matrix $\arrivalRateMat$ is in the capacity region $\capRegion(\graphNet)$ if there exists a \emph{feasible} multi-commodity flow vector supporting $\arrivalRateMat$ with respect to the network defined by $\graphNet$. As a result,
\begin{align}
\capRegion(\graphNet) = \big\{ ~\arrivalRateMat: \exists \flowVec \in \convexHull(\scheduleSet_{\graphNet}) ~\textrm{s.t. {\eqref{eqn:flow-conservation-1}}--{\eqref{eqn:flow-conservation-4}} are satisfied}~ \big\},
\label{eqn:cap-region}
\end{align}
where $\convexHull(\cdot)$ is the convex hull operator.

A scheduling and routing algorithm is called \emph{throughput-optimal} if it can keep the network queues stable for all arrival rate matrices $\arrivalRateMat \in \textrm{int}(\capRegion(\graphNet))$, where $\textrm{int}(\capRegion(\graphNet))$ denotes the interior of $\capRegion(\graphNet)$.
A well-known throughput-optimal algorithm is the dynamic \emph{backpressure routing algorithm}~ \cite{tassiulas1990stability}, which works as follows. For each link $\link = (n,m) \in \setLink^{\sector}$, we define its \emph{backpressure} in time slot $t$ as $\bp_{\link}(t) = \max_{c \in \setNode} \{ \queue_{n}^{(c)}(t) - \queue_{m}^{(c)}(t) \}$. Let $\bpVec(t) = [\bp_{\link}(t): \forall \link \in \setLink^{\sector}]$. In every time slot $t$, the backpressure algorithm selects $\scheduleVec^{\textrm{BP}}(t)$ as follows:
\begin{align}
\scheduleVec^{\textrm{BP}}(t) \in \arg \max\nolimits_{\scheduleVec \in \scheduleSet_{\graphNet}}\ \{ \bpVec^{\top}(t) \cdot \scheduleVec \},
\label{eqn:backpressure}
\end{align}
together with the commodity to be served on each link $\link$. Given $\arrivalRateMat \in \textrm{int}(\capRegion(\graphNet))$, the backpressure algorithm returns a feasible network flow $\flowVec$ that supports $\arrivalRateMat$ in $\graphNet$. In the rest of the paper, although the target optimization problems may admit multiple optimal solutions, without loss of generality and for notational brevity, we treat them as singletons.

\section{The Auxiliary Graph and Matchings}
\label{sec:auxiliary}


In this section, we introduce the auxiliary graph for a sectorized network, followed by an overview of matching polytopes and the definition of equivalent sectorizations. All of these serve as the foundations for the results presented in the remaining of this paper.

\subsection{The Auxiliary Graph, $\graphAux$}
\label{ssec:auxiliary-graph-construction}

To allow for analytical tractability and support the analysis, we introduce the \emph{auxiliary graph}, $\graphAux = (\setVertex^{\sector}, \setEdge^{\sector})$ of the considered sectorized network, under a given sectorization rule, $\sector \in \setSector(\numSectorVec)$. In particular, $\graphAux$ is generated based on the connectivity graph $\graphNet = (\setNode, \setLink^{\sector})$ as follows:
\begin{itemize}[leftmargin=*, topsep=3pt, itemsep=3pt]
\item
Each node $n \in V$ is duplicated $\numSector_{n}$ times into a set of \emph{vertices}, $\{n_{1}, \cdots, n_{\numSector_{n}}\} \in V^{\sector}$, one for each node sector $\nodeSector{n}{k},\ k \in [\numSector_n]$.
\item
Each directed link $\link = \directLink{n}{k}{n'}{k'} \in \setLink^{\sector}$ between the TX sector $\nodeSector{n}{k}$ and RX sector $\nodeSector{n'}{k'}$ is ``inherited" as a directed \emph{edge} $\edge$ from vertex $n_{k}$ to vertex $n'_{k'}$ in $\setVertex^{\sector}$.
\end{itemize}
Note that $|\setVertex^{\sector}| = \sum_{n \in \setNode} \numSector_{n}$ and $|\setEdge^{\sector}| = |\setLink| = \numLink$, and as we show in the rest of the paper, $\graphAux$ can facilitate the analysis and optimization in sectorized networks.
An illustrative example is shown in Fig.~\ref{fig:model-network}, where each node in $\graphNet$ has an equal number of 4 sectors based on the Cartesian coordinate system. Ntode $n$ in $\graphNet$ is duplicated 4 times to become $n_{i}$ ($i=1,2,3,4$) in $\graphAux$, while the $5$ feasible links in $\setLink^{\sector}$ are ``inherited" from $\graphNet$ to $\graphAux$, whose end points are the duplicated vertices representing the corresponding TX/RX sectors.

\subsection{Feasible Schedules in $\graphNet$ as Matchings in $\graphAux$}
\label{ssec:auxiliary-schedule-matching}

For an unsectorized network $\graph$ (i.e., $\sector = \varnothing$) with $\numSector_{n} = 1, \forall n$, its auxiliary graph $H = (\setVertex, \setEdge)$ is identical to the connectivity graph $\graph = (\setNode, \setLink)$. Therefore, we use $\graph$ and $H$ interchangeably when referring to an unsectorized network. Each feasible schedule $\scheduleVec \in \scheduleSet_{\graph}$ is a \emph{matching} -- a set of links among which no two links share a common node -- in $\graph$ (and thus in $H$).

In a sectorized network $\graphNet = (\setNode, \setLink^{\sector})$, a feasible schedule in $\graphNet$ may not be a matching in $\graphNet$, since up to $\numSector_{n}$ links can share a common node $n$.
However, under the primary interference constraints and with the use of the auxiliary graph, \emph{any} feasible schedule in $\graphNet$ corresponds to a matching in $\graphAux$. Let $\matchingVec \in \{0,1\}^{\numLink}$ denote a matching vector in $\graphAux$, in which every element (edge) is ordered according to the position of the element (link) that it corresponds to in $\scheduleVec$. Let $\matchingSet_{\graphAux}$ be the set of matchings in $\graphAux$.
To rigorously connect a feasible schedule $\scheduleVec \in \scheduleSet_{\graphNet}$ in $\graphNet$ to a matching $\matchingVec \in \matchingSet_{\graphAux}$ in $\graphAux$, we need to take a deeper look into the sets $\setLink^{\sector}$ and $\setEdge^{\sector}$: they not only have the same cardinality of $\numLink$, but are also in fact two \emph{isomorphic} sets. Due to the way $\graphAux$ is constructed, there exists an isomorphism $\isomorphism: \setEdge^{\sector} \rightarrow \setLink^{\sector}$ such that for $\edge = (n_{k}, n'_{k'}) \in \setEdge^{\sector}$ and $\link = \directLink{n}{k}{n'}{k'} \in \setLink^{\sector}$, $\isomorphism(\edge) = \link$.

\subsection{Background on Matching Polytopes}
\label{ssec:auxiliary-background-matching-polytopes}

The \emph{matching polytope} of a (general) graph $\graph = (\setVertex, \setEdge)$, denoted by $\polytope_{\graph}$, is a convex polytope whose corners correspond to the matchings in $\graph$, and can be described using Edmonds' matching polytope theorem~\cite{edmonds1965maximum}.
Specifically, a vector $\mathbf{x} = [x_{\edge}: \edge \in \setEdge] \in \mathbb{R}^{|\setEdge|}$ belongs to $\polytope_{\graph}$ if and only if it satisfies the following conditions~\cite{edmonds1965maximum, kahn1996asymptotics}:
\begin{align}
\textrm{(\textbf{P})} \quad
(i) & \quad x_{\edge} \geq 0,\ \forall \edge \in \setEdge,\ ~\textrm{and}~
\littlesum\nolimits_{\edge \in \delta(v)} x_{\edge} \leq 1,\ \forall v \in \setVertex, \nonumber \\
(ii) & \quad \littlesum\nolimits_{\edge \in E(U)} x_{\edge} \leq \lfloor |U|/2 \rfloor,\ \forall U \subseteq \setVertex ~\textrm{with}~ |U| ~\textrm{odd},
\label{eqn:polytope-P}
\end{align}
where $\delta(v)$ is the set of edges incident to $v$, and $E(U)$ is the set of edges in the subgraph induced by $U \subseteq \graph$. Note that in general, it is challenging to compute $\polytope_{\graph}$, since (\emph{ii}) of {\eqref{eqn:polytope-P}} includes an exponential number of constraints since all sets of vertices $U \subseteq \setVertex$ with an odd cardinality need to be enumerated through.
The \emph{fractional matching polytope} of $\graph$, denoted by $\polytopeQ_{\graph}$, is given by
\begin{align}
\textrm{(\textbf{Q})} \quad
x_{\edge} \geq 0,\ \forall \edge \in \setEdge,\ ~\textrm{and}~
\littlesum\nolimits_{\edge \in \delta(v)} x_{\edge} \leq 1,\ \forall v \in \setVertex.
\label{eqn:polytope-Q}
\end{align}
For a general graph $\graph$, $\polytope_{\graph} \subseteq \polytopeQ_{\graph}$ since (\emph{ii}) in (\textbf{P}) is excluded in (\textbf{Q}). For a bipartite graph $\graph_{bi}$\footnote{A bipartite graph is a graph whose vertices can be divided into two \emph{disjoint} and \emph{independent} sets $U_{1}$ and $U_{2}$ such that every edge connects a vertex in $U_{1}$ to one in $U_{2}$.}, its matching polytope and fractional matching polytope are equivalent, i.e., $\polytope_{\graph_{bi}} = \polytopeQ_{\graph_{bi}}$~\cite{schrijver2003combinatorial}.

Recall that for an unsectorized network $\graph$, its auxiliary graph $H \equiv \graph$ and its matching polytope is the convex hull of the set of matchings, i.e., $\polytope_{H} = \convexHull(\matchingSet_{H})$. Therefore, its capacity region $\capRegion(\graph)$ is determined by $\polytope_{H}$, since $\scheduleSet_{\graph} = \matchingSet_{H}$. For a sectorized network $\graphNet$, since a feasible schedule in $\graphNet$, $\scheduleVec \in \scheduleSet_{\graphNet}$, corresponds to a matching in the auxiliary graph $\graphAux$, $\matchingVec \in \matchingSet_{\graphAux}$ (see Section~\ref{ssec:auxiliary-schedule-matching}), we show in Section~\ref{sec:flow-extension} that its capacity region, $\capRegion(\graphNet)$, can be determined by the matching polytope of $\graphAux$, $\polytope_{\graphAux}$ (Lemma~\ref{lem:schedule-matching}).

\subsection{Equivalent Sectorizations}
\label{ssec:auxiliary-equivalent-sectorizations}

Since our objective is to obtain the optimal sectorization for each node in the network, it is important to understand the structural property of a sectorization, $\sector$. In particular, for a node $n \in \setNode$ with $\numSector_n$ number of sectors, there is only \emph{a finite number of distinct sectorizations} in $\setSector_{n}(\numSector_n)$, due to the equivalent classes of sectorizations.

Consider a node $n \in \setNode$ and two of its undirected adjacent incident links $\link_{1} = (n, u)$ and $\link_{2} = (n, v)$ for some $u, v$ incident to $n$. To define the undirected links from the set $\setLink$, we assume, for example, that the undirected link $\link_{1}$ includes the two directed links $(n,u) \in \setLink_{n}^{+}$ and $(u,n) \in \setLink_{n}^{-}$. Note that, under the primary interference constraints, it does not make a difference where exactly a sectorization axis is put between $\link_{1}$ and $\link_{2}$ of node $n$. What matters is whether a sectorization axis will be placed between them or not. If two sectorizations $\sector_{1} \in \setSector_{n}(\numSector_{n})$ and $\sector_{2} \in \setSector_{n}(\numSector_{n})$ differ only on the exact position of a sectorization axis while both sectorization have a sectoring axis placed between $\link_{1}$ and $\link_{2}$, it holds that $\graphAuxPlain^{\sector_{1}} \equiv \graphAuxPlain^{\sector_{2}}$, and thus $\matchingSet_{\graphAuxPlain^{\sector_{1}}} = \matchingSet_{\graphAuxPlain^{\sector_{2}}}$. Hence, the set $\setSector_{n}(\numSector_{n})$ consists of a finite number of equivalent sectorizations.

\begin{figure}[!t]
\centering
\includegraphics[width=0.9\columnwidth]{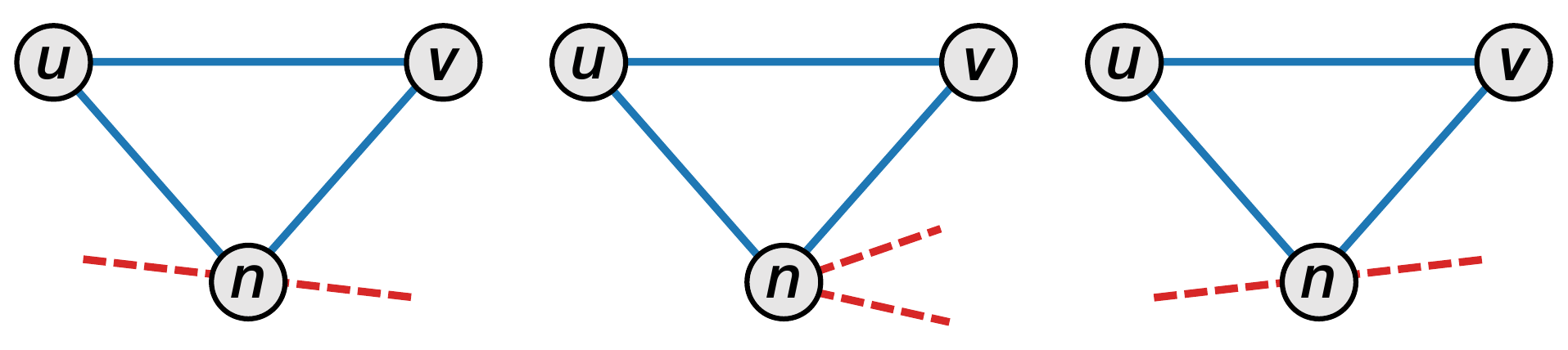}
\vspace{-1mm}
\caption{Three equivalent sectorizations for a node $n \in \setNode$.}
\label{fig:equivalent-sectorization}
\vspace{-2mm}
\end{figure}

Fig.~\ref{fig:equivalent-sectorization} depicts an example of three equivalent sectorizations for node $n$, where both links $\link_{1} = (n,u)$ and $\link_{2} = (n,v)$ are placed within the same sector. 
Note that this notion of equivalent sectorizations makes the optimization over the set $\setSector(\numSectorVec)$, for every fixed $\numSectorVec \in \mathbb{Z}_{+}^{\numNode}$, more approachable. Motivated by the comparison of sectorizations of Fig.~\ref{fig:equivalent-sectorization}, when rotating a sectorization axis of node $n$, an infinite number of equivalent sectorizations can be generated until the axis meets the next incident link. Hence, up to an equivalence relation, the sectorization of a network, $\sector \in \setSector(\numSectorVec)$ is determined by whether a sectoring axis will be placed (instead of where exactly it will be placed) between adjacent undirected links of each node.

For node $n$, since a sectoring axis cannot be placed between the outgoing and incoming links corresponding to the same neighboring node of $n$, we consider undirected links regarding node $n$'s sectorization, $\sector_{n}$. In the rest of the paper, let $\delta(n)$ denote the set of the undirected incident links of $n$, i.e., $\delta(n) = \{ \link_{1}, \cdots, \link_{|\delta(n)|} \}$, ordered in a way such that geometrically neighboring (consecutive) undirected links are next to each other. Note that $\link_{|\delta(w)|}$ and $\link_{1}$ are also adjacent links due to the cyclic nature of $\sector_{n}$. Using this notation, the sectorization of node $n$, $\sector_{n} \in \setSector_{n}(\numSector_n)$, essentially partitions $\delta(n)$ into $\numSector_{n}$ non-overlapping subsets of links. We can also denote $\sector_{n}$ by placing the symbol ``$\vert$" between two adjacent links where a sectoring axis is placed (e.g., $\delta_{n} = \{ \vert (n,u), (n,v)\}$ in Fig.~\ref{fig:equivalent-sectorization}).
\section{Sectorization Gain based on Network Flow Extension Ratio}
\label{sec:flow-extension}

In this section, we characterize the capacity region of sectorized networks and present the definitions of flow extension ratio and the corresponding sectorization gain.
We consider a \emph{general} sectorized network $\graphNet$ under sectorization $\sector \in \setSector(\numSectorVec)$ with given $\numSectorVec \in \mathbb{Z}_{+}^{\numNode}$, and its auxiliary graph $\graphAux$. Let $\polytope_{\graphAux}$ be the matching polytope of $\graphAux$.

\iffullpaper
We first present the following lemma that relates the convex hull of all feasible schedules in $\graphNet$ to $\polytope_{\graphAux}$.
\else
We first present the following lemma that relates the convex hull of all feasible schedules in $\graphNet$ to $\polytope_{\graphAux}$ (for detailed proof see~\cite{TechReport}).
\fi

\begin{lemma}
\label{lem:schedule-matching}
For a sectorized network $\graphNet$ and its auxiliary graph $\graphAux$, let $\scheduleSet_{\graphNet}$ be the set of feasible schedules in $\graphNet$ and $\matchingSet_{\graphAux}$ be the set of matchings in $\graphAux$. Under the primary interference constraints, it holds that $\scheduleSet_{\graphNet} = \matchingSet_{\graphAux}$ and $\convexHull(\scheduleSet_{\graphNet}) = \convexHull(\matchingSet_{\graphAux}) = \polytope_{\graphAux}$.
\end{lemma}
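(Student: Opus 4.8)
The plan is to prove the three asserted equalities in sequence, since essentially all of the content lies in the first one: the second follows trivially, and the third is a restatement of Edmonds' theorem already recorded in Section~\ref{ssec:auxiliary-background-matching-polytopes}.

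First I would establish $\scheduleSet_{\graphNet} = \matchingSet_{\graphAux}$, interpreting this equality through the isomorphism $\isomorphism : \setEdge^{\sector} \to \setLink^{\sector}$, which (by the coordinate ordering fixed in Section~\ref{ssec:auxiliary-schedule-matching}) identifies $\scheduleVec$ and $\matchingVec$ as the same vector in $\{0,1\}^{\numLink}$. The crux is that $\graphAux$ was constructed to place exactly one vertex $n_{k}$ per node sector $\nodeSector{n}{k}$ and to inherit each link $\link = \directLink{n}{k}{n'}{k'}$ as the edge $\edge = (n_{k}, n'_{k'})$. Consequently, the set of edges incident to vertex $n_{k}$ in $\graphAux$ is precisely the image under $\isomorphism^{-1}$ of the links that use sector $\nodeSector{n}{k}$ as their TX or RX sector (the relevant outgoing links in $\setLink_{n}^{+}$ and incoming links in $\setLink_{n}^{-}$). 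I would then check both inclusions: if $\scheduleVec \in \scheduleSet_{\graphNet}$, the primary interference constraint (Definition~\ref{def:primary-interference}) allows at most one active link per sector $\nodeSector{n}{k}$, hence at most one selected edge incident to each vertex $n_{k}$ — exactly the defining property of a matching; conversely, a matching selects at most one edge per vertex, which pulls back to at most one active link per sector, i.e., a feasible schedule. Since both conditions range over the same index set $\{\nodeSector{n}{k} : n \in \setNode,\, k \in [\numSector_{n}]\} \leftrightarrow \setVertex^{\sector}$, the two sets of $0/1$ vectors coincide.

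The second equality $\convexHull(\scheduleSet_{\graphNet}) = \convexHull(\matchingSet_{\graphAux})$ is then immediate, as the generating sets are literally equal under the coordinate-preserving identification via $\isomorphism$. For the third, $\convexHull(\matchingSet_{\graphAux}) = \polytope_{\graphAux}$ is exactly Edmonds' matching polytope theorem applied to the general graph $\graphAux$: the polytope $\polytope_{\graphAux}$ described by (\textbf{P}) in \eqref{eqn:polytope-P} is, by that theorem, the convex hull of the incidence vectors of the matchings of $\graphAux$, precisely as already stated for the unsectorized case ($\polytope_{\graphAuxPlain} = \convexHull(\matchingSet_{\graphAuxPlain})$) in Section~\ref{ssec:auxiliary-background-matching-polytopes}.

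The main — and really the only — obstacle is the first step, and even there the difficulty is bookkeeping rather than conceptual. One must verify carefully that the incidence structure of $\graphAux$ faithfully encodes the ``one-active-link-per-sector'' rule, paying attention to the fact that each link occupies two sectors (one TX, one RX) while each edge touches two vertices, so that the vertex-disjointness condition of a matching simultaneously captures the sector constraints at both endpoints of every active link. Because $\graphAux$ is built precisely to make this correspondence exact, the argument reduces to unwinding the definitions of feasible schedule and matching and confirming that they impose identical constraints vertex by vertex.
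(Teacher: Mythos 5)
Your proposal is correct and follows essentially the same route as the paper's proof: both inclusions of $\scheduleSet_{\graphNet} = \matchingSet_{\graphAux}$ are established by matching the ``at most one active link per sector'' primary-interference rule with the ``at most one edge per vertex'' matching condition via the construction of $\graphAux$, after which the convex-hull equalities follow from Edmonds' theorem as recorded in Section~\ref{ssec:auxiliary-background-matching-polytopes}. Your version is somewhat more explicit about the isomorphism $\isomorphism$ and the final polytope identification, but the argument is the same.
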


\iffullpaper
\begin{proof}
First, we prove that $\scheduleSet_{\graphNet} \subseteq \matchingSet_{\graphAux}$. For every feasible schedule $\scheduleVec$ in  $\scheduleSet_{\graphNet}$, due to the way $\graphAux$ is constructed based on $\graphNet$ (see Section~\ref{ssec:auxiliary-graph-construction}), $\exists \matchingVec \in \matchingSet_{\graphAux}$ such that $\scheduleVec = \matchingVec$.
Next, we prove that $\scheduleSet_{\graphNet} \supseteq \matchingSet_{\graphAux}$. Consider a matching $\matchingVec \in \matchingSet_{\graphAux}$, it allows each node sector $\nodeSector{n}{k}, \forall n \in \setNode, \forall k \in [\numSector_{n}]$ to be connected to at most one direct edge (see Section~\ref{ssec:model-network}). As a result, due to the primary interference constraints, $\matchingVec$ is also a feasible schedule of $\graphNet$.
\end{proof}
\fi

Based on Lemma~\ref{lem:schedule-matching}, the backpressure algorithm that obtains \emph{maximum weight schedule (MWS)} $\scheduleVec^{\textrm{BP}}(t)$ in $\graphNet$ (see Section~\ref{ssec:model-capacity}), which in general is NP-hard, is equivalent to finding the corresponding \emph{maximum weight matching (MWM)} in $\graphAux$, with the backpressure $\mathbf{D}(t)$ being the weights on the edges, i.e.,
\begin{align*}
\scheduleVec^{\textrm{BP}}(t) 
& :=
\arg \max\limits_{\scheduleVec \in \scheduleSet_{\graphNet}}\ \big\{ \bpVec^{\top}(t) \cdot \scheduleVec \big\} \\
& =
\arg \max\limits_{\matchingVec \in \matchingSet_{\graphAux}}\ \big\{ \bpVec^{\top}(t) \cdot \matchingVec \big\}.
\end{align*}
From Edmond's theory~\cite{schrijver2003combinatorial, edmonds1965paths}, the MWM of $\graphAux$ can be obtained via a polynomial-time algorithm with complexity of $O(|\setVertex^\sigma|^2 \cdot |E|)$. 

Intuitively, it is expected that $\graphNet$ under sectorization $\sector$ (with $\graphAux = (\setVertex^{\sector}, \setEdge^{\sector})$) has a capacity region that is at least the same as the unsectorized network $\graph$ (with $H = (\setVertex, \setEdge)$), i.e., any network flow $\flowVec$ that can be maintained in the unsectorized network $\graph$ can always be maintained in $\graphNet$. 
\iffullpaper
This is proved by the following theorem.
\else
This is by the following theorem, whose proof can be found in~\cite{TechReport}.
\fi

\begin{theorem}
\label{thm:capincrease}
$\polytope_{\graph} = \polytope_{H} \subseteq \polytope_{\graphAux},\ \forall \numSectorVec \in \mathbb{Z}_{+}^{\numNode} ~\textrm{and}~ \forall \sector \in \setSector(\numSectorVec)$.
\end{theorem}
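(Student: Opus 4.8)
The plan is to avoid verifying Edmonds' linear description coordinate by coordinate, and instead argue entirely at the level of the matching \emph{sets}, exploiting the fact that both $\polytope_{H}$ and $\polytope_{\graphAux}$ are convex hulls of $0/1$ matching vectors living in the \emph{same} ambient space $\{0,1\}^{\numLink}$. By the edge isomorphism $\isomorphism : \setEdge^{\sector} \to \setLink$ of Section~\ref{ssec:auxiliary-schedule-matching}, each coordinate of a matching vector of $\graphAux$ is identified with the corresponding link of $\graphNet$ (equivalently, of the unsectorized $\graph \equiv H$). Thus a subset of links simultaneously indexes a $0/1$ vector relative to $H$ and relative to $\graphAux$, which is precisely what legitimizes comparing $\matchingSet_{H}$ and $\matchingSet_{\graphAux}$ as subsets of the common space $\{0,1\}^{\numLink}$.

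First I would establish the key set inclusion $\matchingSet_{H} \subseteq \matchingSet_{\graphAux}$. Take any matching $\matchingVec \in \matchingSet_{H}$; since $H \equiv \graph$, this is a set $S \subseteq \setLink$ of links no two of which share a node of $\setNode$. Suppose, toward a contradiction, that the corresponding edge set fails to be a matching of $\graphAux$. Then two distinct edges of it would share a common vertex $n_{k}$ of $\graphAux$. But every edge incident to $n_{k}$ corresponds to a link incident to node $n$ (through sector $k$), so the two links would share the node $n$, contradicting that $S$ is a matching in $H$. Hence every matching of $H$ is also a matching of $\graphAux$.

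Next I would take convex hulls of both sides. Monotonicity of the convex-hull operator gives $\convexHull(\matchingSet_{H}) \subseteq \convexHull(\matchingSet_{\graphAux})$. Invoking the fact recorded in Section~\ref{ssec:auxiliary-background-matching-polytopes} (from Edmonds' theorem) that for any graph the matching polytope equals the convex hull of its matchings, namely $\polytope_{H} = \convexHull(\matchingSet_{H})$ and $\polytope_{\graphAux} = \convexHull(\matchingSet_{\graphAux})$, we conclude $\polytope_{H} \subseteq \polytope_{\graphAux}$. Finally, since $H \equiv \graph$ for the unsectorized network, $\polytope_{\graph} = \polytope_{H}$, so the chain $\polytope_{\graph} = \polytope_{H} \subseteq \polytope_{\graphAux}$ holds for the given $\sector$; as $\numSectorVec \in \mathbb{Z}_{+}^{\numNode}$ and $\sector \in \setSector(\numSectorVec)$ were arbitrary, the claim follows for all of them.

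I expect the main obstacle to lie not in the argument itself but in resisting the tempting alternative of proving the inclusion directly through Edmonds' description \textbf{(P)}. There, the degree constraints of part \emph{(i)} transfer easily, since each sector's incident edge set $\delta(n_{k})$ is a subset of node $n$'s incident set $\delta(n)$ and all coordinates are nonnegative; but the odd-set constraints of part \emph{(ii)} would require controlling an exponential family of inequalities over subsets of the sector-vertices, which is exactly what makes $\polytope_{\graphAux}$ hard to describe explicitly. The convex-hull route sidesteps this difficulty completely, and the only point demanding care is the identification of the two polytopes' ambient spaces via $\isomorphism$.
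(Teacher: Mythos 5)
Your proof is correct, but it takes a genuinely different route from the paper's. The paper argues directly with Edmonds' inequality description $(\textbf{P})$: it takes $\flowVec \in \polytope_{\graph}$ and verifies that it satisfies both constraint families for $\polytope_{\graphAux}$ --- the degree constraints via $\delta(v) \subseteq \delta(n)$, exactly as you anticipate, and, crucially, the odd-set constraints via a case analysis on the parity of the node set $U \subseteq \setNode$ underlying an odd vertex set $U' \subseteq \setVertex^{\sector}$ (when $|U|$ is even, one node is peeled off and its unit degree bound absorbs the extra edges, giving $\sum_{\edge \in E(U')} \flow_{\edge} \leq \frac{|U''|-1}{2} + 1 \leq \frac{|U'|-1}{2}$). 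You instead prove the set inclusion $\matchingSet_{H} \subseteq \matchingSet_{\graphAux}$ --- which is sound, since two edges of $\graphAux$ sharing a sector-vertex $n_{k}$ correspond to two links sharing node $n$ --- and then apply monotonicity of $\convexHull(\cdot)$. Your route is shorter and sidesteps the exponential family of odd-set constraints entirely, but it does so at the cost of invoking the nontrivial direction of Edmonds' matching polytope theorem, namely $\polytope_{H} \subseteq \convexHull(\matchingSet_{H})$, to translate the set inclusion back into an inclusion of the polytopes as defined by $(\textbf{P})$; the paper's computation is self-contained at the level of the linear description and never needs to know that $(\textbf{P})$ is integral. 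Since the paper already relies on Edmonds' theorem elsewhere (Section~\ref{ssec:auxiliary-background-matching-polytopes} and Lemma~\ref{lem:schedule-matching}), your argument is a legitimate and arguably cleaner alternative; the one point you rightly flag as needing care --- identifying the two ambient spaces via the isomorphism $\isomorphism$ --- is handled adequately.
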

\iffullpaper
\begin{proof}
Since $\graph \equiv H$, it holds that $\polytope_{\graph} = \polytope_{H}$. Consider a network flow $\flowVec \in \polytope_{\graph}$, we will show that $\flowVec$ satisfies {\eqref{eqn:polytope-P}} for $\polytope_{\graphAux}$.
First, since $\flow_{\edge} \geq 0, \forall \edge \in \setEdge$, it holds that in $\graphAux$, $\flow_{\edge} \geq 0, \forall \edge \in \setEdge^{\sector}$. For $v \in \setVertex^\sector$ corresponding to node $n \in \setNode$ with $\sector_{n}$, note that $ \sum_{e \in \delta(v)} f_e \leq   \sum_{e \in \delta(n)} f_e \leq 1$. 
Second, we prove that $\sum_{\edge \in E(U')} \flow_{\edge} \leq \lfloor |U'|/2 \rfloor, \forall U' \subseteq \setVertex^{\sector}$ with an odd cardinality $|U'|$. Let $U'$ be an odd subset of vertices of $\setVertex^{\sector}$ and $U \subseteq \setNode$ denote the set of the corresponding nodes in $\setNode$ from which the vertices in $U'$ were deduced from. If $|U|$ is odd, the result easily follows since $\sum_{\edge \in E(U')} \flow_{\edge} \leq \sum_{\edge \in E(U)} \flow_{\edge} \leq \lfloor |U|/2 \rfloor \leq \lfloor |U'|/2 \rfloor$. 
If $|U|$ is even, let a node $n \in \setNode$ such that $n_k \in U'$ for a $k \in [\numSector_n]$. Let $U'' = U \setminus \{n\}$ with an odd cardinality $|U''|$. Note that $\sum_{\edge \in E(U')} \flow_{\edge} \leq \sum_{\edge \in E(U'')} \flow_{\edge} + \sum_{\edge \in \delta(n)} \flow_{\edge} \leq \sum_{\edge \in E(U'')} \flow_{\edge} + 1 \leq \frac{|U''|-1}{2} + 1 \leq \frac{|U'| - 1}{2} $. The last inequality holds since $|U'| \geq |U''| + 1$ and both $|U'|$ and $|U''|$ are odd.
\end{proof}
\fi
Based on {\eqref{eqn:cap-region}} and Lemma~\ref{lem:schedule-matching}, the capacity region of a sectorized network $\graphNet$ can be characterized by the matching polytope of its auxiliary graph, $\polytope_{\graphAux}$. With a given (unknown) $\arrivalRateMat \in \textrm{int}(\capRegion(\graphNet))$ and a sectorization $\sector$, the dynamic backpressure algorithm will converge to and return a network flow vector $\flowVec \in \polytope_{\graphAux}$.
We define the flow extension ratio, denoted by $\flowExtensionOpt^{\sector}(\flowVec)$, which quantitatively measures \emph{how much the network flow $\flowVec$ can be extended in its direction until it intersects the boundary of $\polytope_{\graphAux}$}. 


\begin{definition}[Flow Extension Ratio]
For a sectorized network $\graphNet$ and network flow $\flowVec \in \polytope_{\graphAux}$, the flow extension ratio, $\flowExtensionOpt^{\sector}(\flowVec)$, is the maximum scalar such that $\flowExtensionOpt^{\sector}(\flowVec) \cdot \flowVec \in \polytope_{\graphAux}$ still holds.
\end{definition}

Let $\zeta^{\sector}(\flowVec) := \min_{U \subseteq \setVertex^{\sector}, |U| ~\textrm{odd}} \left( \frac{\lfloor |U|/2 \rfloor}{\sum_{\edge \in E(U)} \flow_{\edge}} \right)$ (see {\eqref{eqn:polytope-P}}). 
With a little abuse of notation, for node $n$, network flow $\flowVec$, and an undirected edge $\edge \in \delta(n)$ with directed components $\edge_{+} \in \setLink_{n}^{+}$ and $\edge_{-} \in \setLink_{n}^{-}$, we let $\flow_{\edge} = \flow_{\edge_{+}} + \flow_{\edge_{-}}$. The flow extension ratio, $\flowExtensionOpt^{\sector}(\flowVec)$, can be obtained by the following optimization problem:
\begin{align}
\flowExtensionOpt^{\sector}(\flowVec)
& := \max_{\flowExtension \in \mathbb{R}^{+}}~ \flowExtension,\ ~\textrm{s.t.:}~ \flowExtension \cdot \flowVec \in \polytope_{\graphAux} \nonumber \\
& \stackrel{\eqref{eqn:polytope-P}}{=} \min \big\{
\min_{v \in \setVertex^{\sector}} \frac{1}{\sum_{\edge \in \delta(v)} \flow_{\edge}},\
\zeta^{\sector}(\flowVec)
\big\}.
\label{eqn:flow-extension-P}
\end{align}
Similarly, we define the flow extension ratio for the corresponding unsectorized network $\graph$, denoted by $\flowExtensionOpt^{\varnothing}(\flowVec) := \max_{\flowExtension \in \mathbb{R}^{+}}~ \flowExtension,\ ~\textrm{s.t.:}~ \flowExtension \cdot \flowVec \in \polytope_{\graph}$. Note that $\flowExtensionOpt^{\varnothing}(\flowVec)$ depends solely on the topology of $\graph$ and is independent of the sectorization $\sector$.
Next, we define the \emph{approximate flow extension ratio} with respect to the polytope $\polytopeQ_{\graphAux}$.

\begin{definition}[Approximate Flow Extension Ratio]
For a sectorized network $\graphNet$ and a network flow $\flowVec \in \polytope_{\graphAux}$, the approximate flow extension ratio, $\flowExtensionQOpt^{\sector}(\flowVec)$, is the maximum scalar such that $\flowExtensionQOpt^{\sector}(\flowVec) \cdot \flowVec \in \polytopeQ_{\graphAux}$ still holds.
\end{definition}

The following optimization problem determines $\flowExtensionQ^{\sector}(\flowVec)$:
\begin{align}
\flowExtensionQOpt^{\sector}(\flowVec)
& := \max_{\flowExtensionQ \in \mathbb{R}^{+}}~ \flowExtensionQ,\ ~\textrm{s.t.:}~ \flowExtensionQ \cdot \flowVec \in \polytopeQ_{\graphAux} \nonumber \\
& \stackrel{\eqref{eqn:polytope-Q}}{=} \min_{v \in \setVertex^{\sector}} \frac{1}{\sum_{\edge \in \delta(v)} \flow_{\edge}}
= \frac{1}{\max_{v \in \setVertex^{\sector}} \sum_{\edge \in \delta(v)} \flow_{\edge}},
\label{eqn:flow-extension-Q}
\end{align}
We also define the approximate flow extension ratio for the unsectorized network $\graph$, denoted by $\flowExtensionQOpt^{\varnothing}(\flowVec) := \max_{\flowExtension \in \mathbb{R}^{+}}~ \flowExtension,\ ~\textrm{s.t.:}~ \flowExtension \cdot \flowVec \in \polytopeQ_{\graph}$.

To quantitatively evaluate the performance of a sectorization $\sector$, we present the following definition of sectorization gains.
\begin{definition}[Sectorization Gains]
The (explicit) sectorization gain of $\sector$ with a network flow $\flowVec$ is defined as $\sectorizationGainP^{\sector}(\flowVec) := \frac{\flowExtensionOpt^{\sector}(\flowVec)}{\flowExtensionOpt^{\varnothing}(\flowVec)}$ (with respect to $\polytope_{\graphAux}$).
The approximate sectorization gain of $\sector$ with a network flow $\flowVec$ is defined as $\sectorizationGainQ^{\sector}(\flowVec) := \frac{\flowExtensionQOpt^{\sector}(\flowVec)}{\flowExtensionQOpt^{\varnothing}(\flowVec)}$ (with respect to $\polytopeQ_{\graphAux}$).
\end{definition}

The following lemma states the relationships between $\flowExtensionOpt^{\sector}(\flowVec)$ and $\flowExtensionQOpt^{\sector}(\flowVec)$, and between $\sectorizationGainP^{\sector}(\flowVec)$ and $\sectorizationGainQ^{\sector}(\flowVec)$. Parts (i) and (ii) of the Lemma can be found in \cite{hajek1988link} (Corollary of Section III).

\begin{lemma}
\label{lemma:sasaki_approx}
For any sectorization $\sector$ and network flow $\flowVec$, it holds that:
(i) $\frac{2}{3} \cdot \flowExtensionQOpt^{\sector}(\flowVec) \leq \flowExtensionOpt^{\sector}(\flowVec) \leq \flowExtensionQOpt^{\sector}(\flowVec)$,
(ii) $\frac{1}{\flowExtensionQOpt^{\sector}(\flowVec)} \geq  \frac{1}{\flowExtensionOpt^{\sector}(\flowVec)} - \max_{\edge \in \setEdge^{\sector}} \flow_{\edge}$, and
(iii) $\frac{2}{3} \cdot \sectorizationGainQ^{\sector}(\flowVec) \leq \sectorizationGainP^{\sector}(\flowVec) \leq \frac{3}{2} \cdot \sectorizationGainQ^{\sector}(\flowVec)$.
\end{lemma}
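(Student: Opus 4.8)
The plan is to work entirely from the closed-form expressions \eqref{eqn:flow-extension-P} and \eqref{eqn:flow-extension-Q}. Writing $D := \max_{v \in \setVertex^{\sector}} \sum_{\edge \in \delta(v)} \flow_{\edge}$ and $M := \max_{\edge \in \setEdge^{\sector}} \flow_{\edge}$, we have $1/\flowExtensionQOpt^{\sector}(\flowVec) = D$ and $1/\flowExtensionOpt^{\sector}(\flowVec) = \max\{D,\, 1/\zeta^{\sector}(\flowVec)\}$, so all three claims reduce to inequalities relating $D$, $M$, and the odd-set sums $\sum_{\edge \in E(U)} \flow_{\edge}$. I would point out up front that part (iii) is a purely algebraic corollary of part (i): applying the two-sided bound of (i) to both $\sector$ and the unsectorized network ($\varnothing$) and forming the ratio $\sectorizationGainP^{\sector} = \flowExtensionOpt^{\sector}/\flowExtensionOpt^{\varnothing}$ against $\sectorizationGainQ^{\sector} = \flowExtensionQOpt^{\sector}/\flowExtensionQOpt^{\varnothing}$, one gets $\sectorizationGainP^{\sector} \le \flowExtensionQOpt^{\sector}/(\tfrac{2}{3}\flowExtensionQOpt^{\varnothing}) = \tfrac{3}{2}\sectorizationGainQ^{\sector}$ and symmetrically $\sectorizationGainP^{\sector} \ge \tfrac{2}{3}\sectorizationGainQ^{\sector}$. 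Hence the real content lies in (i) and (ii).

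For part (i), the upper bound $\flowExtensionOpt^{\sector} \le \flowExtensionQOpt^{\sector}$ is immediate since $\polytope_{\graphAux} \subseteq \polytopeQ_{\graphAux}$ — equivalently, $\flowExtensionOpt^{\sector} = \min\{\flowExtensionQOpt^{\sector}, \zeta^{\sector}\}$ already has $\flowExtensionQOpt^{\sector}$ as one of its two terms. For the lower bound it therefore suffices to show $\zeta^{\sector}(\flowVec) \ge \tfrac{2}{3}\flowExtensionQOpt^{\sector}(\flowVec) = \tfrac{2}{3D}$. Fix an odd set $U$ with $|U| = 2k+1$; only $k \ge 1$ matters, since $|U|=1$ yields an empty $E(U)$ and imposes no constraint. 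A double-counting of the edges inside $U$, namely $2\sum_{\edge \in E(U)} \flow_{\edge} = \sum_{v \in U}\big(\sum_{\edge \in E(U),\, v \in \edge} \flow_{\edge}\big) \le \sum_{v \in U}\sum_{\edge \in \delta(v)}\flow_{\edge} \le (2k+1)D$, gives $\sum_{\edge \in E(U)} \flow_{\edge} \le (2k+1)D/2$, and hence $\frac{\lfloor |U|/2\rfloor}{\sum_{\edge \in E(U)} \flow_{\edge}} \ge \frac{2k}{(2k+1)D} \ge \frac{2}{3D}$, where the last step uses $\frac{2k}{2k+1} \ge \frac{2}{3}$ for all $k \ge 1$. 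Taking the minimum over $U$ yields $\zeta^{\sector} \ge \tfrac{2}{3D}$, as required.

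Part (ii) is the delicate step and the main obstacle. Since $1/\flowExtensionOpt^{\sector} = \max\{D,\, 1/\zeta^{\sector}\}$ and $D \le D + M$ trivially, the claim reduces to $1/\zeta^{\sector} \le D + M$, i.e. $\sum_{\edge \in E(U)} \flow_{\edge} \le k(D+M)$ for every odd $U$ with $|U| = 2k+1$. The trouble is that the single double-counting bound from (i) only gives $\sum_{\edge \in E(U)} \flow_{\edge} \le kD + D/2$, and the residual $D/2$ need not be $\le kM$; a naive induction that peels off the heaviest edge together with both its endpoints leaves exactly this slack. The remedy I would use is a two-regime case analysis that plays the degree bound against an edge-counting bound $\sum_{\edge \in E(U)} \flow_{\edge} \le \binom{2k+1}{2} M = k(2k+1)M$. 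If $kM \ge D/2$, the double-counting bound already gives $\sum_{\edge \in E(U)} \flow_{\edge} \le kD + D/2 \le kD + kM$. If instead $kM < D/2$ (so $2kM < D$), the edge-counting bound gives $\sum_{\edge \in E(U)} \flow_{\edge} \le k(2k+1)M = kM + k(2kM) < kM + kD$. In both regimes $\sum_{\edge \in E(U)} \flow_{\edge} \le k(D+M)$, which establishes (ii).

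Finally, with (i) proved for an arbitrary sectorization (the unsectorized case being $\sector = \varnothing$), part (iii) follows by the division argument of the first paragraph, invoking $\flowExtensionOpt^{\sector} \le \flowExtensionQOpt^{\sector}$ together with $\flowExtensionOpt^{\varnothing} \ge \tfrac{2}{3}\flowExtensionQOpt^{\varnothing}$ for the upper bound, and the symmetric pair $\flowExtensionOpt^{\sector} \ge \tfrac{2}{3}\flowExtensionQOpt^{\sector}$ and $\flowExtensionOpt^{\varnothing} \le \flowExtensionQOpt^{\varnothing}$ for the lower bound. Everything outside the additive estimate in (ii) is either an immediate polytope inclusion or elementary arithmetic on the odd-set sums, so I expect (ii) to be where essentially all the care is needed.
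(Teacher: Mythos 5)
Your proposal is correct, and it is worth separating its parts. For (i) and (iii) you follow essentially the paper's own route: the paper proves the lower bound in (i) by exactly the observation you formalize --- that in any odd set $U$ every vertex has weighted degree at most $D:=1/\flowExtensionQOpt^{\sector}(\flowVec)$, so a double count gives $\sum_{\edge\in E(U)}\flow_{\edge}\le |U|\,D/2$ and hence $\zeta^{\sector}(\flowVec)\ge \frac{2k}{(2k+1)D}\ge\frac{2}{3D}$ --- and it likewise obtains (iii) as an immediate algebraic consequence of (i), as you do. The genuine difference is in (ii): the paper disposes of it by citing Vizing's theorem (through the weighted edge-coloring results it references), whereas you give a self-contained combinatorial argument that $\sum_{\edge\in E(U)}\flow_{\edge}\le k(D+M)$ for every odd $U$ with $|U|=2k+1$, where $M:=\max_{\edge}\flow_{\edge}$, by playing the degree bound $\sum_{\edge\in E(U)}\flow_{\edge}\le kD+D/2$ against the edge-count bound $\sum_{\edge\in E(U)}\flow_{\edge}\le k(2k+1)M$ according to whether $kM\ge D/2$; both regimes close correctly, so the argument is valid. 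The citation buys brevity; your version buys self-containedness and makes transparent why the additive slack is exactly $\max_{\edge}\flow_{\edge}$. One detail worth stating explicitly in your write-up: the bound $|E(U)|\le\binom{|U|}{2}$ presumes the paper's convention of collapsing each pair of oppositely directed links into a single undirected edge carrying flow $\flow_{\edge_{+}}+\flow_{\edge_{-}}$, so that the induced subgraph is simple; this is the same convention under which the closed form \eqref{eqn:flow-extension-P} is written, so the proof goes through, but without saying it the edge count could be doubled.
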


\iffullpaper
\begin{proof}
(\emph{i}) can be proved with the observation that in any (odd) subset $U \subseteq \setVertex^{\sector}$, the cumulative weighted degree of any node cannot exceed $\frac{1}{\flowExtensionQOpt^{\sector}(\flowVec)}$ by definition. It can also be proven from Shannon's result for chromatic indices~\cite{shannon1949theorem, hajek1988link}. (\emph{ii}) can be deduced from Vizing's theorem~\cite{berge1973graphs, hajek1988link}, and (\emph{iii}) is an immediate result of (\emph{i}).
\end{proof}
\fi

\iffullpaper
\begin{remark}
\label{remark:bipartiteequality}
If $\graphAux$ is a bipartite graph under sectorization $\sector$, since the two polytopes $\polytopeQ_{\graphAux}$ and $\polytope_{\graphAux}$ are identical (see Section~\ref{ssec:auxiliary-background-matching-polytopes}), it holds that $\flowExtensionOpt^{\sector}(\flowVec) = \flowExtensionQOpt^{\sector}(\flowVec), \forall \flowVec \in \mathbb{R}^{|\setEdge|}$. 
See the discussions in Section~\ref{sec:bipartite} for more details.
\end{remark}
\else
\fi


\section{Optimization and A Distributed Approximation Algorithm}
\label{sec:optimization}

Given a vector $\numSectorVec \in \mathbb{Z}_{+}^{\numNode}$, our objective is to find the sectorization $\sector \in \setSector(\numSectorVec)$ that maximizes the flow extension ratio, $\flowExtensionOpt^{\sector}(\flowVec)$, under a network flow, $\flowVec$. 
While the algorithm in this section focuses on a single, fixed sectorization, in Section~\ref{ssec:joint_throughput_optimality} we present a dynamic routing approach that allows the network to re-sectorize in real time, adapting to the current network state. The aforementioned optimization problem, {\opt}, is given by:
\begin{align}
\hspace{-8pt}
& \textrm{(\textbf{Opt})}~
\sectorOpt(\flowVec) := \arg \max_{\sector \in \setSector(\numSectorVec)}~ \flowExtensionOpt^{\sector}(\flowVec) \nonumber \\
& \stackrel{\eqref{eqn:flow-extension-P}}{=} \arg \max_{\sector \in \setSector(\numSectorVec)} \Big\{ \min \Big\{
\min_{v \in \setVertex^{\sector}} \frac{1}{\sum_{\edge \in \delta(n)} \flow_{\edge}},\
\zeta^{\sector}(\flowVec)
\Big\} \Big\}.
\label{eqn:opt}
\end{align}
We are optimizing towards the boundary of the matching polytope, $\polytope_{\graphAux}$, over the set of sectorizations, $\setSector(\numSectorVec)$, for a given network flow, $\flowVec$. This is because there does not exist a single sectorization $\sector \in \setSector(\numSectorVec)$ that can achieve close-to-optimal performance of $\flowExtensionOpt^{\sector}(\flowVec)$ for every network flow $\flowVec \in \mathbb{R}^{|\setEdge|}$.
\begin{algorithm}[!t]
\caption{{\algoName} for node $n$.}
\label{algo:find-sector}
\small
\begin{algorithmic}[1]
\Statex\hspace{-\algorithmicindent}\textbf{Input}: $\numSector_{n}$, $\delta(n)$, $\flowVec_{n}$, and $\epsilon$
\vspace*{0.5ex}

\State
$\decisionThreshold_{\textrm{min}} \gets \max_{\edge \in \delta(n)} \flow_{\edge}$,
$\decisionThreshold_{\textrm{max}} \gets \sum_{\edge \in \delta(n)} \flow_{\edge}$,
$\decisionThreshold_{\textrm{temp}} \gets 0$

\State
$\decisionThresholdCrit \gets 0$, 
$\texttt{decision} \gets \texttt{No}$

\While{$(\decisionThreshold_{\textrm{max}} - \decisionThreshold_{\textrm{min}}) > \epsilon$}

\State
$\decisionThreshold_{\textrm{temp}} \gets (\decisionThreshold_{\textrm{min}} + \decisionThreshold_{\textrm{max}})/2$

\State
$(\texttt{decision}, \sectorApproxAlgo_{0}) \gets \decisionProb(\decisionThreshold_{\textrm{temp}})$ (Algorithm~\ref{algo:decision-problem})

\If{$\texttt{decision} = \texttt{Yes}$}
\State $\decisionThreshold_{\textrm{max}} \gets \decisionThreshold_{\textrm{temp}}$
\Else
\State $\decisionThreshold_{\textrm{min}} \gets \decisionThreshold_{\textrm{temp}}$
\EndIf

\EndWhile

\State
$\decisionThresholdCrit \gets \decisionThreshold_{\textrm{max}}$

\State
$(\texttt{decision}, \sectorApproxAlgo_{n}(\flowVec_{n})) \gets \decisionProb(\decisionThresholdCrit)$

\State \textbf{return} The sectorization for node $n$, $\sectorApproxAlgo_{n}(\flowVec_{n})$

\end{algorithmic}
\end{algorithm}

\begin{algorithm}[!t]
\caption{$\decisionProb(\decisionThreshold)$ for node $n$.}
\label{algo:decision-problem}
\small
\begin{algorithmic}[1]

\Statex\hspace{-\algorithmicindent}\textbf{Input}: $\numSector_{n}$, $\delta(n)$, $\flowVec_{n}$, and $\decisionThreshold$

\vspace*{0.5ex}



\For{$\edge \in \delta(n)$}

\State
Reset node $n$'s sectorization $\sectorApproxAlgo_{n} \gets \varnothing$

\State
Put a sectorizing axis in $\sectorApproxAlgo_{n}$ right after $\edge$ (clockwise)

\State
$\texttt{sectors\_needed} \gets 1$, $\texttt{total\_weight} \gets 0$

\State
$\edge' \gets$ the edge next to $\edge$ in $\delta(n)$ (clockwise)

\While{$\edge' \neq \edge$}
\If{$\flow_{\edge'} > \decisionThreshold$}
\textbf{return} $(\texttt{NO}, \varnothing)$

\Else
\State
$\texttt{total\_weight} \gets \texttt{total\_weight} + \flow_{\edge'}$ 

\If{$\texttt{total\_weight} > \decisionThreshold$}
\State
$\texttt{sectors\_needed} \gets \texttt{sectors\_needed} + 1$

\State
Put a sectorizing axis in $\sectorApproxAlgo_{n}$ before $\edge'$ (counter-clockwise)
\State
$\texttt{total\_weight} \gets \flow_{\edge'}$ 

\EndIf

\State $\edge' \gets$ the edge next to $\edge$ in $\delta(n)$ (clockwise)

\EndIf

\EndWhile

\If{ $\texttt{sectors\_needed} \leq \numSector_{n}$ }

\State
\textbf{return} $(\texttt{Yes}, \sectorApproxAlgo_{n})$

\EndIf

\EndFor



\State
\textbf{return} $($\texttt{No}$, \varnothing)$

\end{algorithmic}
\end{algorithm}

\subsection{Key Insight and Intuition}
\label{ssec:optimization-insight}

In general, {\opt} is analytically intractable due to its combinatorial nature and the fact that $\zeta^{\sector}(\flowVec)$ needs exponentially many calculations even for a fixed sectorization $\sector$. Instead, we consider an alternative optimization problem:
\begin{align}
\textrm{(\textbf{Opt-Approx})}~
& \sectorApprox(\flowVec) 
:= \arg \max_{\sector \in \setSector(\numSectorVec)}~ \flowExtensionQOpt^{\sector}(\flowVec) \nonumber \\
& \stackrel{\eqref{eqn:flow-extension-Q}}{=} \arg \max_{\sector \in \setSector(\numSectorVec)}~
\Big\{ \min_{v \in \setVertex^{\sector}} \frac{1}{\sum_{\edge \in \delta(v)} \flow_{\edge}} \Big\}.
\label{eqn:opt-approx}
\end{align}
Although {\optApprox} excludes the constraint $\zeta^{\sector}(\flowVec)$ in {\opt}, solving it by a brute-force algorithm is still analytically intractable due to the coupling between the (possibly different) sectorization of \emph{all} nodes. For example, if each of the $\numNode$ nodes has $\widetilde{\numSector}$ possible sectorizations, a total number of $\numNode^{\widetilde{\numSector}}$ sectorizations need to be evaluated to solve {\optApprox}.


Our key insight behind developing a distributed approximation algorithm (described in Section~\ref{ssec:optimization-algorithm}) is that {\optApprox} can be decomposed into $\numNode$ individual optimization problems, {\optApproxN}, for each node $n \in \setNode$. With a little abuse of notation, let $\flowVec_{n} = (\flow_{\edge}: \edge \in \delta(n))$ be the ``local" flows on the undirected edges incident to node $n$. This decomposed optimization problem is given by:
\begin{align}
\hspace*{-9pt}
& \textrm{(\textbf{Opt-Approx}-$n$)}~
\sectorApproxDist_{n}(\flowVec_{n}) \nonumber \\
& \qquad := \arg \min_{ \sector_{n} \in \setSector_{n}(\numSector_n) } 
\max_{v \in \{n_{1}^{\sector_{n}}, \cdots, n_{\numSector_{n}}^{\sector_{n}}\}} \littlesum_{\edge \in \delta(v)} \flow_{\edge}. 
\label{eqn:opt-approx-n}
\end{align}
{\optApproxN} determines node $n$'s sectorization, $\sectorApproxDist_{n}(\flowVec_{n})$, based on its local flows, $\flowVec_{n}$, and is \emph{independent} of the other nodes. 
\iffullpaper
The following lemma shows that the sectorization obtained by solving the $\numNode$ decomposed problems, {\optApproxN}, is equivalent to that obtained by solving {\optApprox}.
\else
The following lemma shows that the sectorization obtained by solving the $\numNode$ decomposed problems, {\optApproxN}, is equivalent to that obtained by solving {\optApprox}, whose proof is in~\cite{TechReport}.
\fi

\begin{lemma}
\label{lem:opt-approx-n-equivalence}
For a given network flow $\flowVec$, the sectorization $\sectorApproxDist(\flowVec)$, which consists of the solutions of {\optApproxN} for all nodes $n \in \setNode$, is equivalent to the solution $\sectorApprox(\flowVec)$ to {\optApprox}, i.e.,
\begin{align}
\sectorApprox(\flowVec) = \sectorApproxDist(\flowVec) = (\sectorApproxDist_{1}(\flowVec_{1}), \cdots, \sectorApproxDist_{\numNode}(\flowVec_{\numNode})) ~\textrm{and}~
\flowExtensionQOpt^{\sectorApproxDist}(\flowVec) = \flowExtensionQOpt^{\sectorApprox}(\flowVec).
\end{align}
\end{lemma}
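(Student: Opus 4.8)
The plan is to reduce \optApprox{} to the per-node problems by exploiting the block structure of the auxiliary graph's vertex set. First I would invoke the closed form in \eqref{eqn:flow-extension-Q}, which gives $\flowExtensionQOpt^{\sector}(\flowVec) = 1 / \max_{v \in \setVertex^{\sector}} \sum_{\edge \in \delta(v)} \flow_{\edge}$. Consequently, maximizing $\flowExtensionQOpt^{\sector}(\flowVec)$ over $\setSector(\numSectorVec)$ is equivalent to \emph{minimizing} the maximum weighted vertex degree $W(\sector) := \max_{v \in \setVertex^{\sector}} \sum_{\edge \in \delta(v)} \flow_{\edge}$ over $\setSector(\numSectorVec)$. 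This turns the ratio-maximization of \optApprox{} into a cleaner combinatorial min-max objective, on which the decomposition argument can be run.

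Next I would decompose $W(\sector)$ across nodes. Since $\setVertex^{\sector} = \bigcup_{n \in \setNode} \{n_{1}^{\sector_{n}}, \cdots, n_{\numSector_{n}}^{\sector_{n}}\}$ is partitioned by node, we may write $W(\sector) = \max_{n \in \setNode} \Phi_{n}(\sector_{n})$, where $\Phi_{n}(\sector_{n}) := \max_{v \in \{n_{1}^{\sector_{n}}, \cdots, n_{\numSector_{n}}^{\sector_{n}}\}} \littlesum_{\edge \in \delta(v)} \flow_{\edge}$ is exactly the objective of \optApproxN{} in \eqref{eqn:opt-approx-n}. The crucial observation is that $\Phi_{n}$ depends \emph{only} on $\sector_{n}$ and is independent of every other $\sector_{m}$ with $m \ne n$: by the construction of $\graphAux$ (Section~\ref{ssec:auxiliary-graph-construction}), the edges incident to a vertex $n_{k}$ are precisely the images of the links of node $n$ grouped into its $k$-th sector by $\sector_{n}$, so both the vertices $\{n_{k}\}$ and their incident flows are fixed once $\sector_{n}$ is fixed, regardless of how any other node is sectorized.

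With $\setSector(\numSectorVec)$ being the product set $\prod_{n} \setSector_{n}(\numSector_{n})$ and $W(\sector) = \max_{n} \Phi_{n}(\sector_{n})$ a maximum of terms each governed by a separate coordinate, the minimization decouples. I would argue this directly rather than via a generic minimax theorem: letting $\sectorApproxDist_{n}$ be a per-node minimizer of $\Phi_{n}$ for each $n$ (as returned by \optApproxN{}), the concatenation $\sectorApproxDist = (\sectorApproxDist_{1}(\flowVec_{1}), \cdots, \sectorApproxDist_{\numNode}(\flowVec_{\numNode}))$ satisfies $W(\sectorApproxDist) = \max_{n} \Phi_{n}(\sectorApproxDist_{n})$; and for any competing $\sector \in \setSector(\numSectorVec)$ we have $\Phi_{n}(\sector_{n}) \ge \Phi_{n}(\sectorApproxDist_{n})$ for every $n$ by minimality, hence $W(\sector) = \max_{n} \Phi_{n}(\sector_{n}) \ge \max_{n} \Phi_{n}(\sectorApproxDist_{n}) = W(\sectorApproxDist)$. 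Thus $\sectorApproxDist$ minimizes $W$, so it solves \optApprox{}, giving $\sectorApprox(\flowVec) = \sectorApproxDist(\flowVec)$ (treating minimizers as singletons per the earlier convention), and the optimal values coincide: $\flowExtensionQOpt^{\sectorApproxDist}(\flowVec) = 1/W(\sectorApproxDist) = 1/W(\sectorApprox) = \flowExtensionQOpt^{\sectorApprox}(\flowVec)$.

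The main obstacle is the interchange in the previous step, namely that $\min_{\sector} \max_{n} \Phi_{n}(\sector_{n}) = \max_{n} \min_{\sector_{n}} \Phi_{n}(\sector_{n})$. General minimax yields only the inequality $\ge$; the equality here is not a saddle-point phenomenon but follows entirely from the independence established above, which lets each coordinate be optimized in isolation without affecting the others. I would therefore make the independence of $\Phi_{n}$ from $\sector_{m}$ ($m \ne n$) fully explicit, tracing it to the way $\graphAux$ inherits edges from $\graphNet$, since it is precisely the fact that dissolves the apparent coupling across nodes and renders the problem separable.
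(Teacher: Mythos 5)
Your proposal is correct and follows essentially the same route as the paper: both reduce \optApprox{} to minimizing the maximum weighted vertex degree via \eqref{eqn:flow-extension-Q}, observe that this maximum decomposes as $\max_{n}$ of a per-node term depending only on $\sector_{n}$ (because of how $\graphAux$ partitions vertices by node), and conclude that concatenating per-node minimizers is globally optimal. Your explicit justification of the min-max decoupling is a slightly more careful writeup of the step the paper presents as a chain of equalities in \eqref{eqn:opt_separable}, but the underlying argument is identical.
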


\iffullpaper
\begin{proof}
The key observation is that the sectorization of node $n$, $\sector_{n}$, affects only the values of $\{ \sum_{\edge \in \delta(n_1^\sector)} \flow_{\edge}, \cdots, \sum_{\edge \in \delta(n^{\sector}_{\numSector_{n}})} \flow_{\edge} \}$, for nodes $\{ n_1^{\sector}, \cdots, n_{\numSector_{n}}^{\sector} \} \subseteq \setVertex^{\sector}$ in the auxiliary graph $\graphAux$, respectively. Therefore, it holds that
\begin{align}
\sectorApprox(\flowVec)
& \stackrel{\eqref{eqn:opt-approx}}{=} \arg \max_{\sector \in \setSector(\numSectorVec)}~ \flowExtensionQOpt^{\sector}(\flowVec) \nonumber \\
& \stackrel{\eqref{eqn:flow-extension-Q}}{=} \arg \max_{\sector \in \setSector(\numSectorVec)}~
\Big\{ \min_{v \in \setVertex^{\sector}} \frac{1}{\sum_{\edge \in \delta(v)} \flow_{\edge}} \Big\} \nonumber \\
& = \arg \min_{(\sector_1, \dots, \sector_{\numNode}) \in \setSector(\numSectorVec)}~ \Big\{ \max_{v \in \setVertex^{\sector}} \littlesum_{\edge \in \delta(v)} \flow_{\edge} \Big\} \nonumber \\
& = \arg \min_{(\sector_1, \dots, \sector_{\numNode}) \in \setSector(\numSectorVec)}~ \Big\{ \max_{n \in \setNode}~ \Big\{ \max_{k \in [\numSector_n]} \littlesum_{\edge \in \delta(n_{k}^\sector)} \flow_{\edge} \Big\} \Big\} \nonumber \\
& = \Big[ \arg \min_{ \sector_{n} \in \setSector_{n}(\numSector_n) } 
\Big\{ \max_{v \in \{n_{1}^{\sector}, \cdots, n_{\numSector_{n}}^{\sector}\}} \littlesum_{\edge \in \delta(v)} \flow_{\edge} \Big\}: \forall n \in \setNode \Big] \nonumber \\
& \stackrel{\eqref{eqn:opt-approx-n}}{=} ( \sectorApproxDist_{1}(\flowVec_{1}), \cdots, \sectorApproxDist_{\numNode}(\flowVec_{\numNode}) ) = \sectorApproxDist(\flowVec).
\label{eqn:opt_separable}
\end{align}
Since $\sectorApprox(\flowVec) = \sectorApproxDist(\flowVec)$, it also holds that $\flowExtensionQOpt^{\sectorApproxDist}(\flowVec) = \flowExtensionQOpt^{\sectorApprox}(\flowVec)$ (see {\eqref{eqn:flow-extension-Q}}). Note that the second-to-last equation in \eqref{eqn:opt_separable} holds because the corresponding problem is separable.
\end{proof}
\else
\fi

Although {\optApproxN} is a distributed optimization problem for each node $n$, to solve it using a brute-force method still remains high complexity. Consider the sectorization of node $n$ with $|\delta(n)|$ incident edges into $\numSector_{n}$ sectors ($|\delta(n)| \leq (\numNode-1)$). To solve {\optApproxN} in a brute-force way, a total number of ${|\delta(n)| \choose \numSector_{n}} = O{\numNode \choose \numSector_{n}} = O(\numNode^{\numSector_{n}})$ possible sectorizations need to be enumerated. Next, we present a polynomial-time distributed algorithm that solves {\optApproxN} with a complexity independent of $\numSector_{n}$.

\subsection{A Distributed Approximation Algorithm}
\label{ssec:optimization-algorithm}
We now present \emph{a distributed approximation algorithm}, {\algoName}, that efficiently solves {\opt} with guaranteed performance. In essence, {\algoName} solves {\optApproxN} for each individual node $n \in \setNode$. Algorithm~\ref{algo:find-sector} presents the pseudocode for {\algoName}, which includes two main components:

\vspace{0.5ex}
\noindent
(1) A \underline{\textbf{decision problem}}, $\decisionProb(\decisionThreshold)$, that determines the \emph{existence} of a sectorization for node $n$ (with $\numSector_{n}$, $\delta(n)$, and $\flowVec_{n}$) under a given threshold value, $\decisionThreshold \in \mathbb{R}_{+}$, given by
\begin{align}
& \hspace*{-9pt} \decisionProb(\decisionThreshold) = \nonumber \\
& \hspace*{-6pt} \begin{cases}
(\texttt{Yes}, \sectorApproxDist_{0}), & \text{if}~ \exists \sectorApproxDist_{0} \in \setSector_{n}(\numSector_n): \max\limits_{v \in \{n_{1}^{\sectorApproxDist_{0}}, \cdots, n_{\numSector_{n}}^{\sectorApproxDist_{0}}\}} \littlesum\limits_{\edge \in \delta(v)} \flow_{\edge} \leq \decisionThreshold, \\
(\texttt{No}, \varnothing), & \text{if}~ \forall \sectorApproxDist_{0} \in \setSector_{n}(\numSector_n): \max\limits_{v \in \{n_{1}^{\sectorApproxDist_{0}}, \cdots, n_{\numSector_{n}}^{\sectorApproxDist_{0}}\}} \littlesum\limits_{\edge \in \delta(v)} \flow_{\edge} > \decisionThreshold.
\end{cases}
\label{eqn:decisition-problem-node-n}
\end{align}
$\decisionProb(\decisionThreshold)$ can be solved as follows. Consider the set of edges incident to node $n$, $\delta(n) = (\edge_{1}', \cdots, \edge_{|\delta(n)|}')$. We begin by assuming that the first sectoring axis of node $n$ is placed between its first and last incident edge, i.e., $(\vert \edge_{1}', \cdots, \edge_{|\delta(n)|}')$.
Under this assumption, the second sectoring axis will be placed between the $k^{\textrm{th}}$ and $(k+1)^{\textrm{th}}$ incident edges if $\littlesum_{i=1}^{k} \flow_{\edge_{i}'} \leq \decisionThreshold$ and $\littlesum_{i=1}^{k+1} \flow_{\edge_{i}'} > \decisionThreshold$. Then, the process is repeated for the remaining edges $\delta(n) \setminus \{\edge_{1}, \cdots, \edge_{k}\}$ to find the third sectoring axis, so on and so forth until all edges in $\delta(n)$ are enumerated for the threshold, $\decisionThreshold$. Although the above procedure started by placing the first sectoring axis arbitrarily, correctness is ensured by repeating this process for all $|\delta(n)|$ possible initial positions between incident edges.

\vspace{0.5ex}
\noindent
(2) A \underline{\textbf{binary search process}} that finds the critical threshold, $\decisionThresholdCrit$, based on which the optimized sectorization for node $n$, denoted by $\sectorApproxAlgo_{n}$, is determined by $\decisionProb(\decisionThresholdCrit)$.
The following theorem states our main results regarding the correctness of {\algoName} and its guaranteed performances as a distributed approximation algorithm.


\begin{theorem}
\label{thm:main}
For a given network flow $\flowVec$, it holds that:
\begin{enumerate}[leftmargin=*,topsep=1pt]
\item[(i)]
[\textbf{Correctness}] The sectorization of node $n$ returned by {\algoName}, $\sectorApproxAlgo_{n}(\flowVec_{n})$, is equivalent to $\sectorApproxDist_{n}(\flowVec_{n})$ in {\optApproxN}, i.e.,
\begin{align}
\sectorApproxAlgo_{n}(\flowVec_{n}) = \sectorApproxDist_{n}(\flowVec_{n}),\ \forall n \in \setNode,
\label{eqn:thm-correctness}
\end{align}
\item[(ii)]
[\textbf{Approximation Ratio}] The distributed {\algoName} algorithm is a $2/3$-approximation algorithm, i.e., 
\begin{align}
\textstyle
\frac{2}{3} \leq \frac{\flowExtensionOpt^{\sectorApproxAlgo}(\flowVec)}{\flowExtensionOpt^{\sectorOpt}(\flowVec)} \leq 1,
\label{eqn:thm-approx-ratio}
\end{align}
where $\flowExtensionOpt^{\sectorOpt}(\flowVec)$ is flow extension ratio achieved by the optimal sectorization as the solution to {\opt}.
\end{enumerate}
\end{theorem}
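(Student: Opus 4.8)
The plan is to establish part (i) first and then derive part (ii) from it together with Lemmas~\ref{lem:opt-approx-n-equivalence} and~\ref{lemma:sasaki_approx}. The upper bound $\flowExtensionOpt^{\sectorApproxAlgo}(\flowVec)/\flowExtensionOpt^{\sectorOpt}(\flowVec) \le 1$ in (ii) is immediate, since $\sectorOpt$ maximizes $\flowExtensionOpt^{\sector}(\flowVec)$ over $\setSector(\numSectorVec)$ by definition of {\opt}. The substance is therefore the correctness claim (i) and the lower bound $2/3$.

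For (i), I would argue in two layers. First, I would show that the subroutine $\decisionProb(\decisionThreshold)$ correctly decides whether some $\sector_{n} \in \setSector_{n}(\numSector_n)$ achieves $\max_{v} \sum_{\edge \in \delta(v)} \flow_{\edge} \le \decisionThreshold$. Because each sector is a contiguous arc of the cyclically ordered incident edges $\delta(n)$, this is equivalent to asking whether the cyclic sequence can be partitioned into at most $\numSector_n$ contiguous blocks, each of edge-flow sum $\le \decisionThreshold$. I would prove that, once the cycle is cut at a fixed position, the next-fit greedy sweep used in Algorithm~\ref{algo:decision-problem} (open a new block exactly when the running sum would exceed $\decisionThreshold$) uses the minimum possible number of blocks for that cut; this is a standard exchange argument, since any block boundary of a competing partition can be pushed no earlier than the greedy one. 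Since the optimal cyclic partition has at least one boundary, cutting the cycle immediately after each edge (the outer loop over $\edge \in \delta(n)$) and keeping the best sweep is guaranteed to realize the true cyclic minimum, not merely a linear-arrangement minimum; the explicit per-edge check $\flow_{\edge'} > \decisionThreshold$ handles the degenerate case where a single edge already exceeds the threshold.

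Second, I would show that the binary search in Algorithm~\ref{algo:find-sector} returns the optimizer of {\optApproxN}. The optimal value $\decisionThresholdCrit$ of {\optApproxN} is exactly the infimum of thresholds for which $\decisionProb$ answers \texttt{Yes}, and $\decisionProb(\decisionThreshold)$ is monotone in $\decisionThreshold$, so the bisection converges to this infimum. The crucial observation is that $\setSector_{n}(\numSector_n)$ contains only finitely many equivalence classes of sectorizations (Section~\ref{ssec:auxiliary-equivalent-sectorizations}), so the attainable values of $\max_{v} \sum_{\edge \in \delta(v)} \flow_{\edge}$ form a finite set with a positive gap between consecutive values; choosing $\epsilon$ below this gap forces the greedy output of the final call $\decisionProb(\decisionThresholdCrit)$ to have max-load exactly equal to the optimum (any realized max-load lies below $\decisionThresholdCrit$ and hence below the next-larger attainable value). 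Any such sectorization is, up to the equivalence of Section~\ref{ssec:auxiliary-equivalent-sectorizations}, the solution $\sectorApproxDist_{n}(\flowVec_{n})$ of {\optApproxN}, giving \eqref{eqn:thm-correctness}.

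With (i) in hand, the lower bound in (ii) follows by a short sandwich. By \eqref{eqn:thm-correctness} and Lemma~\ref{lem:opt-approx-n-equivalence}, the network sectorization $\sectorApproxAlgo$ assembled from the per-node outputs coincides with $\sectorApprox$, the maximizer of $\flowExtensionQOpt^{\sector}(\flowVec)$; hence $\flowExtensionQOpt^{\sectorApproxAlgo}(\flowVec) \ge \flowExtensionQOpt^{\sectorOpt}(\flowVec)$. Applying the two-sided bound of Lemma~\ref{lemma:sasaki_approx}(i) at both $\sectorApproxAlgo$ and $\sectorOpt$ then yields
\begin{align*}
\flowExtensionOpt^{\sectorApproxAlgo}(\flowVec)
\;\ge\; \tfrac{2}{3}\,\flowExtensionQOpt^{\sectorApproxAlgo}(\flowVec)
\;\ge\; \tfrac{2}{3}\,\flowExtensionQOpt^{\sectorOpt}(\flowVec)
\;\ge\; \tfrac{2}{3}\,\flowExtensionOpt^{\sectorOpt}(\flowVec),
\end{align*}
which is exactly \eqref{eqn:thm-approx-ratio}. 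I expect the main obstacle to be the correctness of $\decisionProb$: rigorously justifying that next-fit greedy is optimal per cut and that sweeping all cut points captures the genuinely cyclic optimum is the one non-routine step, whereas the binary-search exactness and the approximation-ratio sandwich are comparatively mechanical.
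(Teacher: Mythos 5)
Your proposal is correct and follows essentially the same route as the paper: monotonicity of $\decisionProb(\decisionThreshold)$ in $\decisionThreshold$, binary search to the critical threshold $\decisionThresholdCrit$ for part (i), and the sandwich $\flowExtensionOpt^{\sectorApproxAlgo}(\flowVec) \geq \frac{2}{3}\flowExtensionQOpt^{\sectorApproxAlgo}(\flowVec) \geq \frac{2}{3}\flowExtensionQOpt^{\sectorOpt}(\flowVec) \geq \frac{2}{3}\flowExtensionOpt^{\sectorOpt}(\flowVec)$ via Lemmas~\ref{lem:opt-approx-n-equivalence} and~\ref{lemma:sasaki_approx} for part (ii). If anything, you are more careful than the paper on two points it glosses over — the optimality of the next-fit greedy sweep over all cyclic cut points in $\decisionProb$, and the finite-gap argument justifying that a sufficiently small $\epsilon$ makes the binary search exact.
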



\begin{proof}
\emph{First, we prove (i)}.
We show that the decision returned by $\decisionProb(\decisionThreshold)$ has a monotonic property with respect to $\decisionThreshold$, i.e., there exists a \emph{critical threshold}, $\decisionThresholdCrit \in \mathbb{R}_{+}$, such that
\begin{align}
\decisionProb(\decisionThreshold) =
\begin{cases}
(\texttt{No}, \varnothing), & \forall \decisionThreshold \in [0, \decisionThresholdCrit), \\
(\texttt{Yes}, \cdot), & \forall \decisionThreshold \in [\decisionThresholdCrit, +\infty).
\end{cases}
\label{eqn:decision-problem-monotonicity}
\end{align}
It is easy to see that $\decisionProb(\decisionThreshold)$ outputs \texttt{No} for $\decisionThreshold = 0$ and \texttt{Yes} for a sufficiently large $\decisionThreshold$ with a non-zero flow $\flowVec$. Let $\decisionThresholdCrit$ be the smallest $\decisionThreshold$ such that $\exists \sectorApproxAlgo_{n} \in \setSector_{n}(\numSector_n)$ with which the output of {\eqref{eqn:decisition-problem-node-n}} is $(\texttt{Yes}, \sectorApproxAlgo_{n})$.
As a result, we have $\decisionProb( \decisionThreshold) = (\texttt{No}, \varnothing),\ \forall \decisionThreshold \in [0, \decisionThresholdCrit)$. In addition, $\sectorApproxAlgo_{n}$ and $\decisionThresholdCrit$ satisfy
\begin{align*}
\max\limits_{v \in \{n_{1}^{\sectorApproxAlgo_{n}}, \cdots, n_{\numSector_{n}}^{\sectorApproxAlgo_{n}}\}} \littlesum\limits_{\edge \in \delta(v)} \flow_{\edge} \leq \decisionThreshold,\ \forall \decisionThreshold \in [\decisionThresholdCrit, +\infty).
\end{align*}
Therefore, for a network flow $\flowVec_{n}$ incident to node $n$, we can set
\begin{align}
\decisionThresholdCrit & = \min_{ \sector_{n} \in \setSector_{n}(\numSector_n) } \Big\{ \max_{v \in \{n_{1}^{\sector_{n}}, \cdots, n_{\numSector_{n}}^{\sector_{n}}\}} \littlesum_{\edge \in \delta(v)} \flow_{\edge} \Big\},
\label{eqn:proof-decision-threshold-critical}
\end{align}
and the sectorization of node $n$ corresponding to $\decisionThresholdCrit$ is equivalent to $\sectorApproxDist_{n}$ in {\optApproxN}, i.e.,
\begin{align}
\hspace*{-10pt}
\sectorApproxAlgo_{n}(\flowVec) = \arg \min_{ \sector_{n} \in \setSector_{n}(\numSector_n) } \Big\{ \max_{v \in \{n_{1}^{\sector_{n}}, \cdots, n_{\numSector_{n}}^{\sector_{n}}\}} \littlesum_{\edge \in \delta(v)} \flow_{\edge} \Big\}
\stackrel{\eqref{eqn:opt-approx-n}}{=}  \sectorApproxDist_{n}(\flowVec_{n}).
\label{eqn:proof-sector-approx-algo}
\end{align}
Due to the monotonicity of {\eqref{eqn:decision-problem-monotonicity}}, the critical value $\decisionThresholdCrit$ can be found via a binary search within the interval $[\max_{\edge \in \delta(n)} \flow_{\edge}, \sum_{\edge \in \delta(n)} \flow_{\edge}]$ with a sufficiently small $\epsilon$, which is a parameter that controls the trade-offs between accuracy and convergence of Algorithm~\ref{algo:find-sector}. Then, $\sectorApproxAlgo_{n}(\flowVec_{n})$ can be obtained via ${\decisionProb}(\decisionThresholdCrit)$.

\emph{Next, we prove (ii)}.
For a network flow $\flowVec$, consider the sectorizations $\sectorOpt(\flowVec)$ and $\sectorApprox(\flowVec)$ as the solution to {\opt} and {\optApprox}, respectively. It is easy to see from their definitions that $\flowExtensionOpt^{\sectorApproxAlgo}(\flowVec) \leq \flowExtensionOpt^{\sectorOpt}(\flowVec) \Rightarrow \frac{\flowExtensionOpt^{\sectorApproxAlgo}(\flowVec)}{\flowExtensionOpt^{\sectorOpt}(\flowVec)} \leq 1$. From {\eqref{eqn:opt}}--{\eqref{eqn:opt-approx}} and Lemma \ref{lemma:sasaki_approx}, we have
\begin{align}
\label{equation_approx_optimality}
\textstyle
\flowExtensionOpt^{\sectorOpt}(\flowVec) \leq \flowExtensionQOpt^{\sectorOpt}(\flowVec) \leq \flowExtensionQOpt^{\sectorApprox}(\flowVec)
~\textrm{and}~
\frac{2}{3} \cdot \flowExtensionQOpt^{\sectorApprox}(\flowVec) \leq  \flowExtensionOpt^{\sectorApprox}(\flowVec),
\end{align}
which further implies that
\begin{align}
\textstyle
\frac{2}{3} \cdot \flowExtensionOpt^{\sectorOpt}(\flowVec) \leq
\frac{2}{3} \cdot \flowExtensionQOpt^{\sectorApprox}(\flowVec) \leq  \flowExtensionOpt^{\sectorApprox}(\flowVec).
\end{align}
In addition, since $\sectorApproxAlgo(\flowVec) = \sectorApproxDist(\flowVec) = \sectorApprox(\flowVec)$ (see Theorem~\ref{thm:main}~(\emph{i}) and Lemma~\ref{lem:opt-approx-n-equivalence}), we can conclude that
\begin{align}
\textstyle
\frac{2}{3} \cdot \flowExtensionOpt^{\sectorOpt}(\flowVec) \leq  \flowExtensionOpt^{\sectorApprox}(\flowVec) = \flowExtensionOpt^{\sectorApproxAlgo}(\flowVec)
\Rightarrow
\frac{\flowExtensionOpt^{\sectorApproxAlgo}(\flowVec)}{\flowExtensionOpt^{\sectorOpt}(\flowVec)} \geq \frac{2}{3},
\end{align}
and Theorem~\ref{thm:main}~(\emph{ii}) follows directly.
\end{proof}
\begin{remark}[Complexity of {\algoName}]
Recall from the proof of Theorem~\ref{thm:main} that $\decisionThresholdCrit \in [ \max_{\edge \in \delta(n)} \flow_{\edge}, \sum_{\edge \in \delta(n)} \flow_{\edge} ]$. Let $\Theta := (\sum_{\edge \in \delta(n)} \flow_{\edge} - \max_{\edge \in \delta(n)} \flow_{\edge})/\epsilon$, the binary search process  of {\algoName} will terminate in $O(\log_{2}\Theta)$ iterations, and each iteration has a complexity of $O(|\delta(n)|^2)$ (see Algorithm~\ref{algo:decision-problem}). Therefore, the complexity of {\algoName} is $O(|\delta(n)|^2 \cdot \log_{2}\Theta) = O(\numNode^{2} \cdot \log_{2}\Theta)$.
\end{remark}


\begin{remark}
\label{remark:bound2}
Based on Lemma~\ref{lemma:sasaki_approx}, we can obtain another lower bound of the approximation ratio of {\algoName} given by:
\begin{align}
\textstyle
\frac{\flowExtensionOpt^{\sectorApproxAlgo}(\flowVec)}{\flowExtensionOpt^{\sectorOpt}(\flowVec)}
\geq \frac{1}{1 + \flowExtensionOpt^{\sectorOpt}(\flowVec) \cdot \max_{\edge \in \setEdge} \flow_{\edge} }
\geq \frac{1}{1 + \flowExtensionQOpt^{\sectorApproxAlgo}(\flowVec)\cdot \max_{\edge \in \setEdge} \flow_{\edge} }
:= LB^{\sectorApproxAlgo}(\flowVec).
\label{eqn:bound_2}
\end{align}
\end{remark}

\begin{proof}
    Lemma \ref{lemma:sasaki_approx}(ii) for $\sigma = \pi$ gives, 
    \begin{equation}
        \label{eq:remark_eq1}
        \frac{1}{\flowExtensionQOpt^{\pi}(\flowVec)} \geq  \frac{1}{\flowExtensionOpt^{\pi}(\flowVec)} - \max_{\edge \in \setEdge} \flow_{\edge}.
    \end{equation} 
    By definition of sectorization $\pi$ and Lemma \ref{lemma:sasaki_approx}(i), we have $\mu^{\pi}(\flowVec) \ge \mu^{\sector^{*}}(\flowVec) \ge \lambda^{\sector^{*}}(\flowVec)$ and hence $\frac{1}{\lambda^{\sigma^{*}}(\flowVec) } \ge \frac{1}{\mu^{\pi}(\flowVec)}$. Therefore, \eqref{eq:remark_eq1} becomes
    \begin{equation}
        \label{eq:remark_eq2}
          \frac{1}{\flowExtensionOpt^{\pi}(\flowVec)} - \max_{\edge \in \setEdge} \flow_{\edge} \le \frac{1}{\lambda^{\sigma^{*}}(\flowVec)} \Leftrightarrow \frac{\lambda^{\sigma^{*}}(\flowVec)}{\flowExtensionOpt^{\pi}(\flowVec)}   \le 1 + \lambda^{\sigma^{*}}(\flowVec) \cdot \max_{\edge \in \setEdge}  \flow_{\edge}
        \nonumber
    \end{equation}
    which gives the final result:
    \begin{equation}
        \frac{\flowExtensionOpt^{\sectorApproxAlgo}(\flowVec)}{\flowExtensionOpt^{\sectorOpt}(\flowVec)} \ge \frac{1}{1 + \flowExtensionOpt^{\sectorOpt}(\flowVec) \cdot \max_{\edge \in \setEdge} \flow_{\edge} }.
        \nonumber
    \end{equation}
    The second inequality of \eqref{eqn:bound_2} follows using the inequality $\mu^{\pi}(\flowVec) \ge \mu^{\sector^{*}}(\flowVec) \ge \lambda^{\sector^{*}}(\flowVec)$.
    
\end{proof}

This bound is useful since, from the definition of ${\flowExtensionOpt^{\sectorOpt}(\flowVec)}$, i.e., the extension ratio of $\flowVec$ for it ``hit" the boundary of $\polytope_{H^{\sectorOpt}}$, the values of $\flowExtensionOpt^{\sectorOpt}(\flowVec)$ and $\max_{\edge \in \setEdge^{\sector}} \flow_{\edge}$ can not be large simultaneously. Although $\flowExtensionOpt^{\sectorOpt}(\flowVec)$ is analytically intractable, since $\flowExtensionOpt^{\sectorOpt}(\flowVec) \leq \flowExtensionQOpt^{\sectorOpt}(\flowVec) \leq \flowExtensionQOpt^{\sectorApproxAlgo}(\flowVec)$, we can derive another lower bound of the approximation ratio, $LB^{\sectorApproxAlgo}(\flowVec)$, which depends on $\flowExtensionQOpt^{\sectorApproxAlgo}(\flowVec)$. Note that $LB^{\sectorApproxAlgo}(\flowVec)$ is tractable since $\max_{\edge \in \setEdge^{\sector}} \flow_{\edge}$ is independent of the sectorization and $\flowExtensionQOpt^{\sectorApproxAlgo}(\flowVec)$ can be explicitly computed {\eqref{eqn:flow-extension-Q}}. In other words,
\begin{align}
\textstyle
\max \big\{ \frac{2}{3}, LB^{\sectorApproxAlgo}(\flowVec) \big\}
\leq \frac{\flowExtensionOpt^{\sectorApproxAlgo}(\flowVec)}{\flowExtensionOpt^{\sectorOpt}(\flowVec)}
\leq 1.
\end{align}
In Section~\ref{ssec:evaluation-random-network}, we show that for small values of $\numSector_{n}$, $LB^{\sectorApproxAlgo}(\flowVec)$ is indeed a much tighter bound than $2/3$ in {\eqref{eqn:thm-approx-ratio}}.

\vspace{0.5ex}
\noindent\textbf{Discussions and Applications.}
The distributed {\algoName} algorithm approximates the optimal sectorization, $\sectorOpt(\flowVec)$, which maximizes the flow extension ratio, $\flowExtensionOpt^{\sector}(\flowVec)$, under given $\flowVec$ and $\numSectorVec \in \mathbb{Z}_{+}^{\numNode}$. The choice of a sectorization should be based on a network flow, $\flowVec$ as the polytope $\polytope_{\graphAux}$ can be augmented in different flow directions depending on $\sector$. Hence, different sectorizations favor different sets of network flows. Some discussions about the proposed optimization framework:

\begin{itemize}[leftmargin=*, topsep=3pt, itemsep=3pt]
\item
\emph{Known Arrival Rates}.
In the case with single-hop traffic, the capacity region of a network $\graphNet$ is given by $\capRegion(\graphNet) =  \convexHull(\scheduleSet_{\graphNet})$ (see {\eqref{eqn:cap-region}}). With a known arrival rate matrix $\arrivalRateMat$, {\algoName} augments the capacity region with respect to the required $\arrivalRateMat$. Similarly, in the case of multi-hop traffic, with a known $\arrivalRateMat$, one can first obtain a feasible multi-commodity network flow $\flowVec$ that supports $\arrivalRateMat$, and then augment the polytope $\polytope_{\graphAux}$ according to this $\flowVec$.
\item
\emph{Varying the Number of Sectors}.
With proper (minor) modifications, {\algoName} can also return the minimum number of sectors $\numSector_n$ for every node $n$ such that a given network flow, $\flowVec$, can be maintained by the network, i.e., $\flowVec \in \polytope_{H^{\sectorApprox}}$. This is due to the distributed nature the proposed optimization framework, {\optApproxN}. Compared to previous work (e.g.,~\cite{hajek1988link}) whose objective is to obtain the minimum rate required for a network to support a flow vector $\flowVec$, our proposed framework also provides a method to support $\flowVec$ via efficient sectorization based on available resources.
\iffullpaper
\item
\emph{Changing the Effective Topology of the Network}.
A natural question that arises after developing the proposed optimization framework is whether the network can be sectorized in a way to embed a specific useful structural property in its auxiliary graph, such that the underlying links between nodes can be changed. Since in the sectorized network, the routing and scheduling schemes can be run conceptually in the auxiliary graph rather than in the connectivity graph, by choosing specific sectorizations the network architect might be able to induce properties in the graph to boost the performance of the algorithms to be executed in the network. For example, some sectorization schemes could potentially divide the network into conceptual non-communicating clusters or make the auxiliary graph bipartite.
\else
\fi
\iffullpaper
\end{itemize}
\else
\fi

\iffullpaper
\section{Bipartite Auxiliary Graphs \& Even Homogeneous Sectorization}
\label{sec:bipartite}


As a special case of changing the effective topology using sectorization, in this section, we present the class of \emph{Even Homogeneous Sectorizations}, under which the auxiliary graph of a network, $\graphAux$, is guaranteed to be bipartite. 



\subsection{Motivation for Inducing Bipartiteness in $\graphAux$}


For a network $\graph$ and $\numSectorVec \in \mathbb{Z}_{+}^{\numNode}$, let $\setSector_{bi}(\numSectorVec)$ be the set of network sectorizations such that the auxiliary graph $\graphAux$ is a bipartite graph, $\forall \sector \in \setSector_{bi}(\numSectorVec)$. We are particularly interested in the network sectorizations $\setSector_{bi}(\numSectorVec)$ because of two main reasons.
First, recall from section \ref{ssec:auxiliary-background-matching-polytopes} that the matching polytope and fractional matching polytope of a bipartite graph $\graph_{bi}$ are equivalent, i.e., $\polytope_{\graph_{bi}} = \polytopeQ_{\graph_{bi}}$. A sectorized network with a bipartite auxiliary graph, $\graphAux$, (and an unsectorized network with a bipartite connectivity graph) have a significantly simplified explicit capacity region characterization given by its fractional matching polytope $\polytopeQ_{\graphAux}$, since $\polytope_{\graphAux} = \polytopeQ_{\graphAux}$ (see {\eqref{eqn:polytope-Q}} and Lemma~\ref{lem:schedule-matching}).
Second, the maximum weight matching (MWM) in $\graphAux$ used as part of the backpressure algorithm, described in section \ref{sec:flow-extension}, can be obtained more efficiently when $\graphAux$ is a bipartite graph (see~\cite{schrijver2003combinatorial} for a detailed survey). Moreover, the MWM in a bipartite graph can be obtained in a distributed manner~\cite{wattenhofer2004distributed}, which is suitable for wireless networks where nodes have only local information. In that way, the routing can be achieved in a distributed manner.

One intuitive way to construct a network sectorization $\sector \in \setSector_{bi}(\numSectorVec)$ is to set $\numSector_{n} = |\delta(n)|, \forall n$ and put each link in $\delta(n)$ in a separate sector, i.e., $\sector_{n} = \{ \vert \link_{1} \vert \link_{2} \vert \cdots \vert \link_{|\delta(n)|}\}, \forall n$. However, we might not be able to afford that many sectors for every node. Next, we present a class of network sectorizations under which the auxiliary graph of the sectorized network is \emph{guaranteed} to be bipartite for any even $K_1 = \dots = K_N$, therefore enjoying the benefits described above.


\subsection{Even Homogeneous Sectorization}

We now present the class of \emph{Even Homogeneous Sectorizations}, denoted by $\setHomSector(\numSector) \subseteq \setSector(\numSectorVec)$ for $\numSector \in \mathbb{Z}_{+}$ and $\numSectorVec = (K, \cdots, K) \in \mathbb{Z}^{\numNode}_{+}$. For a network $\graphNet$ with sectorization $\sector \in \setHomSector(\numSector)$:
(\emph{i}) every node has the same even number of sectors, i.e., $\numSector_{n} = \numSector, \forall n$ with an even number $\numSector$,
(\emph{ii}) all the $\numSector$ sectors of a node have an equal FoV of $360^{\circ}/\numSector$, and
(\emph{iii}) at least one sectoring axis (and hence all sectoring axes) are parallel across the nodes.
Although (\emph{ii}) and (\emph{iii}) may seem restrictive, based on the notion of equivalent sectorization described in section \ref{ssec:auxiliary-equivalent-sectorizations}, our results that hold for all sectorizations in $\setHomSector(\numSector)$ also hold for their corresponding equivalent sectorizations in $\setSector(\numSectorVec)$.


The following lemma shows that for a sectorized network $\graphNet$ with $\sector \in \setHomSector(\numSector)$, its auxiliary graph $\graphAux$ is bipartite.

\begin{lemma}
\label{lemma:k_disconnected_bipartite_clusters}
For a sectorized network $\graphNet$ with an Even Homogeneous Sectorization $\sector \in \setHomSector(\numSector)$ and $\numSector \in \mathbb{Z}_{+}$, its auxiliary graph $\graphAux$ is bipartite. Moreover, $\graphAux$ consists of $\numSector/2$ isolated bipartite graphs.
\end{lemma}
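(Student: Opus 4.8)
The plan is to reduce the whole statement to a single geometric fact about antipodal directions, after which the bipartition and the splitting into $\numSector/2$ pieces become pure bookkeeping. First I would fix the common angular labeling induced by $\sector \in \setHomSector(\numSector)$: since every node has the same even number $\numSector$ of sectors, each of equal field of view $360^{\circ}/\numSector$, and since at least one (hence every) sectoring axis is parallel across nodes, all nodes induce the \emph{same} partition of the circle of directions into $\numSector$ arcs $A_{1}, \dots, A_{\numSector}$ of width $360^{\circ}/\numSector$. Concretely, for any node $n$, the sector $\nodeSector{n}{k}$ is exactly the cone of directions whose azimuth lies in the arc $A_{k}$ (translated to be centered at $\nodeLoc{n}$). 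The key numerical observation I would record here is that $180^{\circ} = (\numSector/2)\cdot(360^{\circ}/\numSector)$, i.e.\ a half-turn equals a shift by exactly $\numSector/2$ sectors.

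Next I would prove the antipodal-sector lemma. Consider any feasible edge of $\graphAux$, which by the isomorphism $\isomorphism$ corresponds to a feasible link $\directLink{n}{k}{n'}{k'}$. Condition (\textbf{C2}) forces $k$ to be the unique sector of $n$ containing the direction $\angle(n \to n')$, and $k'$ to be the unique sector of $n'$ containing the direction $\angle(n' \to n)$. Because these two directions are exactly antipodal, $\angle(n'\to n) = \angle(n\to n') + 180^{\circ}$, the shared arc labeling and the half-turn identity give $k' \equiv k + \numSector/2 \pmod{\numSector}$. Thus \emph{every} edge of $\graphAux$ joins a vertex $n_{k}$ to a vertex $n'_{k'}$ whose sector indices differ by precisely $\numSector/2$ modulo $\numSector$.

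With this relation in hand, I would finish as follows. Group the sector indices into the $\numSector/2$ antipodal pairs $\{j,\, j+\numSector/2\}$ for $j = 1, \dots, \numSector/2$, and let $\graphAux_{j}$ be the subgraph of $\graphAux$ induced by the vertices $\{\, n_{j},\, n_{j+\numSector/2} : n \in \setNode \,\}$. Since the lemma says both endpoints of any edge have indices lying in the same antipodal pair, no edge can run between distinct $\graphAux_{j}$'s; hence $\graphAux$ is the vertex-disjoint, edge-disjoint union of the $\numSector/2$ pieces $\graphAux_{1}, \dots, \graphAux_{\numSector/2}$, which establishes the ``$\numSector/2$ isolated graphs'' claim. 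For bipartiteness of each $\graphAux_{j}$, take the bipartition $A_{j} = \{ n_{j} : n \in \setNode\}$ and $B_{j} = \{ n_{j+\numSector/2} : n \in \setNode\}$; the lemma shows every edge of $\graphAux_{j}$ goes from an index-$j$ vertex to an index-$(j+\numSector/2)$ vertex, i.e.\ strictly between $A_{j}$ and $B_{j}$ (there is no internal edge, because $\numSector/2 \not\equiv 0 \pmod{\numSector}$ for $\numSector \ge 2$). A disjoint union of bipartite graphs is bipartite, completing the proof.

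The main obstacle is entirely contained in the antipodal-sector lemma, namely making the index identity $k' \equiv k + \numSector/2 \pmod{\numSector}$ airtight; everything after it is combinatorial. This step is exactly where the three hypotheses of $\setHomSector(\numSector)$ are consumed (equal FoV $\Rightarrow$ uniform arc width, parallel axes $\Rightarrow$ a single shared labeling across nodes, and evenness of $\numSector$ $\Rightarrow$ $\numSector/2$ is an integer shift). I would also flag the degenerate case of a node lying exactly on a sectoring axis of another node, where the containing arc is ambiguous; this I would dispose of by appealing to the equivalent-sectorization convention of Section~\ref{ssec:auxiliary-equivalent-sectorizations}, under which such boundary placements do not change $\graphAux$ up to the relevant equivalence.
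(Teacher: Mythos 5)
Your proposal is correct and follows essentially the same route as the paper: both hinge on the single identity $k' = (k + \numSector/2) \bmod \numSector$ for the sectors joined by any feasible link, and then read off the decomposition into $\numSector/2$ isolated bipartite subgraphs indexed by antipodal sector pairs. The only difference is that you spell out the ``simple geometric calculations'' (antipodal directions plus uniform arc width plus the shared parallel labeling) that the paper leaves implicit, and you additionally flag the boundary-placement degeneracy, which is a welcome tightening rather than a departure.
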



\begin{proof}
Without loss of generality, we assume that the parallel sectoring axes across different nodes are labeled with the same index, i.e., for a given $k \in [\numSector]$, $\{ \nodeSectorAxis{n}{k}: \forall n \}$ is a set of parallel sectoring axes. Recall from section \ref{ssec:model-network} that for node $n$, we call the boundary of two adjacent sectors $k$ and $(k+1)$ \emph{a sectoring axis} and denote it by $\nodeSectorAxis{n}{k}$. We define the sectoring axis between sector $K$ and sector $1$ with $\nodeSectorAxis{n}{K}$. We also assume that for node $n$, its sectoring axes are indexed from $k = 1$ to $\numSector$ clockwise.
Using simple geometric calculations, it can be shown that under even homogeneous sectorization $\sector \in \setHomSector(\numSector)$, the $k^{\textrm{th}}$ sector of node $n$, $\nodeSector{n}{k}$, can only share a link in $\setLink^{\sector}$, with the $k'^{\textrm{th}}$ sector of node $n'$, $\nodeSector{n'}{k'}$, $\forall n' \in \setVertex, n' \neq n, \forall k \in [\numSector]$, where $k'$ \emph{necessarily} satisfies
\begin{align*}
k' = (k + \numSector/2) \mod \numSector.
\end{align*}
As a result, in the auxiliary graph $\graphAux = (\setVertex^{\sector}, \setEdge^{\sector})$ (see section \ref{ssec:auxiliary-graph-construction}), the vertices in $\setVertex^{\sector}$ associated with the $k^{\textrm{th}}$ sector of node $n$ can only be connected with the vertices in $\setVertex^{\sector}$ associated with the $k'^{\textrm{th}}$ sector of node $n' \in \setNode, n' \neq n$, and vice versa. Therefore, $\graphAux$ is a bipartite graph and consists of $\numSector/2$ completely isolated bipartite graphs.
\end{proof}

Note that Lemma~\ref{lemma:k_disconnected_bipartite_clusters} has the following implications on the complexity of obtaining the MWM in the backpressure algorithm and hence the stabilization of the network. Let $NB(|\setVertex|, |\setEdge|, W)$ denote time complexity of obtaining the MWM in a general (non-bipartite) graph $\graph = (\setVertex, \setEdge)$, and $B(|\setVertex_{bi}|, |\setEdge_{bi}|, W)$ denote the time complexity of obtaining the MWM in a bipartite graph $\graph_{bi} = (\setVertex_{bi}, \setEdge_{bi})$, where $W := ||\mathbf{w}||_{\infty}$ is the maximum value in the integer-valued edge weight vector $\mathbf{w}$ of the corresponding graph. In our case, the weight on each edge is its backpressure
(see section \ref{ssec:model-capacity}). It is known that with the same values of $|\setVertex|$, $|\setEdge|$, and $W$, $B(|\setVertex|, |\setEdge|, W)$ is asymptotically better than $NB(|\setVertex|, |\setEdge|, W)$ (e.g., see~\cite{schrijver2003combinatorial}). We present the following theorem.

\begin{theorem}
\label{thm:speedup}
For an unsectorized network $\graph = (\setNode, \setLink)$ with $|\setNode| = \numNode$ and $|\setLink| = \numLink$, each iteration of the dynamic backpressure algorithm needs an algorithm of time complexity $NB(\numNode, \numLink, W)$ to obtain the MWM in $\graph$.
When the network is sectorized with an Even Homogeneous Sectorization $\sector \in \setHomSector(\numSector)$ to construct $\graphNet$ (with its auxiliary graph $\graphAux$), each iteration of the dynamic backpressure algorithm needs $\numSector/2$ distributed computations of an algorithm of $B(\numNode, \numLink, W)$ time complexity to obtain the MWM in $\graphAux$.
\end{theorem}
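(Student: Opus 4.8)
The plan is to split the theorem into its two claims and to lean on two facts already established in the excerpt: the equivalence between the backpressure maximum-weight schedule (MWS) and the maximum-weight matching (MWM) on the auxiliary graph (Section~\ref{sec:flow-extension}), and the structural decomposition of $\graphAux$ from Lemma~\ref{lemma:k_disconnected_bipartite_clusters}. The first claim is essentially bookkeeping, while the second is where the content lies.

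For the unsectorized network I would first recall that $\graph \equiv H$ and that, by Lemma~\ref{lem:schedule-matching} specialized to $\numSector_{n}=1$, each backpressure iteration solves $\arg\max_{\scheduleVec \in \scheduleSet_{\graph}} \{ \bpVec^{\top}(t) \cdot \scheduleVec \}$, which is exactly an MWM of $H \equiv \graph$ with the backpressures $\bpVec(t)$ as edge weights. Since $\graph$ is in general a non-bipartite graph on $|\setNode| = \numNode$ vertices and $|\setLink| = \numLink$ edges, this requires a general-graph MWM routine, whose cost is by definition $NB(\numNode, \numLink, W)$ with $W = \|\mathbf{w}\|_{\infty}$ the largest backpressure. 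No further argument is needed here.

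The substantive part concerns the sectorized network. By Lemma~\ref{lemma:k_disconnected_bipartite_clusters}, under $\sector \in \setHomSector(\numSector)$ the auxiliary graph $\graphAux$ is bipartite and is the disjoint union of $\numSector/2$ pairwise isolated bipartite subgraphs $H_{1}, \ldots, H_{\numSector/2}$, where $H_{i}$ consists of the vertices associated with sectors $i$ and $(i+\numSector/2)$ across all nodes. I would then observe that MWM separates over isolated components: since no edge of $\graphAux$ joins two distinct $H_{i}$, the per-vertex matching constraints decouple and the objective $\bpVec^{\top} \matchingVec$ is additive across components, so any MWM of $\graphAux$ is the union of independently computed MWMs on each $H_{i}$. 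Consequently the $\numSector/2$ subproblems can be solved in parallel (distributed). The closing step is a counting argument for the per-component parameters: each $H_{i}$ has exactly $\numNode$ vertices on each side (one vertex $n_{i}$ and one vertex $n_{i+\numSector/2}$ per node $n$), hence $2\numNode = O(\numNode)$ vertices, and the edge sets of the $H_{i}$ partition $\setEdge^{\sector}$, so each $H_{i}$ carries at most $\numLink$ edges. Thus each subproblem is a bipartite MWM of cost $B(\numNode, \numLink, W)$, and collecting these observations yields the claimed $\numSector/2$ distributed computations of complexity $B(\numNode, \numLink, W)$.

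The main obstacle, and it is a mild one, is justifying that MWM decomposes over the isolated components and that this genuinely licenses distributed computation; the heavy structural lifting (bipartiteness and isolation) is already discharged by Lemma~\ref{lemma:k_disconnected_bipartite_clusters}. A secondary care point is matching the per-component parameters to the stated $B(\numNode, \numLink, W)$: one must note that the $2\numNode$ vertices and $\le \numLink$ edges per component reduce to $B(\numNode, \numLink, W)$ only up to constant factors (since $B$ is polynomial and monotone in its arguments), and that the weight bound $W$ in each subproblem is the global maximum backpressure. With these remarks the theorem follows directly.
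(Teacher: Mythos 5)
Your proposal is correct and follows essentially the same route as the paper's own proof: invoke Lemma~\ref{lemma:k_disconnected_bipartite_clusters} to decompose $\graphAux$ into $\numSector/2$ isolated bipartite subgraphs, count $2\numNode$ vertices and at most $\numLink$ edges per component, and observe that the isolated components admit parallel MWM computations of cost $B(\numNode, \numLink, W)$ each. Your explicit justification that the MWM objective and constraints decouple across isolated components is a small but welcome addition that the paper leaves implicit.
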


\begin{proof}
For a sectorized network $\graphNet$ with even homogeneous sectorization $\sector \in \setHomSector(\numSector)$, in order to compute the asymptotic complexity of each iteration of the dynamic backpressure algorithm, we should think about each bipartite subgraph in the auxiliary graph, $\graphAux$, separately. Lemma~\ref{lemma:k_disconnected_bipartite_clusters} shows that $\graphAux$ can be divided into $\numSector/2$ isolated subgraphs, all of which are bipartitle. In every bipartite subgraph, each partition set consists of all the auxiliary nodes associated with the same sector in the sectorized network $\graphNet$. In particular, each bipartite subgraph is composed of two sets of vertices: the first set consists of the auxiliary nodes that correspond to a sector $k \in [\numSector]$, and the second set consist of the auxiliary nodes that correspond to the sector $(k + \numSector/2) \mod\numSector$. Therefore, each bipartite subgraph consists of $\frac{2 |\setVertex^{\sector}|}{\numSector} = \frac{2 \numSector |\setNode|}{\numSector} = 2 \numNode$ vertices. For the edges, we can deduce that the $|\setLink| = \numLink$ number of edges of $\graphNet$ are divided and distributed across the $\numSector/2$ bipartite subgraphs. The specific distribution of the edges between the subgraphs depends on the network sectorization and the topology of $\graphNet$.

For the dynamic backpressure algorithm in $\graphNet$ in each time slot, we should simply calculate the MWM of these $\numSector/2$ isolated bipartite subgraphs. Let us distinguish these isolated bipartite subgraphs by subscript $b \in [\numSector/2]$. Let $\setVertex_{b}^{\sector}$, $\setEdge_{b}^{\sector}$, and $W_b$ denote the set of nodes, set of edges, and the maximum edge weight in the $b^{\textrm{th}}$ isolated bipartite subgraph of $\graphAux$, respectively. Then, obtaining the MWM in the $b^{\textrm{th}}$ bipartite subgraph required an algorithm with time complexity $B(|\setVertex_{b}^{\sector}|, |\setEdge_{b}^{\sector}|, W_b)$ where $|\setEdge_{n}^{\sector}| \leq \numLink$ and $W_{b} \leq W$. Note that since the $\numSector/2$ bipartite subgraphs are completely isolated, we can calculate the MWM in each of them in parallel.
\end{proof}

Lemma~\ref{lemma:k_disconnected_bipartite_clusters} and Theorem~\ref{thm:speedup} show that the MWM can be obtained by a distributed algorithm among $K/2$ isolated bipartite subgraphs and, in addition, in every bipartite subgraph, the MWM can be obtained distributedly across the nodes in that subgraph (e.g., according to~\cite{wattenhofer2004distributed}). More importantly, it is implied that sectorizing the network with $\sector \in \setHomSector(\numSector)$ can \emph{simultaneously} improve the network capacity and reduce the complexity required for every iteration of the backpressure algorithm. Since the backpressure algorithm is dynamic and is executed in each time slot, we can achieve significant improvement in the overall efficiency when stabilizing the considered sectorized network. 

Having shown that the set of even homogeneous sectorizations, $\mathcal{H}$, is beneficial regarding the complexity and time required for stabilizing the network, we now take a second look at {\opt} and {\optApprox}. From section \ref{ssec:auxiliary-background-matching-polytopes} and Remark~\ref{remark:bipartiteequality}, when the network sectorization is optimized over $\setHomSector(\numSector) \subseteq \setSector(\numSectorVec)$, since $\polytope_{\graphAux} = \polytopeQ_{\graphAux}$, it holds that for a given network flow $\flowVec$ and $\forall \numSector \in \mathbb{Z}_{+}$,
\begin{align*}
\flowExtensionOpt^{\sector}(\flowVec) & = \flowExtensionQOpt^{\sector}(\flowVec), \forall \sector \in \setHomSector(\numSector),
\end{align*}
and therefore, the network sectorization optimized over the set $\setHomSector(\numSector)$, denoted by $\widehat{\sector}(\flowVec)$, is given by
\begin{align*}
\widehat{\sector}(\flowVec) := \arg \max_{\sector \in \setHomSector(\numSector)} \flowExtensionOpt^{\sector}(\flowVec) = \arg \max_{\sector \in \setHomSector(\numSector)} \flowExtensionQOpt^{\sector}(\flowVec).
\end{align*}
This reveals the trade-offs between the optimality of the obtained sectorization (i.e., $\widehat{\sector}$ compared with $\sectorOpt$) and the time complexity required for obtaining the MWM in the dynamic backpressure algorithm (i.e., optimization over the set of $\setHomSector(\numSector)$ compared with over the set of $\setSector(\numSectorVec)$).

\else
\fi

\iffullpaper
\section{Joint Routing, Scheduling, and Dynamic Sectorization}
\label{sec:routing}

In Section~\ref{sec:flow-extension}, we presented the throughput-optimal policy for a single sectorization scenario, and in Section~\ref{sec:optimization}, we optimize the choice of sectorization to maximize the capacity in a single direction. Additionally, in Section VII, we introduced the case of Even Homogeneous Sectorization, where all nodes are sectorized uniformly, ensuring that the auxiliary graph is bipartite. This bipartite structure not only simplifies scheduling but also accelerates network stabilization compared to unsectorized networks. These approaches demonstrate the power of sectorization, but they assume a fixed configuration that is not to be changed throughout the network operation.

Furthermore, recent advancements in electronically and/or mechanically steerable antenna arrays might enable the capability to change sectorization over time in a practical manner. This capability enables the network to dynamically leverage capacity increases in multiple flow directions simultaneously,\footnote{Allowing dynamic sectorization enables the network to achieve a capacity region that is the convex hull of the capacities associated with fixed sectorization configurations, effectively expanding the achievable throughput.} leading to an overall enhancement in end-to-end capacity by combining multiple optimal sectorization states instead of committing to a single configuration. This raises a fundamental question: What happens if the sectorizations can be changed dynamically? In such cases, a dynamic sectorization strategy that adapts to the network state is necessary to fully exploit the benefits of sectorized wireless networks.
Below, we present an opportunistic dynamic sectorization approach that leverages our proposed sectorization algorithm (Section~\ref{ssec:opportunistic_dynamic_sectorization_algorithm}), and provide a throughput optimal policy for joint sectorization, scheduling, and routing (Section~\ref{ssec:joint_throughput_optimality}).

\subsection{Opportunistic Dynamic Sectorization via Backpressure-Driven Optimization}
\label{ssec:opportunistic_dynamic_sectorization_algorithm}
This approach first leverages the backpressure algorithm to establish a current network flow. Once this flow is determined,\footnote{The network flow can be estimated by averaging observed link flows over time.} we apply our optimization framework to select the most effective sectorization configuration given the prevailing traffic conditions. Instead of performing sectorization jointly with routing and scheduling in each time slot, this method allows the backpressure algorithm to first indicate a network flow and depending on how frequently we can adjust the sectorization, we can optimize it.

More specifically, starting with an (unknown and/or time-varying) arrival rate matrix $\arrivalRateMat \in \textrm{int}(\capRegion(\graphNet))$ and a given sectorization $\sector \in \setSector(\numSectorVec)$, the dynamic backpressure algorithm will converge to and return a network flow $\flowVec \in \polytope_{\graphAux}$. Using the proposed framework, one can find the sectorization that approximates the best sectorization with respect to $\flowVec$. The rationale behind this is that the sectorized network will be able to maintain arrival rates proportionally higher than $\arrivalRateMat$. Moreover, when ${\flowExtensionOpt^{\sectorApprox}(\flowVec)}$ is analytically tractable, it can provide information about how much $\flowVec$ can be extended until it intersects with the boundary of the matching polytope $\polytope_{H^{\sectorApprox}}$. Therefore, the proposed framework can enable \emph{dynamic sectorization} of the network to adapt to every network flow $\flowVec$ obtained by the backpressure algorithm, including in scenarios with time-varying arrival rates, $\arrivalRateMat$.

While this method is not throughput-optimal, it provides a structured approach to optimizing sectorization periodically based on the current network state and as shown in Section~\ref{sec:evaluation} is faster than the throughput optimal algorithm proposed in the next section.

\subsection{Optimal Joint Sectorization, Routing, and Scheduling}
\label{ssec:joint_throughput_optimality}

In Section~\ref{sec:flow-extension}, we formulated backpressure routing as the joint throughput-optimal scheduling and routing with fixed sectorization. However, with dynamic sectorization, the per time-slot control space includes all possible sectorizations:
\begin{align}
\label{eq:dynamic-sectorization-schedule}
\scheduleVec^{\textrm{BP}}(t) 
& :=
\argmax\limits_{\scheduleVec \in \bigcup\limits_{\sector \in \setSector(\numSectorVec)} \scheduleSet_{\graphNet}}\ \big\{ \bpVec^{\top}(t) \cdot \scheduleVec \big\} 
\nonumber
\\
& =
\arg \max\limits_{\matchingVec \in \matchingSet_{H}^{\numSectorVec}}\ \big\{ \bpVec^{\top}(t) \cdot \matchingVec \big\}.
\end{align}

Here, $\bpVec(t)$ denotes the vector of queue differentials (i.e., backpressure) at time~$t$, 
$\setSector(\numSectorVec)$ represents all possible sectorizations (each node $n$ having $\numSectorVec_n$ sectors),
and $\scheduleSet_{\graphNet}$ is the set of all feasible schedules \emph{under a particular sectorization}.
Equivalently, we can view $\bigcup_{\sector\in\setSector(\numSectorVec)} \scheduleSet_{\graphNet}$
as the set of \emph{all} schedules that can be achieved by any valid sectorization choice, 
subject to the constraint that no node can activate more outgoing or incoming links than its allowed sector count. In the right-hand side of~\eqref{eq:dynamic-sectorization-schedule}, $\matchingSet_{H}^{\numSectorVec}$ denotes all \emph{edge-activation vectors} $\matchingVec$ (i.e., sets of simultaneously active edges) where each node~$n$ can activate at most $K_n$ links total. Each such $\matchingVec$ implicitly defines a feasible sectorization that permits those edges to be active simultaneously. Thus, we can choose an \emph{edge set} first, then retroactively assign a sectorization to realize that edge set in practice. $\matchingSet_{H}^{\numSectorVec}$ corresponds to the set of  \emph{$\bm{b}$-matchings} \cite{schrijver2003combinatorial} (for $\bm{b} :=  \numSectorVec$):
\emph{each node may participate in at most $b_n$ active edges.} The goal becomes selecting the subset of edges with the maximum total backpressure weight $\bpVec^{\top}(t)\cdot \matchingVec$ under the node-degree constraints.

A natural way to represent the above optimization is through the following Integer Linear Program (ILP). 
We introduce binary variables \(\schedule_{nm}\in\{0,1\}\) indicating whether the link~\((n,m)\) is active. 
Let \(w_{nm} = \bpVec_{nm}(t)\) be the backpressure weight for~\((n,m)\). 
Then the per-slot optimization becomes:
\begin{align}
\label{eq:ilp-dynamic-sector}
\max_{\{\schedule_{nm}\}} \quad 
& \sum_{(n,m)\,\in\,\setLink} w_{nm}\,\schedule_{nm}
\\[2pt]
\text{subject to} \quad
& \sum_{m \,:\, (n,m)\,\in\,\setLink}\,
  \schedule_{nm} 
  ~\le~ K_n,
  \quad
  \forall \,n\in\setNode,
\nonumber
\\[2pt]
& \schedule_{nm} \,\in\,\{0,1\},
  \quad
  \forall\,(n,m)\in\setLink.
\nonumber
\end{align}
The constraint \(\sum_{m}\schedule_{nm}\le K_n\) ensures that each node \(n\) cannot activate more links than its sector count, while the objective maximizes the total backpressure-weighted sum. In essence, \eqref{eq:ilp-dynamic-sector} describes a \emph{maximum weighted b-matching} in the graph of unsectorized nodes~\(H\). 
We refer the reader to~\cite{schrijver2003combinatorial} for a detailed discussion of b-matchings in combinatorial optimization.
Solving \eqref{eq:ilp-dynamic-sector} exactly in each slot provides a \emph{throughput-optimal} dynamic sectorization policy. In practice, however, large-scale networks render per-slot exact solutions computationally prohibitive. Consequently, many systems could employ \emph{approximation methods} (e.g., LP-based heuristics, iterative algorithms, or greedy selection) and/or \emph{periodic re-sectorization} (solving the LP/Max-Weight at finite intervals~\cite{georgiadis2006resource}), preserving near-optimal throughput while substantially reducing overhead.

\else
\fi

\section{Evaluation}
\label{sec:evaluation}

We now evaluate the sectorization gain and the performance of the distributed approximation algorithm via simulations.
We focus on:
(\emph{i}) an example 7-node network, and
(\emph{ii}) random networks with varying number of nodes, number of sectors per node, and network flows.
For each network and a given network flow, $\flowVec$, we consider:
\begin{itemize}[leftmargin=*, topsep=3pt, itemsep=3pt]
\item
$\sectorApproxAlgo_{n}(\flowVec_{n})$: the sectorization of node $n$ returned by the distributed approximation algorithm, {\algoName} (Algorithm~\ref{algo:find-sector} in Section~\ref{ssec:optimization-algorithm}), and $\sectorApproxAlgo(\flowVec) = (\sectorApproxAlgo_{n}(\flowVec_{n}): \forall n \in \setNode)$ is the sectorization of all nodes.
\item
$\flowExtensionQOpt^{\sectorApproxAlgo}(\flowVec)$ and $\flowExtensionQOpt^{\varnothing}(\flowVec)$: the approximate flow extension ratios for the sectorized and unsectorized networks, respectively (Section~\ref{sec:flow-extension}).
\item
$\sectorizationGainQ^{\sectorApproxAlgo}(\flowVec)$: the approximate sectorization gain achieved by $\sectorApproxAlgo(\flowVec)$ (Section~\ref{sec:flow-extension}).
\end{itemize}

\subsection{An Example 7-node Network}
\label{ssec:evaluation-example-network}

\begin{figure}[!t]
\centering
\includegraphics[width=0.95\columnwidth]{./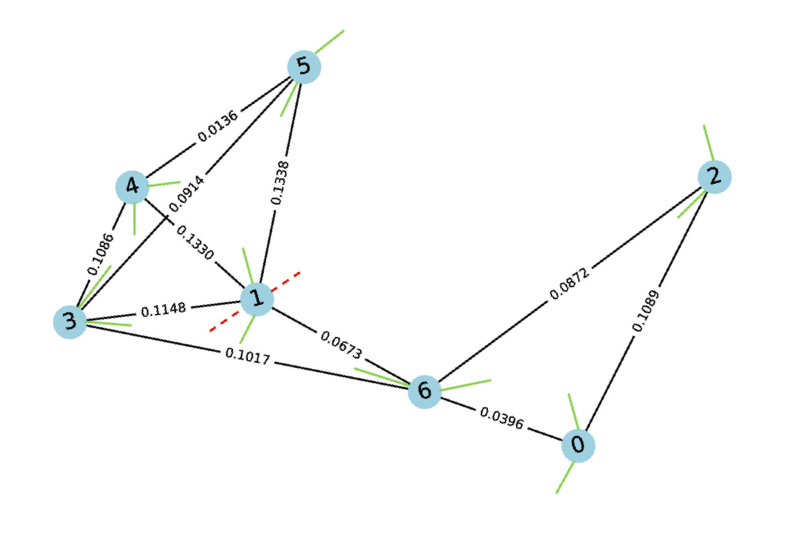}
\vspace{-3mm}
\caption{An example 7-node network: the connectivity graph with the network flow $\flowVec$ labeled on each edge. The green lines indicate the node sectorization, $\sectorApproxAlgo(\flowVec)$, obtained via {\algoName} with $\numSector_{n} = 2, \forall n$ with a sectorization gain of 1.83. The red dashed lines indicate a ``misconfigured'' sectorization of node 1 in the bottleneck phenomenon.}
\label{fig:example-network}
\vspace{-3mm}
\end{figure}

\begin{figure*}[!t]
\centering
\subfloat[]{
\includegraphics[height=1.5in]{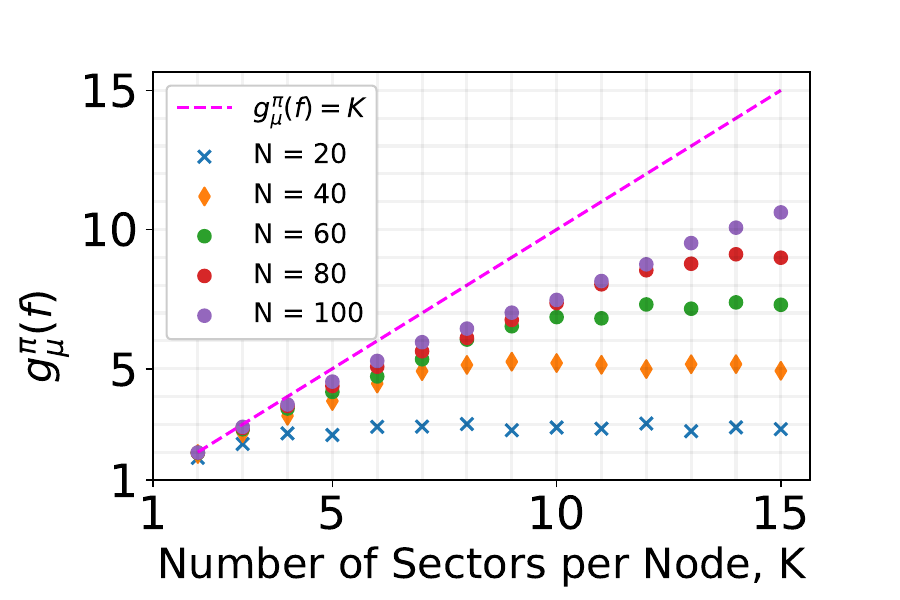}
\label{fig:eval-random-network-varying-N}}
\subfloat[]{
\includegraphics[height=1.5in]{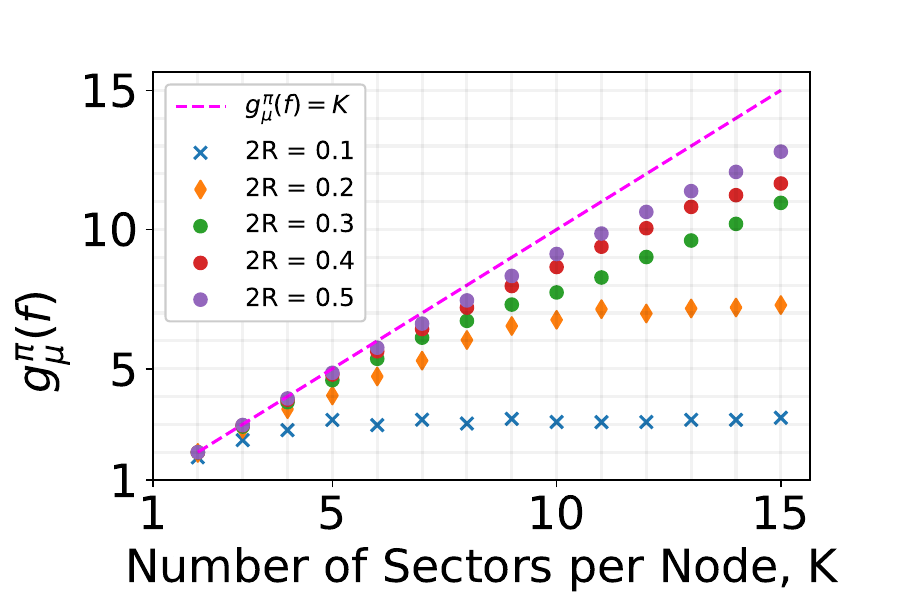}
\label{fig:eval-random-network-varying-density}}
\subfloat[]{
\includegraphics[height=1.5in]{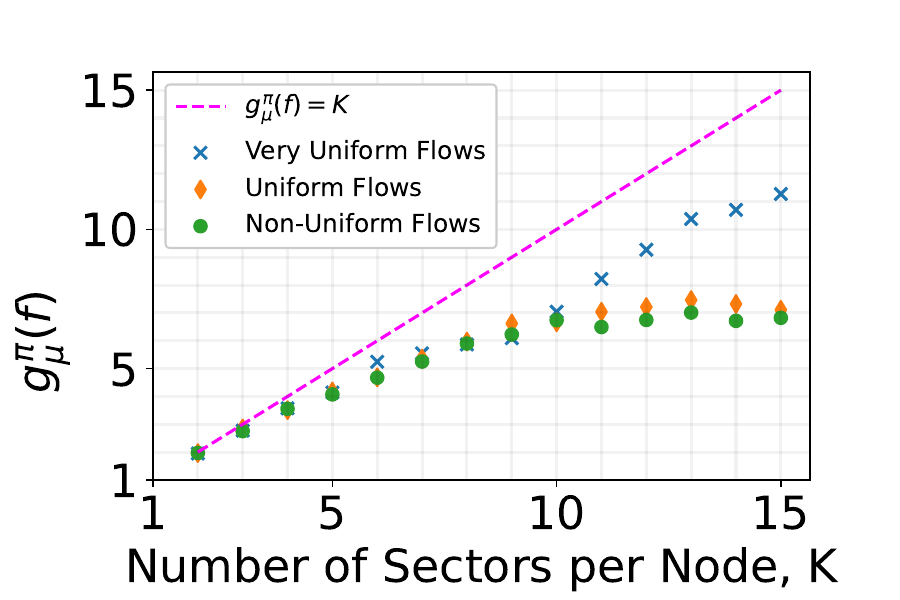}
\label{fig:eval-random-network-varying-flow}}
\vspace{-1mm}
\caption{The approximate sectorization gain, $\sectorizationGainQ^{\sectorApproxAlgo}(\flowVec)$, as a function of the number of sectors per node, $\numSector$, in random networks:
(a) varying number of nodes $\numNode \in \{20, 40, 60, 80, 100\}$ in networks with $2\nodeRange = 0.2$ and \emph{Uniform} flows,
(b) varying communication range $2\nodeRange \in \{0.1, 0.2, 0.3, 0.4, 0.5\}$ with \emph{Uniform} flows and $\numNode = 60$, and
(c) varying network flows (\emph{Non-uniform}, \emph{Uniform}, \emph{Very Uniform}) in networks with $\numNode = 60$ and $2\nodeRange = 0.2$.}
\label{fig:eval-random-network}
\vspace{-2mm}
\end{figure*}

We consider a 7-node network, whose connectivity graph is shown in Fig.~\ref{fig:example-network}, with a network flow $\flowVec$ labeled on each edge and $\theta_{\textrm{th}} = 15.8^{\circ}$ (see Section~\ref{ssec:model-interference}). For tractability and illustration purposes, we set $\numSector_{n} = 2, \forall n$, and the green lines in Fig.~\ref{fig:example-network} indicate the sectorization $\sectorApproxAlgo(\flowVec)$ returned by {\algoName}.
For this relatively small network, we can explicitly compute the flow extension ratios for both the sectorized network, $\flowExtensionOpt^{\sectorApproxAlgo}(\flowVec) = \flowExtensionQOpt^{\sectorApproxAlgo}(\flowVec) = 4.06$, and the unsectorized network, $\flowExtensionOpt^{\varnothing}(\flowVec) = \flowExtensionQOpt^{\varnothing}(\flowVec) = 2.22$. Therefore, the approximate sectorization gain $\sectorizationGainQ^{\sectorApproxAlgo}(\flowVec)$ is equal to the explicit sectorization gain, i.e., $\sectorizationGainQ^{\sectorApproxAlgo}(\flowVec) = \sectorizationGainP^{\sectorApproxAlgo}(\flowVec) = 1.83$, which is close to $\numSector_{n} = 2$.

Note that optimizing the sectorization of each node under a given $\flowVec$ is critical, since the misplacement of the sectoring axes of even a single node can largely affect the achievable sectorization gain. We call this effect the \emph{bottleneck phenomenon} in sectorized networks, as illustrated by the following example. Considered the optimized sectorization $\sectorApproxAlgo(\flowVec)$ shown by the green lines in Fig.~\ref{fig:example-network}. If only the sectorization of node 1 is ``misconfigured" to be the red dashed lines, the sectorization gain is decreased from 1.83 to 1.22. This is also intuitive since with this misconfiguration, all three edges incident to node 1 with the highest flows are served by the same sector.

Since nodes 1, 3, and 6 in Fig.~\ref{fig:example-network} have a maximum node degree of 4, we also obtain a sectorization $\sectorApproxAlgo(\flowVec)$ by running {\algoName} with $\numSector_{n} = 4$. As expected, in the optimized sectorization for nodes 1, 3, and 6, one sectoring axis is put between every pair of adjacent edges. With this $\sectorApproxAlgo(\flowVec)$, we can also explicitly compute the flow extension ratios for the sectorized network $\flowExtensionOpt^{\sectorApproxAlgo}(\flowVec) = \flowExtensionQOpt^{\sectorApproxAlgo}(\flowVec) = 7.47$.
\iffullpaper
This also confirms Remark \ref{remark:bipartiteequality} since the corresponding auxiliary graph $\graph^{\sectorApproxAlgo}$ is bipartite under $\sectorApproxAlgo(\flowVec)$. 
\else
\fi
With $\flowExtensionOpt^{\varnothing}(\flowVec) = \flowExtensionQOpt^{\varnothing}(\flowVec) = 2.22$, the sectorization gain is $\sectorizationGainQ^{\sectorApproxAlgo}(\flowVec) = \sectorizationGainP^{\sectorApproxAlgo}(\flowVec) = 3.36$. This example 7-node network demonstrates the performance and flexibility of {\algoName} for optimizing the deployment and configuration of sectorized networks based on the network flows.

\subsection{Random Networks}
\label{ssec:evaluation-random-network}

We now consider networks with randomly generated connectivity graphs, $\graph$. In particular, for each generated random geometric graph, $\numNode$ nodes are placed uniformly at random in a unit square area, and two nodes are joined by an edge if the distance between them is less than $2\nodeRange$.
%
%
We are interested in the effects of the following parameters of a random network on the sectorization gain:
\begin{itemize}[leftmargin=*, topsep=0pt]
\item
\textbf{Number of Nodes}, $\numNode$:
We consider random networks with different sizes of $\numNode \in \{20, 40, 60, 80, 100\}$, and the network density increases with larger values of $\numNode$.
\end{itemize}
\begin{itemize}[leftmargin=*, topsep=0pt]
\item
\textbf{Number of Sectors Per Node}, $\numSector_{n}$: We assume all nodes have an equal number of sectors, $\numSector_{n} = \numSector, \forall n$, with $\numSector \in \{2, 3, \dots, 15\}$.
\end{itemize}
\begin{itemize}[leftmargin=*, topsep=0pt]
\item
\textbf{Communication Range}, $2\nodeRange$:
With a given number nodes, $\numNode$, the connectivity of the network can be tuned by the communication range between two nodes, $2\nodeRange$. We consider
$2\nodeRange \in \{0.1, 0.2, \dots, 0.5\}$.
\end{itemize}
\begin{itemize}[leftmargin=*, topsep=0pt]
\item
\textbf{Uniformity of Network Flows}, $\phi$: For a network flow $\flowVec$, we define its \emph{uniformity} by $\phi := \max_{\edge} \flow_{\edge} / \min_{\edge} \flow_{\edge}$, i.e., $\flowVec$ is more uniform if its $\phi$ is closer to $1$. For a given value of $\phi$, random network flows $\flowVec$ can be generated as follows. First, each element of $\flowVec' = (\flow_{\edge}')$ is independently drawn from a uniform distribution between $[1, \phi]$. Then, $\flowVec$ is set to be $\flowVec'$ after normalization, i.e., $\flowVec = \flowVec' / |\flowVec|$. We consider \emph{Non-uniform}, \emph{Uniform}, and \emph{Very Uniform} network flows with $\phi = 1000, 10, \textrm{and}~1.1$, respectively.
\end{itemize}
For random networks with a large number of nodes, we only consider $\sectorizationGainQ^{\sectorApproxAlgo}(\flowVec)$ since it is computationally expensive to obtain $\sectorizationGainP^{\sectorApproxAlgo}(\flowVec)$, which is the true sectorization gain achieved by $\sectorApproxAlgo(\flowVec)$. However, from Lemma~\ref{lemma:sasaki_approx}, $\sectorizationGainQ^{\sectorApproxAlgo}(\flowVec)$ provides good upper and lower bounds on $\sectorizationGainP^{\sectorApproxAlgo}(\flowVec)$. The performance evaluation for each point is based on 1,000 instances of the random networks and their corresponding $\sectorApproxAlgo(\flowVec)$ obtained by {\algoName}.

\vspace{0.5ex}
\noindent\textbf{Varying Number of Nodes, $N$}.
Fig.~\ref{fig:eval-random-network}\subref{fig:eval-random-network-varying-N} plots the approximate sectorization gain, $\sectorizationGainQ^{\sectorApproxAlgo}(\flowVec)$, as a function of the number of sectors per node, $\numSector$, in a network with $2\nodeRange = 0.2$ and uniform flows ($\phi = 10$), with varying number of nodes, $\numNode$. Observe that $\sectorizationGainQ^{\sectorApproxAlgo}(\flowVec)$ increases sublinearly with respect to $\numSector$, and it approaches the identity line of $\sectorizationGainQ^{\sectorApproxAlgo}(\flowVec) = \numSector$ as $\numNode$ increases, which is as expected. Note that with a practical value of $\numSector$ (e.g., $\numSector \leq 6$), these networks can achieve $\sectorizationGainQ^{\sectorApproxAlgo}(\flowVec)$ that is almost equal to the number of sectors per node, $\numSector$. In addition, as the value of $\numSector$ increases, $\sectorizationGainQ^{\sectorApproxAlgo}(\flowVec)$ deviates from $\numSector$, which reveals a tradeoff point between the achievable sectorization gain ($\sectorizationGainQ^{\sectorApproxAlgo}(\flowVec)$) and complexity of network deployments ($\numNode$ and $\numSector$). In fact, for given parameters $\numNode$, $2\nodeRange$, and $\phi$, i.e., for a given density of the network, there exists a number of sectors that saturates the gain $\sectorizationGainQ^{\sectorApproxAlgo}(\flowVec)$. This is because after a sufficiently large number of sectors, the auxiliary graph of the network breaks down to isolated pairs of nodes.
\iffullpaper
This fact also indicates that this phase transition phenomenon will also hold for the explicit extension ratio, $\sectorizationGainP^{\sectorApproxAlgo}(\flowVec)$, since the graph is becoming eventually bipartite as we increase the number of sectors.
\else
\fi

\vspace{0.5ex}
\noindent\textbf{Varying Communication Range, $2\nodeRange$}.
Fig.~\ref{fig:eval-random-network}\subref{fig:eval-random-network-varying-density} plots the approximate sectorization gain, $\sectorizationGainQ^{\sectorApproxAlgo}(\flowVec)$, as a function of the number of sectors per node, $\numSector$, with $\numNode = 60$, \emph{Uniform} flows, $\phi = 10$, and varying communication ranges, $2\nodeRange$. It can be observed that $\sectorizationGainQ^{\sectorApproxAlgo}(\flowVec)$ increases sublinearly with respect to $\numSector$. As expected, with the same number of nodes, $\numNode = 60$, $\sectorizationGainQ^{\sectorApproxAlgo}(\flowVec)$ is closer to $\numSector$ as the range increases. 
In fact, there is a relationship between the parameters $\numNode$ and $2\nodeRange$: they both increase the number of neighbors for every node. The improved sectorization gains stem from the fact that since each node has a larger number of neighboring nodes (and thus links), a larger value of $\numSector$ and the optimized sectorization can support a larger number of concurrent flows.

\vspace{0.5ex}
\noindent\textbf{Varying Uniformity of Network Flows, $\phi$}.
Fig.~\ref{fig:eval-random-network}\subref{fig:eval-random-network-varying-flow} plots $\sectorizationGainQ^{\sectorApproxAlgo}(\flowVec)$ as a function of the number of sectors per node, $\numSector$, in a network with $\numNode = 60$, $2\nodeRange = 0.2$, and varying network flow uniformity (\emph{Non-uniform}, \emph{Uniform}, and \emph{Very Uniform}). Overall, similar trends can be observed as those in Figs.~\ref{fig:eval-random-network}\subref{fig:eval-random-network-varying-N} and~\ref{fig:eval-random-network}\subref{fig:eval-random-network-varying-density}. In general, more uniform network flows lead to improved values of $\sectorizationGainQ^{\sectorApproxAlgo}(\flowVec)$, since non-uniform flows would have some larger flow components than the uniform ones, which on average increases the value of $\flowExtensionQOpt^{\sector}(\flowVec)$ ({see \eqref{eqn:flow-extension-Q}}). In addition, recall the bottleneck phenomenon described in Section~\ref{ssec:evaluation-example-network}, it is more beneficial to divide the flows of a node more equally across its sectors to achieve an improved sectorization gain.


\begin{figure}[!t]
\centering
\vspace{-3mm}
\subfloat{
\includegraphics[height=1.2in]{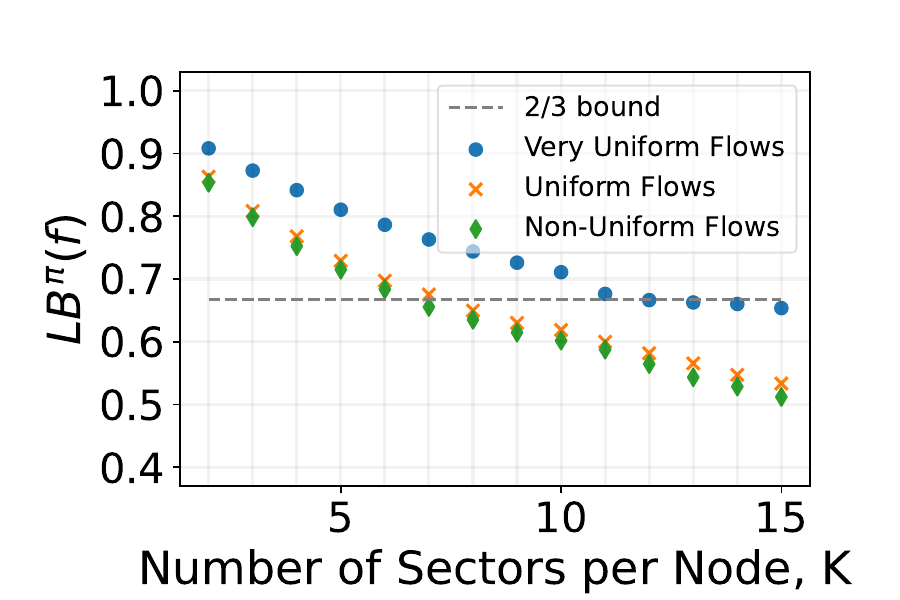}}
\hspace*{-8pt}
\subfloat{
\includegraphics[height=1.2in]{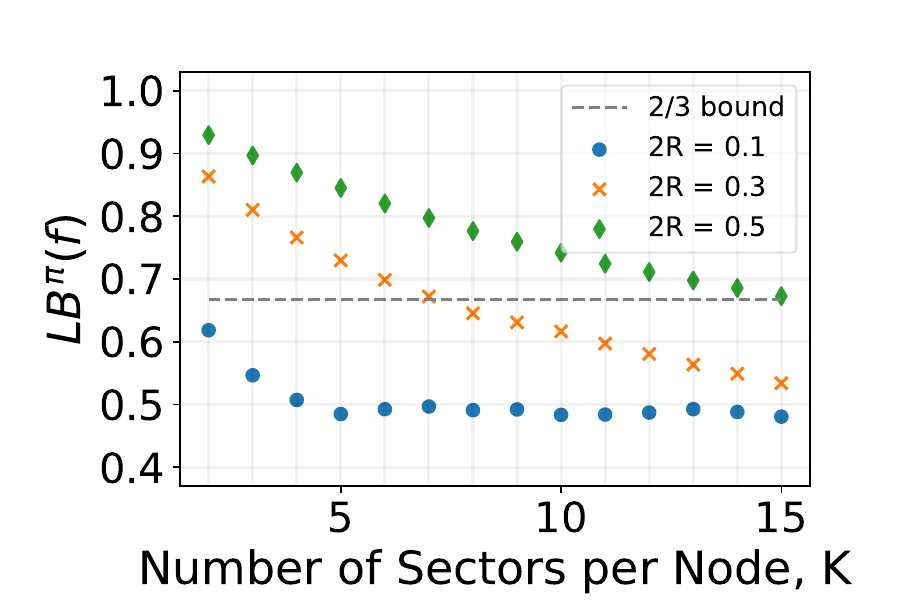}}
\caption{The lower bound of approximation ratio of {\algoName}, $LB^{\sectorApproxAlgo}(\flowVec)$ in {\eqref{eqn:bound_2}}, as a function of the number of sectors per node, $\numSector$, with $\numNode = 60$: (left) varying network flows in a network with $2R = 0.2$, and (right) varying communication ranges with \emph{Uniform} flows.}
\label{fig:eval-approx-bound}
\vspace{-2mm}
\end{figure}

\vspace{0.5ex}
\noindent\textbf{Evaluation of the Lower Bound, $LB^{\sectorApproxAlgo}(\flowVec)$}.
Using simulations, we also evaluate the lower bound of the approximation ratio of {\algoName}, $LB^{\sectorApproxAlgo}(\flowVec)$ in {\eqref{eqn:bound_2}}, which depends on $\flowExtensionQOpt^{\sectorApprox}(\flowVec)$ and $\flowVec$. Fig.~\ref{fig:eval-approx-bound} plots the value of $LB^{\sectorApproxAlgo}(\flowVec)$ as a function of $\numSector$ with varying uniformity of the network flows and node densities. It can be observed that for small values of $\numSector$, $LB^{\sectorApproxAlgo}(\flowVec)$ is much higher than lower bound of $2/3$ provided by Theorem~\ref{thm:main}. 
In particular, the difference between the bounds increases dramatically and the approximation approaches the optimal for a small number of sectors. 
This is because in such networks, the maximum flow is expected to be small and hence $LB^{\sectorApproxAlgo}(\flowVec)$ can be improved when $\flowExtensionQOpt^{\sectorApprox}(\flowVec)$ remains the same.


\begin{figure}[!t]
\centering
\vspace{-2mm}
\subfloat{
\includegraphics[height=1.2in]{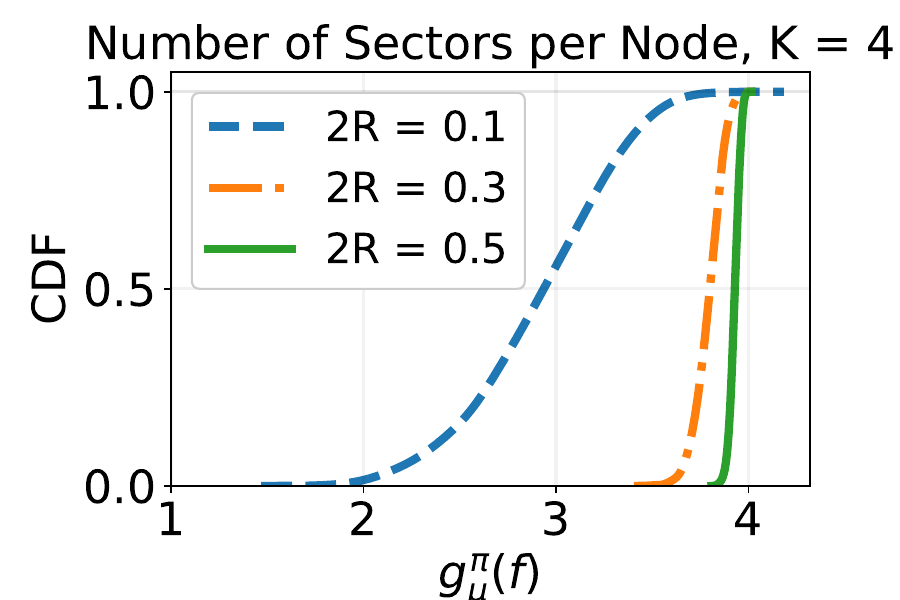}
\label{fig:cdf1}}
\hspace*{-16pt}
\subfloat{
\includegraphics[height=1.2in]{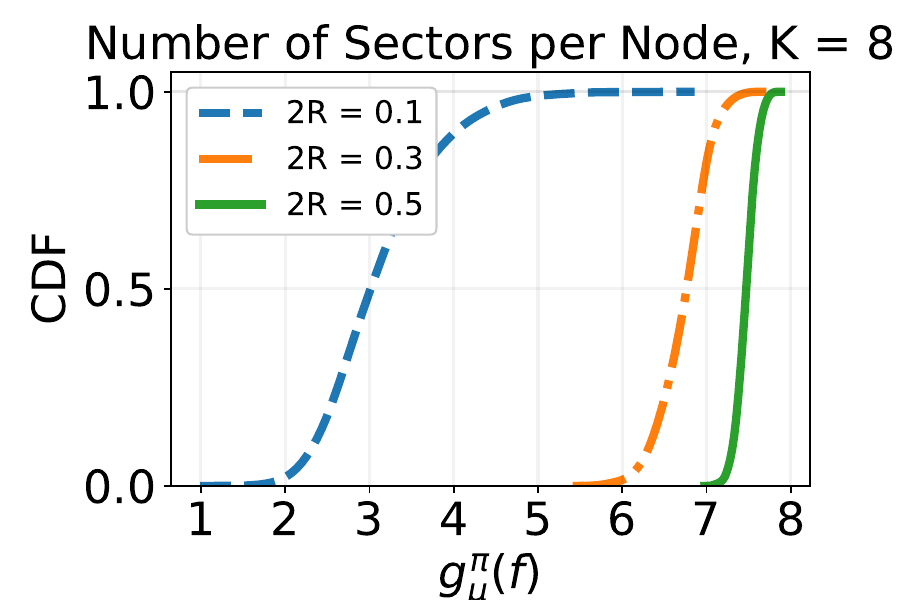}
\label{fig:cdf2}}
\caption{The cumulative distribution function (CDF) of $\sectorizationGainQ^{\sectorApproxAlgo}(\flowVec)$ with $\numNode = 60$ and \emph{Uniform} network flows.}
\label{fig:cdf}
\vspace{-2mm}
\end{figure}

\begin{figure*}[!ht]
\centering
\subfloat[]{
\includegraphics[height=1.2in]{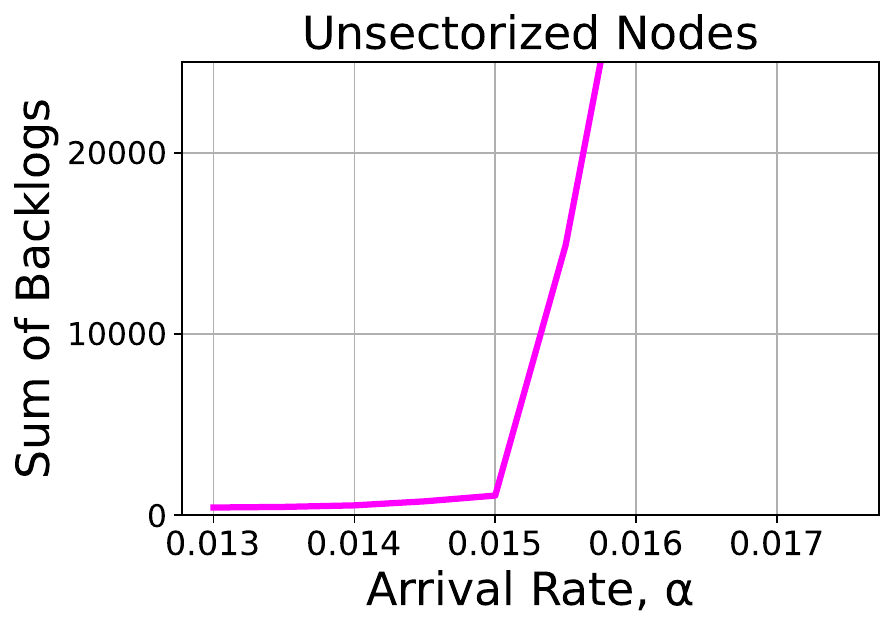}
\label{fig:eval-capacity_unsectorized}}
\hspace*{-8pt}
\subfloat[]{
\includegraphics[height=1.2in]{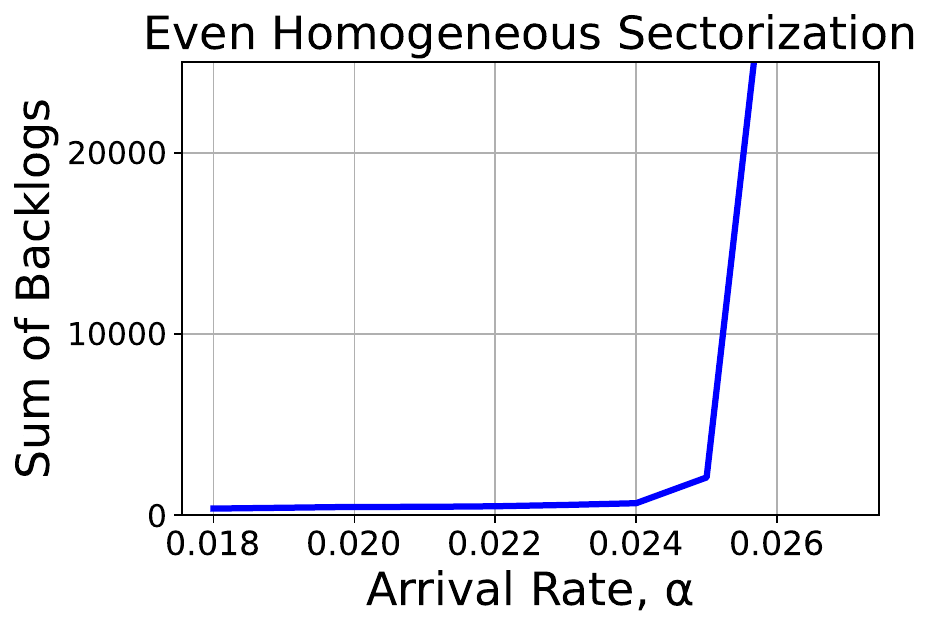}
\label{fig:eval-capacity_bipartite}}
\hspace*{-8pt}
\subfloat[]{
\includegraphics[height=1.2in]{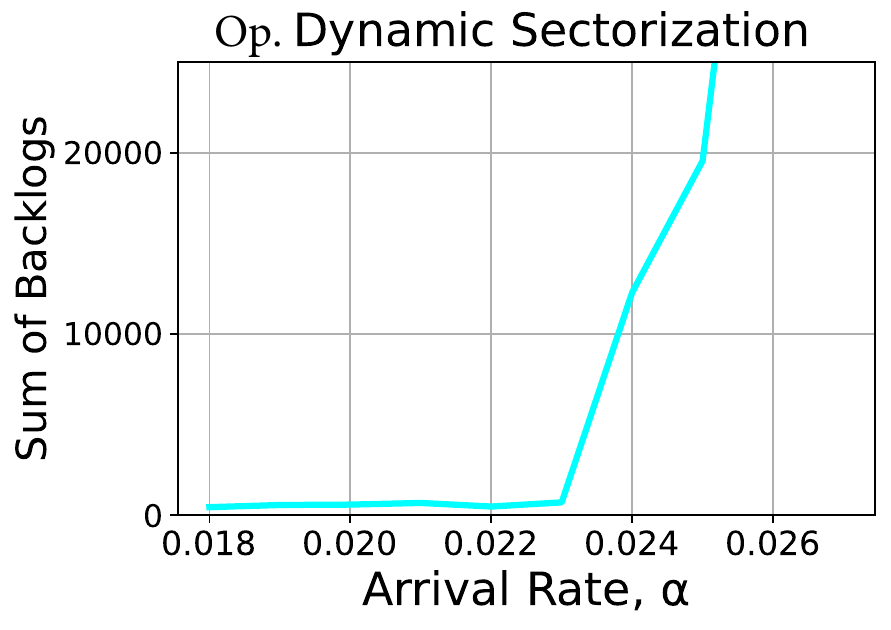}
\label{fig:eval-capacity_dynamic}}
\hspace*{-8pt}
\subfloat[]{
\includegraphics[height=1.2in]{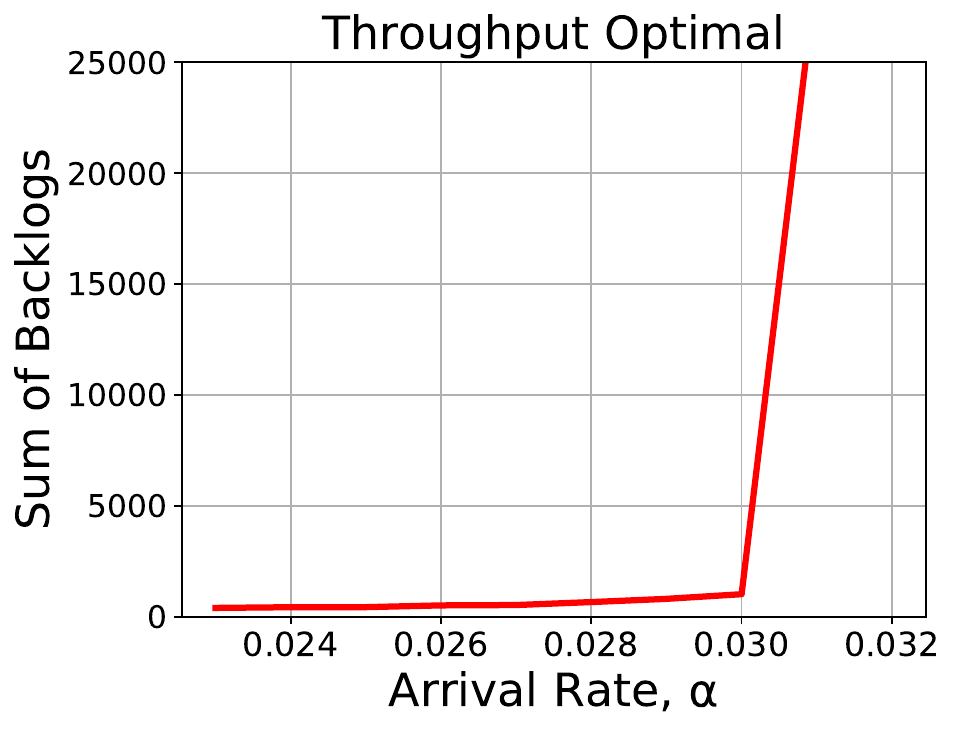}
\label{fig:eval-capacity_bipartite}}

\vspace{-1mm}
\caption{The summation of the backlogs of a 16x16 grid network given a uniform traffic rate that is expressed with the scalar $\alpha$ for (a) unsectorized nodes, (b) Even Homogeneous Sectorization, (c) opportunistic dynamic sectorization, and, (d) throughput optimal routing}.
\label{fig:eval-capacity}
\vspace{-2mm}
\end{figure*}

\vspace{0.5ex}
\noindent\textbf{CDF of the Approximate Sectorization Gain, $\sectorizationGainQ^{\sectorApproxAlgo}(\flowVec)$}.
Finally, we evaluate the relationship between $\sectorizationGainQ^{\sectorApproxAlgo}(\flowVec)$ and the number of sectors, $\numSector$.  
Fig.~\ref{fig:cdf} plots the CDF of $\sectorizationGainQ^{\sectorApproxAlgo}(\flowVec)$ with $\numNode = 60$ and \emph{Uniform} network flows with varying communication ranges. It can be seen that for networks with $2\nodeRange = 0.3$, $\sectorizationGainQ^{\sectorApproxAlgo}(\flowVec)$ has a median value of 3.7/6.5 for $\numSector = 4/8$, respectively. This demonstrates the (sublinear) gain introduced by node sectorization, and this gain approaches the number of sectors per node, $\numSector$, as the underlying is more connected.


\iffullpaper
\subsection{Dynamic Routing: Sectorization and Induced Bipartiteness}

\begin{figure}[!t]
\centering
\includegraphics[width=0.95\columnwidth]{./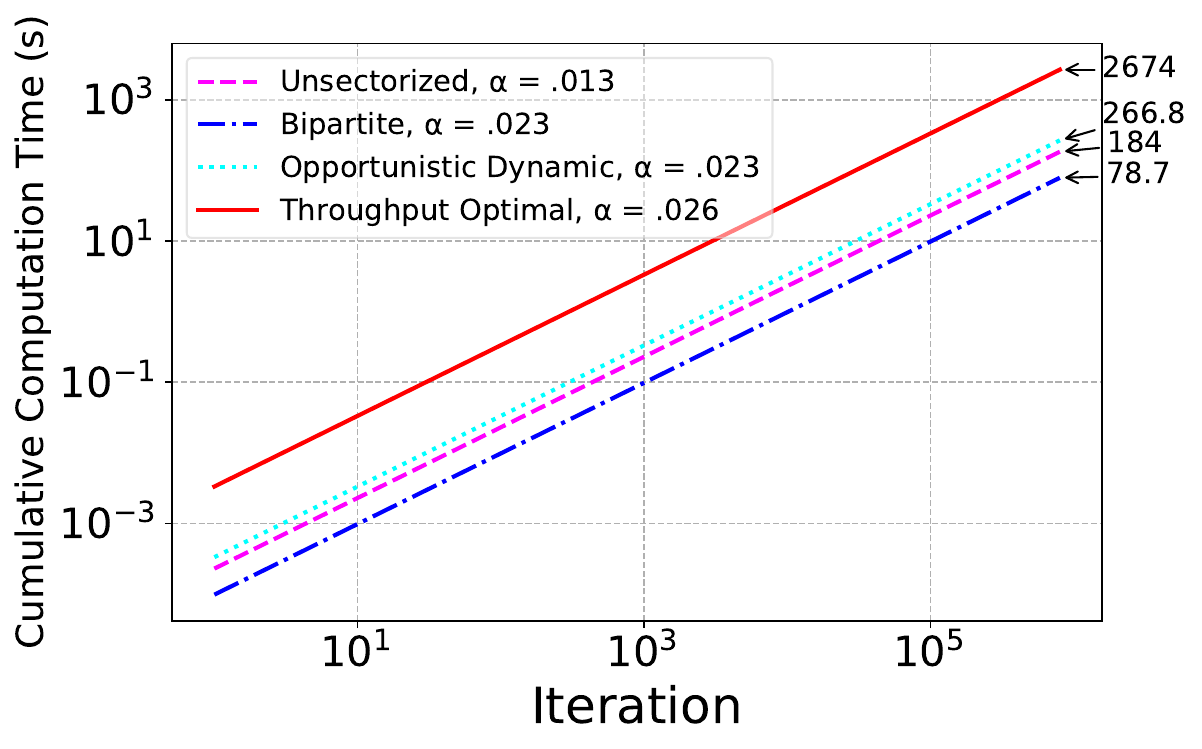}
\vspace{-3mm}
\caption{Cumulative time needed (in seconds) for the calculation of the max-weight in every iteration of the backpressure algorithm.}
\label{fig:eval-bipartite-time}
\vspace{-2mm}
\end{figure}

In this subsection, we study the capacity and evaluate the behavior of backpressure in a regular grid network with $\numNode = 16$ nodes and $\numSector = 2$ for each node. In particular, we assume that the nodes are placed on a square grid and that every node has a range such that it can communicate with its four incident neighbors as well as with the four nodes on its diagonals.


We compare four sectorization techniques for the network: (\emph{i}) unsectorized nodes, (\emph{ii}) Even Homogeneous Sectorization, where each node is divided with a horizontal line creating north and south sectors, (\emph{iii}) opportunistic dynamic sectorization as described in Section~\ref{ssec:opportunistic_dynamic_sectorization_algorithm}, and, (\emph{iv}) the throughput optimal dynamic sectorization described in Section~\ref{ssec:joint_throughput_optimality}. In Even Homogeneous Sectorization, the network is divided using a simple horizontal line for each node, irrespective of the network state, effectively making the auxiliary graph bipartite (see Section ~\ref{sec:bipartite}). Conversely, dynamic sectorizations adapt to current flow conditions by sectorizing each node based on the ongoing network state, as detailed in Section ~\ref{sec:routing}.

For the following numerical examples, we assume that the network receives uniform traffic and thus every pair of nodes has a commodity that has an arrival rate of $\alpha$. To implement that we assume that the packet arrival process $\{A_n^{(c)}(t)\}_t$ for every commodity $c$ is a Bernoulli process with probability of success $\alpha$. Nevertheless, we assume a dynamic setting where the network is not aware of the arrival rate matrix, $\arrivalRateMat$, and thus there is no information about the flow, $\flow$, that can be used when optimizing the node sectorization accordingly.

\vspace{0.5ex}
\noindent\textbf{Capacity Increase}. 
We assess the grid network's capacity employing the four sectorization techniques mentioned earlier, under uniform traffic with an arrival rate of $\alpha$. 
Specifically, we start from low values for the arrival rate $\alpha$ and increase until the blacklogs ``blow up''. This procedure is depicted in Fig.~\ref{fig:eval-capacity}, where we plot the summation of the 16$\times$16 different backlogs on the last iteration of the simulation with respect to the arrival rate. In Figs.~\ref{fig:eval-capacity}(a) (\emph{unsectorized network}), ~\ref{fig:eval-capacity}(b) (\emph{Even Homogeneous Sectorization}), and ~\ref{fig:eval-capacity}(d) (\emph{Throughput Optimal}) it can be observed that as the arrival rate exceeds the boundary of the capacity regions, the sum of the backlogs on the last iteration increases abruptly \cite{tassiulas1990stability}. In Fig.~\ref{fig:eval-capacity}(c) (\emph{Opportunistic Dynamic Sectorization}), note that the capacity boundary does not emerge by a single ``knee'' point as is the case for the other three techniques. In the latter, for $\alpha \in [0.23, 0.25]$, though the network seems stable, requests accumulate in backlogs in a faster rate. The Even Homogeneous Sectorization's performance closely aligns with that of the opportunistic dynamic sectorization, demonstrated by their capacity ratios to the unsectorized case, approximately $0.025/0.015 = 1.67$. The chosen Even Homogeneous Sectorization for this network is notably effective, as horizontal division fairly distributes uniform traffic across both sectors. In the case of the throughput optimal policy, we observe in Fig.~\ref{fig:eval-capacity}(d) that the capacity is \emph{two} times the one of the unsectorized network--which is the maximum we can expect using two sectors.  



\vspace{0.5ex}
\noindent\textbf{Bipartiteness \& Speed-up}. Next, we validate experimentally Theorem~\ref{thm:speedup}. In the previous section, we verified the capacity increase that the Even Homogeneous Sectorization introduce to the network. However, in Section~\ref{sec:bipartite}, we proved that the auxiliary graph becomes bipartite, something that introduces speedup in the dynamic routing schemes. To validate this without using very sophisticated algorithms for the problem of maximum weighted matchings in bipartite graphs (e.g.,~\cite{schrijver2003combinatorial}) we use Python's NetworkX library~\cite{hagberg2008exploring}. This library provides a distinct function for identifying maximum weighted matchings in bipartite graphs, differing from the algorithm for non-bipartite graphs by generating a complete graph and addressing the assignment problem instead. Despite not being as efficient as more recent algorithms, we observe an improvement in the network's dynamic routing performance using this method.

Specifically, in Fig.~\ref{fig:eval-bipartite-time} we plot the cumulative time needed to calculate the max-weight up until a given time slot. That is, in a time slot $t$ we plot the time the max-weight operations took cumulatively for time slots $1,2,\cdots,t$. For fairness, for every sectorization technique we set the network to receive arrival rate close to the boundary of the corresponding capacity region (see previous section). Therefore, the Even Homogeneous Sectorization case will be burdened with arrival rate that is $1.67$ times greater than the unsectorized case. Nevertheless, note from the figure that with the Even Homogeneous Sectorization we decrease the time needed for the computations by a factor of $2.34$ compared to the unsectorized case at time-slot $200,000$. Essentially, in other words, although we increase the capacity region by a factor of approximately $1.67$ and we make the connectivity graph more complex, we can decrease the time needed for the scheduling/routing operation by a factor of $2.34$. The corresponding factors compared to the opportunistic dynamic sectorization is $3.39$, and to the throughput optimal is $34$. These ratios can be further increased with more sectors since, as proved, with sufficient distributed computing power, the more we sectorize the network, the more we decrease the time needed for its stabilization. 

Fig.~\ref{fig:eval-bipartite-time} also shows that the throughput optimal policy described in Section~\ref{ssec:joint_throughput_optimality} needs significantly longer time to find the optimal sectorization and schedule compared to the opportunistic one (x10 times slower). 
\else 

\fi

\section{Future Directions}
\label{sec:future_directions}

\noindent\textbf{Analytical Trade-Offs and Choice of \(\boldsymbol{K}\).}
Our sectorization algorithm provides a provable approximation ratio and bounds but does not yield a closed-form relationship between the flow extension ratio and key parameters such as the number of nodes, network topology, flow properties, or the sector count \(K\). Consequently, we do not offer explicit guidelines on how to select \(K\) in practice, aside from our numerical exploration in Figure~7, which suggests diminishing returns as \(K\) grows. Investigating a more rigorous, closed-form analysis of these parameter trade-offs is a natural avenue for future work and would be highly valuable for network designers.

\vspace{5pt} \noindent\textbf{Even Homogeneous Sectorization Guarantees.}
In Section~\ref{sec:bipartite}, we showed how an Even Homogeneous Sectorization enforces a bipartite auxiliary graph, simplifying scheduling and speeding up backpressure stabilization. However, a more thorough characterization of its performance guarantees—especially when restricted to such sectorizations—would offer deeper insights.

\vspace{5pt} \noindent\textbf{Sectorizations to Alter Effective Topology.}
Finally, while we focus on even homogeneous sectorization in this work, other sectorization designs could be used to \emph{embed} advantageous topological properties that benefit additional algorithmic tasks, such as graph coloring or clustering. By selectively adding (or removing) certain links from sectors, a sectorization can reconfigure the network's ``effective" connectivity, potentially leading to improved performance for functions beyond scheduling and routing. Investigating such systematic approaches—where the sectorization is chosen to induce a desired topology and/or properties—remains an intriguing future direction.



\section{Conclusion}
\label{sec:conclusion}

In this paper, we considered wireless networks employing sectorized infrastructure nodes that form a multi-hop mesh network for data forwarding and routing. We presented a general sectorized node model and characterized the capacity region of these sectorized networks. We defined the flow extension ratio and sectorization gain of these networks, which quantitatively measure the performance gain introduced by node sectorization as a function of the network flow. We developed an efficient distributed algorithm that obtains the node sectorization with an approximation ratio of 2/3. 
\iffullpaper
Further, we introduced a special class of sectorizations that boost the performance of dynamic routing schemes while increasing the capacity region.
\else 
\fi
We also introduced a throughput-optimal scheme that jointly optimizes routing, scheduling, and sectorization for networks capable of adapting in real time. 
We evaluated the proposed algorithm and achieved sectorization gain in various network scenarios via extensive simulations.


\bibliographystyle{ieeetr}
\bibliography{mmwave_sectorization}

\begin{IEEEbiography}
[{\includegraphics[width=1in,height=1.25in,clip,keepaspectratio]{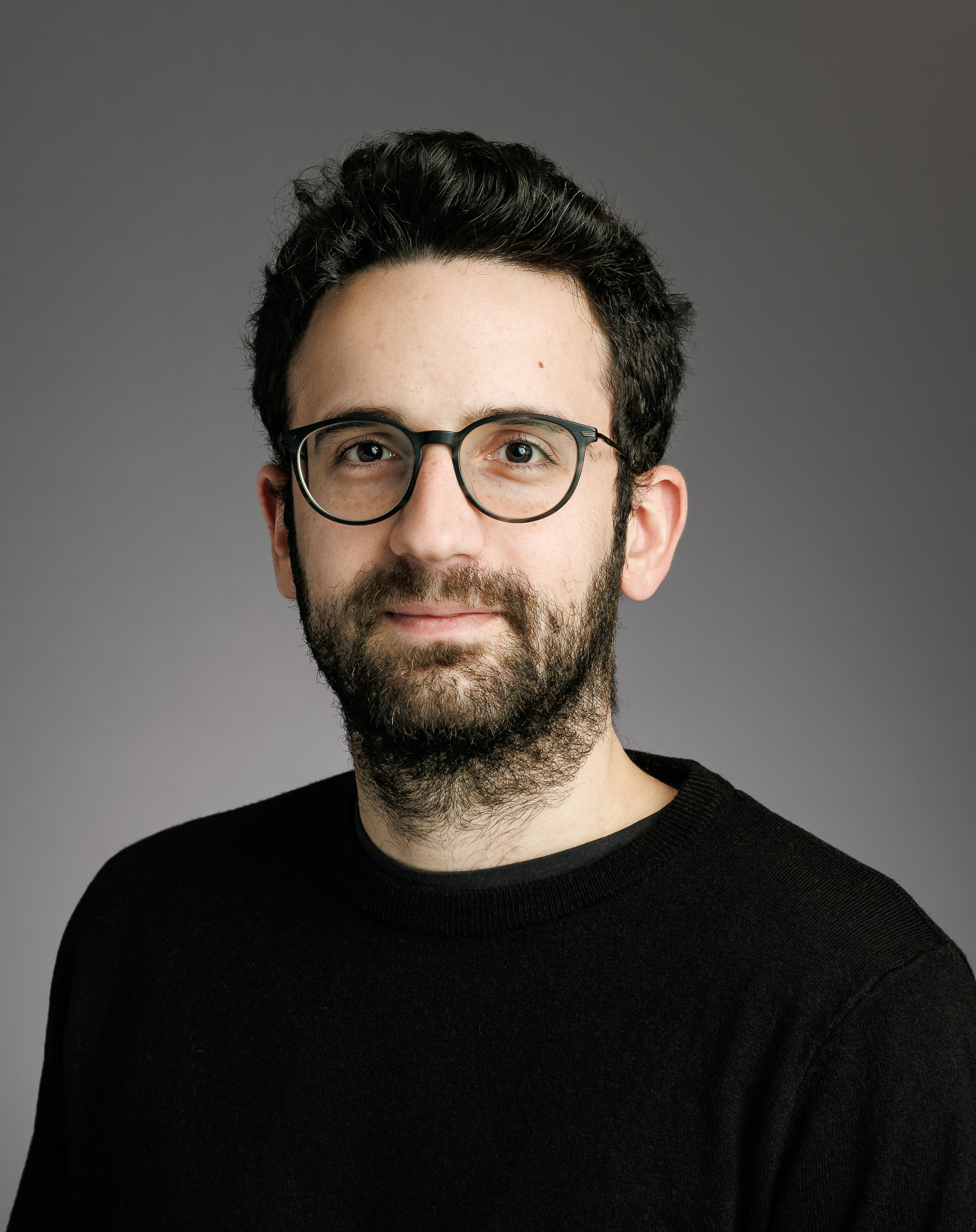}}]
{Panagiotis Promponas (Graduate Student Member, IEEE)} received his Diploma in Electrical and Computer Engineering (ECE) from the National Technical University of Athens (NTUA), Greece, in 2019. He is currently a Ph.D. student in the Department of Electrical and Computer Engineering at Yale University. His primary scientific interests include resource allocation in constrained interdependent systems and the optimization of algorithms. Specifically, he focuses on the optimization and modeling of  quantum networks and wireless networking systems. He is also the recipient (co-author) of the Best Paper Award at the 12th IFIP WMNC 2019.
\end{IEEEbiography}

\begin{IEEEbiography}
[{\includegraphics[width=1in,height=1.25in,clip]{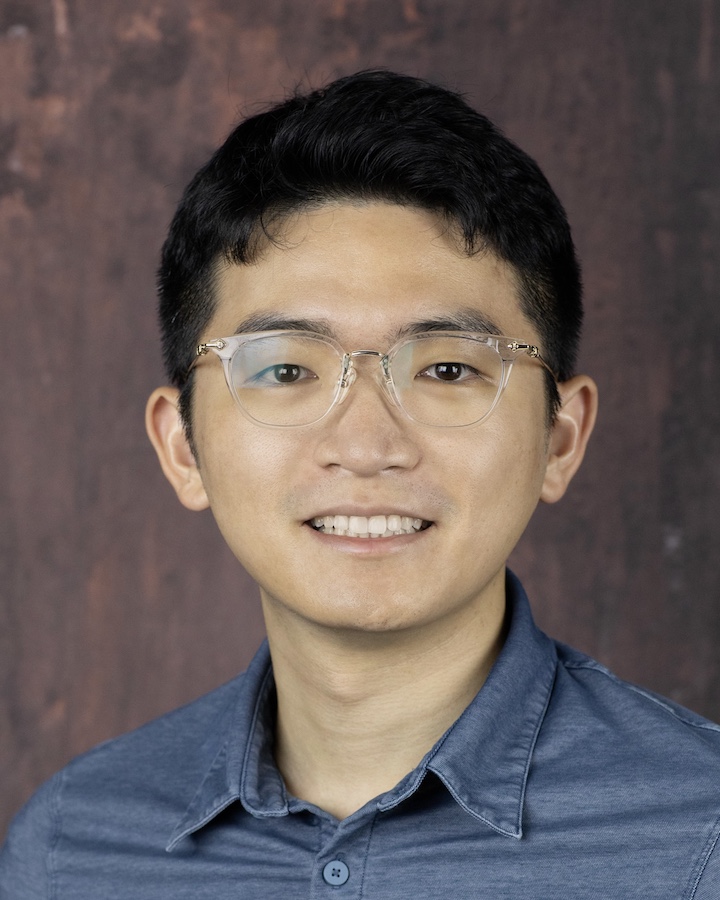}}]
{Tingjun Chen (Member, IEEE)}
is an Assistant Professor of Electrical \& Computer Engineering and Computer Science at Duke University. He received the B.Eng. degree in electronic engineering from Tsinghua University in 2014, the Ph.D. degree in electrical engineering from Columbia University in 2020, and was a Postdoctoral Associate with Yale University from 2020 to 2021. His research interests are in the area of networking and communications with a specific focus on next-generation wireless, mobile, and optical networks, as well as Internet-of-Things systems. He received the NSF CAREER Award, Google Research Scholars Award, IBM Academic Award, Facebook Fellowship, Wei Family Private Foundation Fellowship, Columbia Engineering Morton B. Friedman Memorial Prize for Excellence, Columbia University Eli Jury Award and Armstrong Memorial Award, the ACM SIGMOBILE Doctoral Dissertation Award Runner-Up, and the ACM CoNEXT’16 Best Paper Award.
\end{IEEEbiography}

\begin{IEEEbiography}
[{\includegraphics[width=1in,height=1.25in,clip,keepaspectratio]{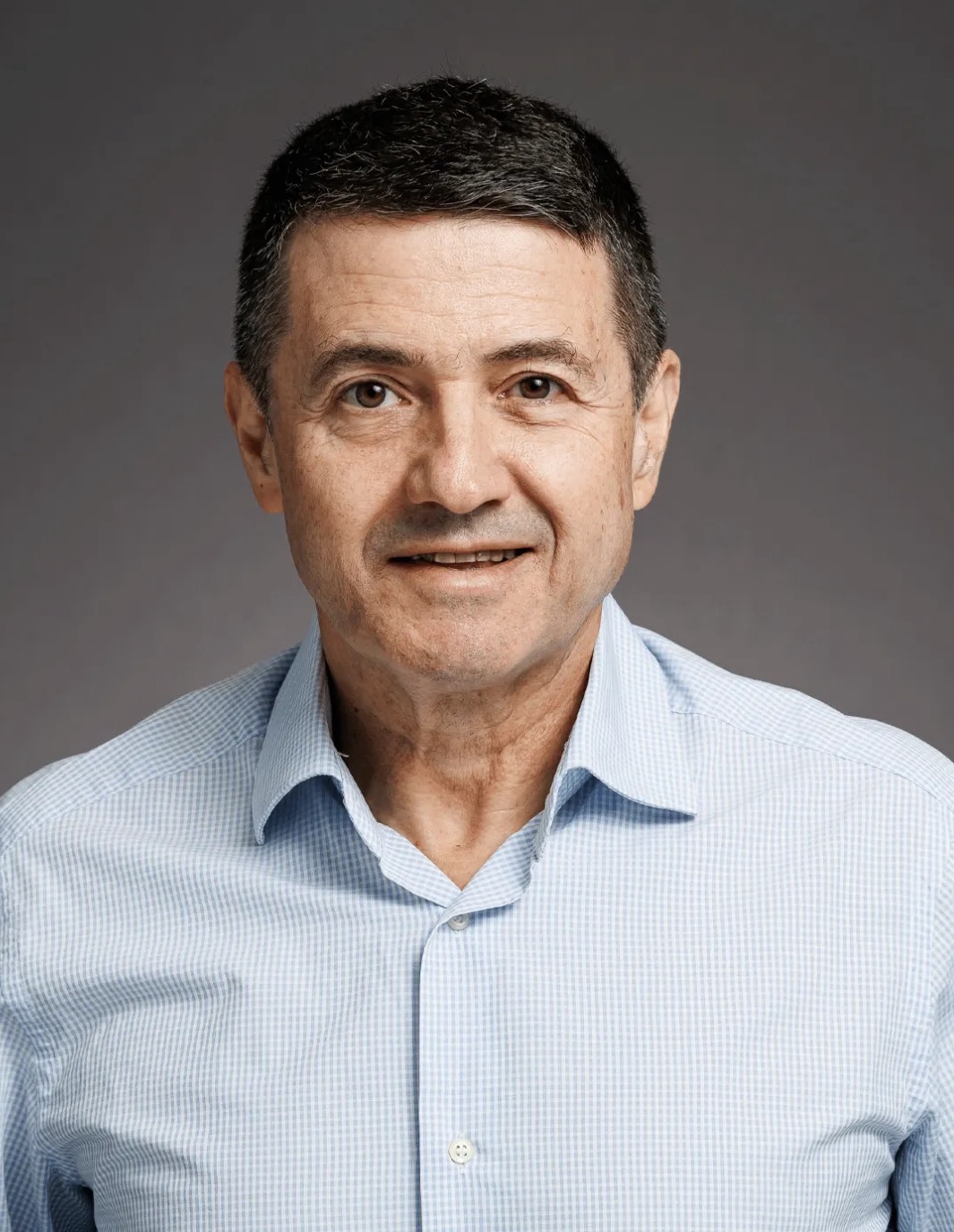}}]
{Leandros Tassiulas (Fellow, IEEE)} is the John C. Malone Professor of Electrical Engineering at Yale University, where he served as the department head 2016-2022.  His current research is on intelligent services and architectures at the edge of next generation networks including Internet of Things, sensing \& actuation in terrestrial and non-terrestrial environments and quantum networks. He worked in the field of computer and communication networks with emphasis on fundamental mathematical models and algorithms of complex networks, wireless systems and sensor networks. His most notable contributions include the max-weight scheduling algorithm and the back-pressure network control policy, opportunistic scheduling in wireless, the maximum lifetime approach for wireless network energy management, and the consideration of joint access control and antenna transmission management in multiple antenna wireless systems. Dr. Tassiulas is a Fellow of IEEE (2007) and of ACM (2020) as well as a member of Academia Europaea (2023). His research has been recognized by several awards including the IEEE Koji Kobayashi computer and communications award (2016), the ACM SIGMETRICS achievement award 2020, the inaugural INFOCOM 2007 Achievement Award ``for fundamental contributions to resource allocation in communication networks", several best paper awards including the INFOCOM 1994, 2017 and Mobihoc 2016, a National Science Foundation (NSF) Research Initiation Award (1992), an NSF CAREER Award (1995), an Office of Naval Research  Young Investigator Award (1997) and a Bodossaki Foundation award (1999). He holds a Ph.D. in Electrical Engineering from the University of Maryland, College Park (1991) and a Diploma of Electrical Engineering from Aristotle University of Thessaloniki, Greece. He has held faculty positions at Polytechnic University, New York, University of Maryland, College Park and University of Thessaly, Greece.
\end{IEEEbiography}

\end{document}
